\renewcommand{\thefootnote}{}
\newtheorem{lemma}{Lemma}[section]
\newtheorem{propo}{Proposition}[section]
\newtheorem{prop}{RHP}
\def\be{\begin{equation}}
\def\ee{\end{equation}}
\def\bee{\begin{eqnarray}}
\def\ene{\end{eqnarray}}
\def\bes{\begin{subequations}}
\def\ees{\end{subequations}}
\def\det{{\rm det}}
\def\d{\displaystyle}
\def\v{\vspace{0.05in}}
\def\no{{\nonumber}}
\begin{document}

\baselineskip=13pt
\renewcommand {\thefootnote}{\dag}
\renewcommand {\thefootnote}{\ddag}
\renewcommand {\thefootnote}{ }

\pagestyle{plain}

\begin{center}
\baselineskip=16pt \leftline{} \vspace{-.3in} {\Large \bf Large-space and long-time asymptotic behaviors of $N_{\infty}$-soliton solutions for the focusing Hirota equation} \\[0.2in]
\end{center}

\begin{center}
{\bf Weifang Weng$^{a}$,\,\, Zhenya Yan}$^{b,c,*}$\footnote{$^{*}${\it Email address}: zyyan@mmrc.iss.ac.cn (Corresponding author)}  \\[0.08in]
{\small \it$^a$Department of Mathematical Sciences, Tsinghua University, Beijing 100084, China} \\
\small \it $^b$KLMM, Academy of Mathematics and Systems Science, Chinese Academy of Sciences, Beijing 100190, China\\
\small\it $^c$School of Mathematical Sciences, University of Chinese Academy of Sciences, Beijing 100049, China \\[0.18in]
\end{center}

\v\v
\noindent {\bf Abstract}\, {\small The Hirota equation is one of the higher-order extensions of the nonlinear Schr\"odinger equation, and can describe  the ultra-short optical pulse propagation in the form $
iq_t+\alpha(q_{xx}+ 2|q|^2q)+i\beta (q_{xxx}+ 6|q|^2q_x)=0,\, (x,t)\in\mathbb{R}^2\, (\alpha,\,\beta\in\mathbb{R})$. In this paper, we analytically explore the large-space and long-time asymptotic behaviors of a soliton gas (the limit $N\to \infty$ of $N$-soliton solutions) for the Hirota equation including the complex modified KdV equation by using the Riemann-Hilbert problems with discrete spectra restricted in  intervals $(ia, ib)\cup (-ib, -ia)\, (0<a<b)$. We find that this soliton gas tends slowly to the Jaocbian elliptic wave solution with an error $\mathcal{O}(|x|^{-1})$ (zero exponentially quickly ) as
$x\to -\infty$ ($x\to +\infty$).
  We also present the long-time asymptotics of the soliton gas under the different velocity conditions: $x/t>4\beta b^2,\, \xi_c<x/t<4\beta b^2,\, x/t<\xi_c$. Moreover, we analyze the property of the soliton gas of the Hirota equation for the case of the discrete spectra filling uniformly a quadrature domain.}

\vspace{0.15in} \noindent {\bf Mathematics Subject Classification:} Primary 35Q51 $\cdot$ 35Q15 $\cdot$ 37K15;  Secondary 35C20




\section{Introduction}

Since Zabusky and Kruskal~\cite{soliton} first coined the concept of ``soliton" for the study of nonlinear wave propagations of the Korteweg-de Veris (KdV) equation with the periodic initial data in 1965, and Gardner-Greene-Kruskal-Miura~\cite{GGKM} first presented the inverse scattering transform (IST) method (alias nonlinear Fourier transform~\cite{AKNS}) to analytically solve the $N$-soliton solution of the KdV equation starting from its linear spectral problem (alias the Lax pair~\cite{Lax}) in 1967, more and more one-soliton and multi-soliton wave phenomena have been found to appear in a variety of physical models including the integrable and nearly-integrable nonlinear wave equations arising from many fields~\cite{AC91,ACN,FT,ACS,Nov84,exp,book1,book2,book3,nls05}. The $N$-soliton solutions solving integrable nonlinear wave equations via the IST provide us with the very good frameworks of nonlinear superpositions
to further analyze their elastic interactions and other related properties. For example, in 1971, Zahkarov~\cite{zak71} first intrdcued the concept of a {\it soliton gas}, which was defined as the limit of $N\to \infty$ of the $N$-soliton solution of the KdV equation. After that, the soliton gas was extended to study the soliton gas, breather gas, dense soliton gas, and hydrodynamic of other nonlinear wave equations~\cite{E1,E2,E3,E4,E5,E6,E7}. More recently, Grava {\it et al}~\cite{Girotti-1,Girotti-2,Grava-3} presented a new idea to study the soliton gas, dense soliton gas, and soliton shielding of the KdV, modified KdV, and NLS equations, respectively, starting from the $N$-soliton solution characterized by a Riemann-Hilbert problem.

Moreover, the larger-order limits of multipole solitons and rogue waves were studied~\cite{miller,bil1,bil2,bil3} by using the robust inverse scattering transform~\cite{RIST} and Riemann-Hilbert problems~\cite{RH2,RH}. The Deift-Zhou's nonlinear steepest descent method~\cite{RH} and $\bar\partial$-steepest descent method~\cite{dbar,dbar2} with Riemann-Hilbert problems~\cite{RHP1,RHP2,RHP3} were used to study the long-time asymptotics of some integrable nonlinear wave systems, for example, the nonlinear Schr\"odinger (NLS), KdV, Camassa-Holm (CH), Degasperis-Procesi, Fokas-Lenells, Sasa-Satsuma, modified CH, derivative NLS, and short-pulse equations (see ~\cite{n1a,n1,n2,n3,n4,n5,n6,n7,d1,d2,d3,d4,d5,d6,d7} and references therein).

To study more complicated nonlinear wave phenomena (e.g., the phenomena of oceanic  and optical waves), the extensions of the NLS equation were needed. For example, when the optical pulses become shorter (e.g., 100 fs~\cite{exp,book2}), higher-order dispersive and nonlinear effects such as third-order dispersion, self-frequency shift, and self-steepening arising from the stimulated Raman scattering are significant in the study of ultra-short optical pulse propagation to derive the multi-parameter higher-order NLS equation~\cite{hnls,hnls2,yan13,SS}
\bee\label{hirota-0}
iq_t+\alpha_1q_{xx}+ \alpha_2|q|^2q+i\epsilon(\beta_1 q_{xxx}+ \beta_2|q|^2q_x +\beta_3 q(|q|^2)_x)=0, \quad (x,t)\in\mathbb{R}^2,
\ene
where $q=q(x,t)$ is complex envelop field, $\alpha_{1,2}, \epsilon, \beta_{1,2,3}$ are real parameters. Generally speaking, Eq.~(\ref{hirota-0}) may not be completely integrable. However, some special cases of Eq.~(\ref{hirota-0}) can be solved via the IST if these parameters satisfy some constraint conditions. For example, when $\beta_1:\beta_2:\beta_3=0:0:0$, Eq.~(\ref{hirota-0}) reduces to the integrable focusing ($\alpha_1\alpha_2>0$) or defocusing ($\alpha_1\alpha_2<0$) NLS equation~\cite{ZS}
 \bee\label{hirota-01}
iq_t+\alpha_1q_{xx}+ \alpha_2|q|^2q=0.
\ene
 When $\beta_1:\beta_2:\beta_3=0:1:1$, Eq.~(\ref{hirota-0}) reduce to the the modified derivative NLS equation~\cite{wadati79}
  \bee\label{hirota-02}
iq_t+\alpha_1q_{xx}+ \alpha_2|q|^2q+i\epsilon\beta_2(|q|^2q)_x=0,
\ene
which, using a gauge transform, can be changed into the Kaup-Newell derivative NLS equation~\cite{KN}
 \bee\label{hirota-03}
iq_t+\alpha_1q_{xx}+i\epsilon(|q|^2q)_x=0.
\ene
 When $\beta_1:\beta_2:\beta_3=0:1:0$, Eq.~(\ref{hirota-0}) reduce to the the modified Chen-Lee-Liu derivative NLS equation~\cite{CLL,Kundu}
   \bee\label{hirota-04}
iq_t+\alpha_1q_{xx}+ \alpha_2|q|^2q+i\epsilon\beta_2|q|^2q_x=0,
\ene
When $\alpha_1:\alpha_2=1:2,\,\beta_1:\beta_2:\beta_3=1:6:3$, Eq.~(\ref{hirota-0}) reduce to the integrable Sasa-Satsuma equation~\cite{SS}
\bee\label{hirota-05}
iq_t+\alpha_1(q_{xx}+ 2|q|^2q)+i\epsilon \beta_1 (q_{xxx}+ 6|q|^2q_x +3 q(|q|^2)_x)=0,
\ene
When $\alpha_1:\alpha_2=1:2,\, \beta_1:\beta_2:\beta_3=1:6:0$, Eq.~(\ref{hirota-0}) reduce to the Hirota equation~\cite{hirota}
\bee\label{hirota}
iq_t+\alpha(q_{xx}+ 2|q|^2q)+i\beta (q_{xxx}+ 6|q|^2q_x)=0,
\ene
where $q=q(x,t)$ denotes the complex envelope field, the real-valued parameters $\alpha,\, \beta$ are used to modulate the second- and third-order dispersion coefficients, respectively. As $\alpha\not=0,\, \beta=0$, Eq.~(\ref{hirota}) reduces to the NLS equation, and at $\alpha=0,\, \beta\not=0$, Eq.~(\ref{hirota}) reduces to the complex modified KdV (cmKdV) equation
\bee\label{cmkdv0}
q_t+\beta (q_{xxx}+ 6|q|^2q_x)=0.
\ene
Notice that one can take $\beta=1$ by using the scaling transform $t=\tau/\beta$.

The Hirota equation (\ref{hirota}) and cmKdV equation (\ref{cmkdv0}) are the important ones of the above-mentioned higher-order NLS equations, and completely integrable, which can be derived from the AKNS hierarchy~\cite{AKNS}. In fact, they can also be found from the reduction of the Maxwell equation (see \cite{hnls2})
 \bee
\frac{\epsilon}{c^2}\frac{\partial^2 {\bf E}}{\partial t^2}+\nabla\times \nabla\times {\bf E}=0,
  \ene
where ${\bf E}$ denotes the electric field intensity, $c$ the speed of light, and $\epsilon$ a real-valued parameter.

In this paper, based on the idea of Refs.~\cite{Girotti-1,Girotti-2,Grava-3}, we would like to study the large-space and long-time asymptotic behaviors of a soliton gas of the integrable Hirota equation (\ref{hirota}) via the Riemann-Hilbert problems (see the following Riemann-Hilbert problem 2 in Sec. 2). The one-soliton solution of the Hirota equation (\ref{hirota}) is
\bee\label{ss}
 q(x,t)=k\, {\rm sech}[k(x-vt-x_0)]e^{i(\mu x+\omega t+p_0)}
\ene
with
\bee\no
 v=\beta(k^2-3\mu^2)+\frac{8}{3}\alpha\mu+\frac{2\alpha^2 \mu}{9\beta\mu-3\alpha},\quad \omega=k^2(\alpha-3\beta \mu)+\mu^2(\beta\mu-\alpha),
 \ene
where $x_0$ is the initial peak position of the soliton, $p_0$ the initial phase,  $v$ the soliton velocity,  $-\omega/\mu$ phase velocity, and $3k^2\beta-3\beta\mu^2+2\alpha\mu$ group velocity,
and the Jacobian elliptic double-periodic wave solution   of the Hirota equation (\ref{hirota}) is
\bee\label{ps}
 q(x,t)=k\, {\rm dn}[k(x-v_mt-x_0); m]e^{i(\mu x+\omega_m t+p_0)},\quad m\in (0,1)
\ene
with
\bee\no
 v_m=\beta[k^2(2-m^2)-3\mu^2]+\frac{8}{3}\alpha\mu+\frac{2\alpha^2 \mu}{9\beta\mu-3\alpha},\quad \omega_m=k^2(2-m^2)(\alpha-3\beta \mu)+\mu^2(\beta\mu-\alpha).
 \ene
where the elliptic wave velocity is $v_m$,  phase velocity $-\omega_m/\mu$, and group velocity $3(2-m^2)k^2\beta-3\beta\mu^2+2\alpha\mu$.
When $m\to 1$, the periodic solution (\ref{ps}) reduces to the one-soliton solution (\ref{ss}).

The Hirota equation and its higher-order extensions have been showed to possess the multi-soliton solutions, breather solutions, and rogue waves~\cite{hirota,h1,h2,h3,h4,h5,zhang20}. Moreover, the long-time asymptotic behaviors of the focusing and defocusing Hirota equations have been studied~\cite{hs1,hs2,hs3}. Recently, Guo {\it et al}~\cite{guo23} studied nonlinear stability of multi-solitons for the Hirota equation.

It is well known that one can solve the $N$-soliton solution of the Hirota equation (\ref{hirota}) by using the inverse scattering transform to study the associated Lax pair~\cite{AKNS}
\begin{align}\label{lax}
\begin{aligned}
&\frac{\partial \Psi}{\partial x}\!=\!\!\left(\!\!\!\begin{array}{cc} -iz & q(x,t)  \vspace{0.05in}\\
-q^*(x,t) & iz\end{array}\!\!\!\right)\!\Psi,\v\\
&\frac{\partial\Psi}{\partial t}\!=\!\!\left(\!\!\!\begin{array}{cc} -i(4\beta z^3\!+\!2\alpha z^2)\!+\!i(2\beta z\!+\!\alpha)|q|^2\!+\!2i\beta{\rm Im}(qq_x^*) & 4\beta z^2 q+2z(\alpha q-i\beta q_x)+T_{01}  \vspace{0.05in}\\
-4\beta z^2 q^*-2z(\alpha q^*-i\beta q_x^*)+T_{02} & i(4\beta z^3\!+\!2\alpha z^2)\!-\!i(2\beta z\!+\!\alpha)|q|^2\!-\!2i\beta{\rm Im}(qq_x^*)\end{array}\!\!\!\right)\!\Psi,
\end{aligned}
\end{align}
where $\Psi=\Psi(x,t;z)$ is a $2\times 2$ matrix eigenfunction, $z\in\mathbb{C}$ a spectral parameter, the star denotes the complex conjugate, and
\bee \no
T_{01}=-i\alpha q-\beta (2|q|^2q+q_{xx}),\quad T_{02}=i\alpha q^*+\beta (2|q|^2q^*+q_{xx}^*).
\ene

Let the new matrix function $\Phi(x,t;z)$ be
\bee
\Phi(x,t;z)=\Psi(x,t;z)e^{i[zx+(4\beta z^3+2\alpha z^2)t]\sigma_3}.
\ene
Then based on the boundary-value condition of solitons $\lim_{|x|\to\infty}q(x,0)=0$, it follows from the Lax pair (\ref{lax}) that one know that~\cite{AKNS}
 \begin{equation}
     \Psi(x,t;z)\sim e^{-i[zx+(4\beta z^3+2\alpha z^2)t]\sigma_3}, \quad
     \Phi(x,t;z)\sim \mathbb{I}, \quad |x|\rightarrow \infty.
 \end{equation}
and the Lax pair (\ref{lax}) can be rewritten as
\begin{equation}\label{lax2}
 \begin{array}{l}
 \dfrac{\partial\Phi}{\partial x} + i k [\sigma_3, \Phi] = \!\left(\!\!\!\begin{array}{cc} 0 & q(x,t)  \vspace{0.05in}\\
-q^*(x,t) & 0\end{array}\!\!\!\right)\Phi,  \v\\
\dfrac{\partial \Phi}{\partial t}  + i(4\beta z^3+2\alpha z^2)[\sigma_3, \Phi]=
\left(\!\!\!\begin{array}{cc} i(2\beta z\!+\!\alpha)|q|^2\!+\!2i\beta{\rm Im}(qq_x^*) & 4\beta z^2 q+2z(\alpha q-i\beta q_x)+T_{01}  \vspace{0.05in}\\
-4\beta z^2 q^*-2z(\alpha q^*-i\beta q_x^*)+T_{02} & -i(2\beta z\!+\!\alpha)|q|^2\!-\!2i\beta{\rm Im}(qq_x^*)\end{array}\!\!\!\right) \Phi,
\end{array}
\end{equation}
which can be written as fully differential form such that the Jost solutions $\Phi_{+}(x,t;z)$ and $\Phi_{-}(x,t;z)$ can be written as follows:
\bee \label{jost}
\begin{array}{l}
       \Phi_{\pm}(x,t;z)  =\mathbb{I} -  \d\int_{x}^{\pm\infty} e^{iz(y-x)\sigma_3} \left(\!\!\!\begin{array}{cc} 0 & q(y,t)  \vspace{0.05in}\\
-q^*(y,t) & 0\end{array}\!\!\!\right) \Phi_{\pm}(y,t;z)e^{-iz(y-x)\sigma_3}dy,
\end{array}
\ene
with $\Phi_{\pm}(x,t;z)  =\mathbb{I}+\mathcal{O} \left(1/z\right),\, z\to \infty$, where $\sigma_3$ is the third Pauli matrix.

By using Abel lemma and the zero traces of coefficient matrices of the Lax pair (\ref{lax}), one knows that $\det \Psi_{\pm} (k,x,t)$ are independent of variable $x$ and $\det \Psi_{\pm}=1$.  Furthermore, $\Psi_{\pm}=\Phi_{\pm} e^{-i[zx+(4\beta z^3+2\alpha z^2)t]\sigma_3}$ are linearly dependent to lead to
\begin{equation}\label{scatter1}
     \Psi_{-}(x,t;z) = \Psi_{+}(x,t;z)S(k), \quad \det(S) =1,
\end{equation}
or
\begin{equation}\label{scatter2}
     \Phi_{-}(x,t;z) = \Phi_{+}(x,t;z) e^{-i[zx+(4\beta z^3+2\alpha z^2)t]\widehat{\sigma_3}}S(k), \quad \det(S) =1,
\end{equation}
where $S(z)=(s_{ij}(z))_{2\times 2}$ is a $2\times 2$ scattering matrix.
 \begin{align}\label{S-lie}
\begin{aligned}
s_{jj}(z)=(-1)^{j+1}\left|\Psi_{-j}(x, t; z), \Psi_{+(3-j)}(x, t; z)\right|
=(-1)^{j+1}\left|\Phi_{-j}(x, t; z), \Phi_{+(3-j)}(x, t; z)\right|,\quad j=1,2,\v\\
s_{(3-j)j}(z)=(-1)^{j}|\Psi_{-j}(x, t; z), \Psi_{+j}(x, t; z)|=(-1)^{j}|\Phi_{-j}(x, t; z), \Phi_{+j}(x, t; z)|,
\quad j=1,2,
\end{aligned}
\end{align}
where $\Psi_{\pm}=\left(\Psi_{\pm 1}(x,t;z), \Psi_{\pm 2}(x,t;z) \right)$, $\Phi_{\pm}=\left(\Phi_{\pm 1}(x,t;z), \Phi_{\pm 2}(x,t;z) \right)$, where  $\Psi_{\pm1}\, (\Phi_{\pm 1})$ and $\Psi_{\pm 2}\,(\Phi_{\pm2})$ represent two columns of $\Psi_{\pm}\,(\Phi_{\pm})$, respectively. There are the symmetry: $\Psi_{\pm}(z)=\sigma_2\Psi_{\pm}^*(z^*)\sigma_2$ and
 $S(z)=\sigma_2S^*(z^*)\sigma_2$, where $\sigma_2$ is the second Pauli matrix. It follows from Eq.~(\ref{jost}) that these functions $\Psi_{-1}, \Psi_{+2}, \Phi_{-1}, \Phi_{+2}, s_{11}(z)$ ($\Psi_{+1}, \Psi_{-2}, \Phi_{+1}, \Phi_{-2}, s_{22}(z)$ ) can be analytically extended to $\mathbb{C}_{+}$ ($\mathbb{C}_{-}$) and continuous to continuously to $\mathbb{C}_{+}\cup \mathbb{R}$ ($\mathbb{C}_{-}\cup \mathbb{R}$). However another two scattering coefficients $s_{12}(z),\, s_{21}(z)$ can not be analytically continued away from $\mathbb{R}$. Thus the reflection coefficient is defined as $\rho(z)=s_{21}(z)/s_{11}(z)$.

The discrete spectra of the Hirota equation are the set of all values $z\in\mathbb{C}\backslash\mathbb{R}$ such that they possess eigenfunctions in $L^2(\mathbb{R})$, and satisfy $s_{11}(z_n)=0$ for $Z_+=\{z_n|z_n\in\mathbb{C}_+\}$ and $s_{22}(z_n^*)=0$ for $Z_-=\{z_n^*|z_n^*\in\mathbb{C}_-\}$.
Let the sectionally meromorphic matrix $M^{(0)}(z)=M^{(0)}(x,t;z)$ be
\bee\label{M-0}
M^{(0)}(z)=\begin{cases}
\begin{array}{ll}\left(\!\!\begin{array}{cc}
\dfrac{\Phi_{-1}(x,t;z)}{s_{11}(z)}, & \Phi_{+2}(x,t;z)
\end{array}\!\!\right)
=\left(\!\!\begin{array}{cc}
\dfrac{\Psi_{-1}(x,t;z)}{s_{11}(z)}, & \Psi_{+2}(x,t;z)
\end{array}\!\!\right)e^{i\theta(x,t;z)\sigma_3}, & z\in \mathbb{C}_+,\vspace{0.05in}\\
\left(\!\!\begin{array}{cc}
\Phi_{+1}(x,t;z), & \dfrac{\Phi_{-2}(x,t;z)}{s_{22}(z)}
\end{array}\!\!\right)
=\left(\!\!\begin{array}{cc}
\Psi_{+1}(x,t;z), & \dfrac{\Psi_{-2}(x,t;z)}{s_{22}(z)}
\end{array}\!\!\right)e^{i\theta(x,t;z)\sigma_3}, & z\in \mathbb{C}_-.
\end{array}
\end{cases}
\ene
where the phase function is
\bee
\theta(x,t;z)=zx+(4\beta z^3+2\alpha z^2)t.
\ene
Then based on the scattering relation given by Eq.~(\ref{scatter1}) or (\ref{scatter2}), one can introduce the following Riemann-Hibert problem (RHP)~\cite{AKNS,d2,zhang20}:

\begin{prop}\label{RH-0} Find a $2\times 2$ matrix function $M^{(0)}(x,t;z)$ that satisfies:

\begin{itemize}

 \item {} Analyticity: $M^{(0)}(x,t;z)$ is meromorphic in $\{z|z\in\mathbb{C}\setminus\mathbb{R}\}$ and takes
continuous boundary values on $\mathbb{R}$;

 \item {} Jump condition: The boundary values on the jump contour $\mathbb{R}$ are defined as
 \bee
M^{(0)}_+(x,t;z)=M^{(0)}_-(x,t;z)V_0(x,t;z),\quad z\in\mathbb{R},
\ene
where the jump matrix is defined as
\bee \no
V_0(x,t;z)=\left[\!\!\begin{array}{cc}
1+|\rho(z)|^2& \rho^*(z)e^{-2i\theta(x,t;z)}  \vspace{0.05in}\\
\rho(z)e^{2i\theta(x,t;z)} & 1
\end{array}\!\!\right],
\ene
 \item {} Normalization: $M^{(0)}(x,t;z)\rightarrow\mathbb{I},\quad z\rightarrow\infty$.

\item {} Residue conditions:  $M^{(0)}(x,t;z)$ has simple poles at each discrete spectrum in $Z_+=\{z_n|z_n\in\mathbb{C}_+\}\cup Z_-=\{z_n^*|z_n^*\in\mathbb{C}_-\}$ with:
\begin{align}
\begin{aligned}
&\mathop\mathrm{Res}\limits_{z=z_n}M^{(0)}(x,t;z)=\lim\limits_{z\to z_n}M^{(0)}(x,t;z)\left[\!\!\begin{array}{cc}
0& 0  \vspace{0.05in}\\
c_ne^{2iz_nx+4i(\alpha z_n^2+2\beta z_n^3)t}& 0
\end{array}\!\!\right],\vspace{0.05in}\\
&\mathop\mathrm{Res}\limits_{z=z_n^*}M^{(0)}(x,t;z)=\lim\limits_{z\to z_n^*}M^{(0)}(x,t;z)\left[\!\!\begin{array}{cc}
0& -c_n^*e^{-2iz_n^*x-4i(\alpha z_n^{*2}+2\beta z_n^{*3})t}  \vspace{0.05in}\\
0& 0
\end{array}\!\!\right].
\end{aligned}
\end{align}
where the norming constants $c_n(z_n), c_n^*(z_n^*)$ satisfy
\bee
 c_n(z_n)=\frac{\hat c_n(z_n)}{s_{11}'(z_n)},\quad c_n^*(z_n^*)=\frac{\hat c_n^*(z^*_n)}{s_{22}'(z^*_n)}
\ene
with $\hat c_n(z_n),\,\hat c_n^*(z^*_n)$ be defined by
\bee
 \Psi_{-1}(x,t;z_n)=\hat c_n(z_n)\Psi_{+2}(x,t;z_n),\quad \Psi_{-2}(x,t;z_n)=-\hat c_n^*(z^*_n)\Psi_{+1}(x,t;z_n^*),
\ene
\end{itemize}
\end{prop}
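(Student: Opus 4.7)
The plan is to verify that the sectionally meromorphic matrix $M^{(0)}(x,t;z)$ constructed in (\ref{M-0}) out of the Jost functions and scattering data actually satisfies each of the four itemized properties in the Riemann--Hilbert problem. This is a direct, ingredient-by-ingredient verification, relying on the analytic and symmetry facts already assembled in the preceding paragraphs.

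For \emph{analyticity} I would invoke the analytic extensions already recalled: $\Phi_{-1}, \Phi_{+2}, s_{11}$ extend to $\mathbb{C}_+$ and $\Phi_{+1}, \Phi_{-2}, s_{22}$ extend to $\mathbb{C}_-$, all with continuous boundary values on $\mathbb{R}$ by the Volterra representation (\ref{jost}). Dividing by $s_{11}$ (respectively $s_{22}$) creates simple poles exactly at the points of $Z_+\cup Z_-$, so the sectional definition yields a meromorphic matrix on $\mathbb{C}\setminus\mathbb{R}$ with the required continuous limits. For the \emph{normalization} I would use $\Phi_\pm=\mathbb{I}+\mathcal{O}(1/z)$ and $s_{11}(z),s_{22}(z)\to 1$ as $z\to\infty$, both inherited from (\ref{jost}), so $M^{(0)}\to\mathbb{I}$ in either half-plane.

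For the \emph{jump condition} the key input is the scattering relation (\ref{scatter2}) written column by column by means of (\ref{S-lie}), together with the symmetry $S(z)=\sigma_2 S^*(z^*)\sigma_2$. Taking the sectional definition (\ref{M-0}) from above and below the real axis and computing $M^{(0)}_-{}^{-1}M^{(0)}_+$ produces a $2\times 2$ factor whose off-diagonal entries pick up $\rho(z)=s_{21}(z)/s_{11}(z)$ conjugated by $e^{i\theta\sigma_3}$, and whose $(1,1)$ entry accumulates the extra $|\rho(z)|^2$ from the $S^{-1}$ appearing in the lower factor; this gives exactly $V_0(x,t;z)$. For the \emph{residue conditions} I would use the linear-dependence identity $\Psi_{-1}(x,t;z_n)=\hat c_n(z_n)\Psi_{+2}(x,t;z_n)$ at each simple zero $z_n$ of $s_{11}$, so that
\[
\mathop{\mathrm{Res}}_{z=z_n}\frac{\Psi_{-1}}{s_{11}}=\frac{\hat c_n}{s_{11}'(z_n)}\Psi_{+2}(x,t;z_n)=c_n(z_n)\Psi_{+2}(x,t;z_n),
\]
and then reinstate the exponential $e^{i\theta\sigma_3}$ from (\ref{M-0}) to obtain the stated nilpotent residue matrix. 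The residue at $z_n^*$ follows by applying the symmetry $\Psi_\pm(z)=\sigma_2\Psi_\pm^*(z^*)\sigma_2$, which converts the upper-triangular nilpotent factor into its lower-triangular counterpart with the conjugated norming constant $c_n^*$.

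The main obstacle, to my mind, is not any single ingredient but the bookkeeping of exponential conjugations $e^{\pm i\theta\sigma_3}$ and the passage between $\Psi_\pm$ and $\Phi_\pm$: getting the signs of $\theta(x,t;z)=zx+(4\beta z^3+2\alpha z^2)t$ correct inside the residue matrices, and checking that the symmetry $S(z)=\sigma_2 S^*(z^*)\sigma_2$ produces precisely the combinations $1+|\rho|^2$ and $\rho^*e^{-2i\theta}$ appearing in $V_0$ rather than some twisted variant. Once this bookkeeping is done once and for all, every item of the Riemann--Hilbert problem falls out mechanically from the material already laid down in (\ref{jost})--(\ref{S-lie}).
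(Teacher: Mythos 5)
Your verification is correct and is exactly the standard argument: the paper itself offers no proof of this Riemann--Hilbert problem, merely introducing it ``based on the scattering relation'' with citations, and the column-by-column computation you outline (dividing the analytic columns by $s_{11}$, $s_{22}$, using $\Psi_{-1}=s_{11}\Psi_{+1}+s_{21}\Psi_{+2}$ and the symmetry $S(z)=\sigma_2S^*(z^*)\sigma_2$ to produce $1+|\rho|^2$ and $\rho^*e^{-2i\theta}$, and the proportionality $\Psi_{-1}(z_n)=\hat c_n\Psi_{+2}(z_n)$ for the residues) is precisely what those references carry out. Your bookkeeping concern is well placed but resolves cleanly: since $\Phi_{-1}=\Psi_{-1}e^{i\theta}$ and $\Phi_{+2}=\Psi_{+2}e^{-i\theta}$, the proportionality becomes $\Phi_{-1}(z_n)=\hat c_ne^{2i\theta(z_n)}\Phi_{+2}(z_n)$, which yields exactly the exponent $2iz_nx+4i(\alpha z_n^2+2\beta z_n^3)t$ in the stated residue matrix.
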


Therefore, based on the  Plemelj's formulae, one can has
\begin{align}\label{RHP-jie}
\begin{array}{rl}
M^{(0)}(x, t; z)=&\!\!\d \mathbb{I}+\sum_{n=1}^{N}\left[\frac{\mathop\mathrm{Res}\limits_{z=z_n}M^{(0)}(x, t; z)}{z-z_n}+\frac{\mathop\mathrm{Res}\limits_{z=z_n^*}M^{(0)}(x, t; z)}{z-z^*_n}\right] \v\\
&\quad \d +\frac{1}{2\pi i}\int_{\mathbb{R}}\frac{M_-^{(0)}(x, t; \zeta)(\mathbb{I}-V_0(x,t;\zeta))}{\zeta-z}\,\mathrm{d}\zeta,\quad z\in\mathbb{C}\backslash\mathbb{R},
\end{array}
\end{align}
such that the solution $q(x,t)$ of the Hirota equation is given by using the solution $M^{(0)}(x,t;z)$ given by Eq.~(\ref{RHP-jie}) of the RHP-1
\bee\label{fanyan}
\begin{array}{rl}
q(x,t)\!\! & =2i\lim\limits_{z\rightarrow\infty}z(M^{(0)}(x,t;z))_{12} \v\\
  &=\d 2i\left(\mathop\mathrm{Res}\limits_{z=z_n^*}M^{(0)}(x, t; z)\right)_{12}
  -\d\frac{1}{\pi}\int_{\mathbb{R}}\left(M_-^{(0)}(x, t; \zeta)(\mathbb{I}-V_0(x,t;\zeta))\right)_{12}d\zeta.
\end{array}
\ene

In this paper, we will use the idea of Refs.~\cite{Girotti-1,Girotti-2,Grava-3} to analyze the asymptotic behaviors of a soliton gas for the Hirota equation. The rest of this paper is arranged as follows. In Sec. 2, based on the Riemann-Hilbert problem 1, we present the Riemann-Hilber problem related to the soliton gas of the Hirota equation, which is regarded as the limit $N\to \infty$ of its $N$-soliton solution, where the discrete spectra are chosen as the pure imaginary numbers within the intervals $(ia, ib)\cup(-ib,-ia)$. In Sec. 3, we present the large-space asymptotics of the soliton gas of the potential $q(x,0)$ of the Hirota equation as $x\to -\infty$. In Sec. 4, we give the corresponding large-space asymptotics of the soliton gas of the potential $q(x,t_0)$  of the Hirota equation for the fixed $t=t_0$ as $x\to -\infty$. In Sec. 5, we give the long-time asymptotics of the soliton gas of the potential $q(x,t)$  of the complex mKdV equation ($\alpha=0,\, \beta=1$) as $t\to +\infty$ at three fundamental spatial domains, $\xi=x/t>4\beta b^2,\, \xi_c<\xi=x/t<4\beta b^2,\, \xi=x/t<\xi_c$. In Sec. 6, we study the property of the soliton gas with the discrete spectra filling uniformly a quadrature domain. Finally, we give some conclusions and discussions in Sec. 7.

\section{Soliton gas: the limit $N\to\infty$ of $N$-soliton solution via RHP}

In this section, we first of all construct the pure-soliton Riemann-Hilbert problem corresponding to the $N$-soliton solution with reflectionless potential (i..e, $\rho(z)=0$) and pure imaginary discrete spectra set $Z_N:=\{iz_j,-iz_j\}_{j=1}^N$ with $z_j\in\mathbb{R}_+$ from the RHP \ref{RH-0}.

\begin{prop}\label{RH1} 
Find a $2\times 2$ matrix function $M(x,t;z)$ that satisfies the following properties:

\begin{itemize}

 \item {} Analyticity: $M(x,t;z)$ is analytic in $\mathbb{C}\backslash Z_N$, and has simple poles at $z\in Z_N$.

 \item {} Normalization: $M(x,t;z)=\mathbb{I}+\mathcal{O}(z^{-1}),\quad z\rightarrow\infty$;

\item {} Residue conditions: $M(x,t;z)$ has simple poles at each point in $Z_N$ with:
\begin{align}\label{Res1}
\begin{aligned}
&\mathrm{Res}_{z=iz_j}M(x,t;z)=\lim\limits_{z\to iz_j}M(x,t;z)\left[\!\!\begin{array}{cc}
0& 0  \vspace{0.05in}\\
c_je^{-2z_jx+(8\beta z_j^3-4i\alpha z_j^2)t}& 0
\end{array}\!\!\right],\vspace{0.05in}\\
&\mathrm{Res}_{z=-iz_j}M(x,t;z)=\lim\limits_{z\to -iz_j}M(x,t;z)\left[\!\!\begin{array}{cc}
0& -c_j^*e^{-2z_jx+(8\beta z_j^3+4i\alpha z_j^2)t}  \vspace{0.05in}\\
0& 0
\end{array}\!\!\right].
\end{aligned}
\end{align}

\end{itemize}
\end{prop}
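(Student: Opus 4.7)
The plan is to obtain RHP \ref{RH1} as the reflectionless, purely-imaginary-spectrum specialization of RHP \ref{RH-0}. First I would impose $\rho(z)\equiv 0$ on $\mathbb{R}$, which collapses the jump matrix $V_0$ to the identity and makes the contour integral in the Plemelj representation (\ref{RHP-jie}) vanish identically. The surviving matrix $M^{(0)}(x,t;z)$ is then meromorphic throughout $\mathbb{C}$ with simple poles only at the discrete spectrum, and the normalization $M^{(0)}\to\mathbb{I}$ carries over verbatim, immediately giving the analyticity and normalization clauses of RHP \ref{RH1}.

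Next I would specialize the discrete spectrum to the symmetric purely-imaginary set $Z_N=\{iz_j,-iz_j\}_{j=1}^{N}$ with $z_j\in\mathbb{R}_+$, which is automatically consistent with the Lax-pair symmetry $S(z)=\sigma_2 S^*(z^*)\sigma_2$ since each $iz_j$ pairs with its Schwarz reflection $-iz_j$. Substituting $z_n=iz_j$ into the exponent $2iz_nx+4i(\alpha z_n^2+2\beta z_n^3)t$ from the upper residue condition of RHP \ref{RH-0} gives
\begin{equation*}
2i(iz_j)x+4i\alpha(iz_j)^2 t+8i\beta(iz_j)^3 t=-2z_jx-4i\alpha z_j^2 t+8\beta z_j^3 t,
\end{equation*}
which matches the upper residue in (\ref{Res1}); the analogous computation at $z_n^*=-iz_j$ yields the complementary exponent $-2z_jx+(8\beta z_j^3+4i\alpha z_j^2)t$ and, with the conjugated norming constant $-c_j^*$ that the symmetry forces, reproduces the lower residue of (\ref{Res1}).

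Finally, to justify that this reduced RHP is well-posed for every $(x,t)\in\mathbb{R}^2$, I would reduce it to an algebraic system via the meromorphic ansatz
\begin{equation*}
M(x,t;z)=\mathbb{I}+\sum_{j=1}^{N}\!\left(\frac{A_j(x,t)}{z-iz_j}+\frac{B_j(x,t)}{z+iz_j}\right),
\end{equation*}
so that the residue conditions (\ref{Res1}) become a closed $2N\times 2N$ linear system for the column vectors stored in $A_j,B_j$. The principal obstacle is establishing unconditional solvability of this system. I expect this to follow by the standard positivity argument for focusing-type data: after absorbing the exponential factors into diagonal rescalings, the coefficient matrix is brought into the form $\mathbb{I}+K^\dagger K$ with $K$ a Cauchy-type matrix built from $\{iz_j\}$ and $\{c_j\}$, so its determinant is strictly positive and the solution $M(x,t;z)$ of RHP \ref{RH1} exists and is unique for all $(x,t)$. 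This uniqueness is precisely what one needs later when passing to the limit $N\to\infty$ to define the soliton gas.
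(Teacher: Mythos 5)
Your proposal is correct and follows essentially the same route as the paper: RHP \ref{RH1} is obtained from RHP \ref{RH-0} by setting $\rho(z)\equiv 0$ (so the jump and the contour integral in (\ref{RHP-jie}) disappear) and substituting the purely imaginary spectrum $z_n=iz_j$, $z_n^*=-iz_j$ into the residue exponents, and your exponent computations match (\ref{Res1}) exactly. Your solvability argument via the partial-fraction ansatz and the resulting linear system is precisely the paper's construction in Eqs.~(\ref{M})--(\ref{sys-s}), where the invertibility is asserted through $\det(\mathbb{I}_N+QQ^*)\neq 0$; you merely supply the standard positivity justification that the paper leaves as a parenthetical remark.
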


The matrix function solution of the RHP ~\ref{RH1} can be written as
\bee \label{M}
 M(x,t;z)=\mathbb{I}+\sum_{j=1}^N\left(\frac{1}{z-z_j}\left[\!\!\begin{array}{cc}
u_j(x,t) & 0  \vspace{0.05in}\\ v_j(x,t) & 0 \end{array}\!\!\right]
+\frac{1}{z-z_j^*}\left[\!\!\begin{array}{cc}
0& -v_j^*(x,t)  \vspace{0.05in}\\ 0 & u_j(x,t) \end{array}\!\!\right]\right),
\ene
where $u_j(x,t),\, v_j(x,t)$ are unknown functions to be determined later. Substituting Eq.~(\ref{M}) into the first one of Eq.~(\ref{Res1}) yields the system of equations about unknowns
$u_j, \, v_j^*$
\bee\label{uv}
\begin{array}{l}
 u_j=\d -\sum_{s=1}^N\frac{c_se^{-2z_sx+(8\beta z_s^3-4i\alpha z_s^2)t}}{z_j-z_s^*}v_s^*, \quad j=1,2,...,N, \v\\
 v_j^*=\d \left(1+\sum_{s=1}^N\frac{c_se^{-2z_sx+(8\beta z_s^3-4i\alpha z_s^2)t}}{z_j^*-z_s}u_j\right)c_j^*e^{-2z_sx+(8\beta z_s^3+4i\alpha z_s^2)t},\quad j=1,2,...,N.
 \end{array}
 \ene

Let $\widehat u=(\widehat u_1, \widehat u_2,..., \widehat u_N)^T,\, \widehat v^*=(\widehat v_1^*, \widehat v_2^*,...,\widehat v_N^*)^T$ and $w=(w_1,w_2,...,w_N)^T,\, Q=(Q_{js})_{N\times N}$ with
\bee \label{BT}
\begin{array}{c}
 \widehat u_j=\dfrac{u_j}{\sqrt{c_j}} e^{z_jx-(4\beta z_j^3-2i\alpha z_j^2)t},\quad
 \widehat v_j^*=\dfrac{v_j^*}{\sqrt{c_j^*}} e^{z_jx-(4\beta z_j^3+2i\alpha z_j^2)t}, \\
 w_j={\sqrt{c_j^*}} e^{-z_jx+(4\beta z_j^3+2i\alpha z_j^2)t}, \quad
 Q_{js}=\dfrac{i}{z_s-z_j^*}e^{-(z_j+z_s)x+[4\beta (z_j^3+z_s^3)+2i\alpha (z_j^2-z_s^2)]t}.
\end{array}
\ene
Then system (\ref{uv}) can be rewritten as
\bee
\left[\begin{array}{cc}
\mathbb{I}_N & -iQ^* \\
-iQ & \mathbb{I}_N
\end{array}\right]\left[\begin{array}{cc}
\widehat u \\  \widehat v^* \end{array}\right]
=\left[\begin{array}{cc}
 0 \\ w \end{array}\right],
\ene
whose solution can be derived in the form (Note that ${\rm det}(\mathbb{I}_N+QQ^*)\not=0$)
\bee\label{sys-s}
\left[\begin{array}{cc}
\widehat u \v\\  \widehat v^* \end{array}\right]=
\left[\begin{array}{cc}
\mathbb{I}_N-Q^*(\mathbb{I}_N+QQ^*)^{-1}Q & iQ^*(\mathbb{I}_N+QQ^*)^{-1} \v \\
i(\mathbb{I}_N+QQ^*)^{-1}Q & (\mathbb{I}_N+QQ^*)^{-1}
\end{array}\right]\left[\begin{array}{cc}
 0 \v\\ w \end{array}\right].
\ene

According to Eqs.~(\ref{fanyan}), (\ref{M}), (\ref{BT}), and (\ref{sys-s}), we can recover $q(x,t)$ by the following formula of the $N$-soliton solution of the Hirota equation (\ref{hirota}):
\bee\label{fanyan-1}
\begin{array}{cl}
q_N(x,t)&= 2i\lim\limits_{z\rightarrow\infty}(z M(x,t;z))_{12}, \v\\
 &= \d -2i\sum_{j=1}^N v_j^* \v\\
 &=\d -2i\sum_{j,s=1}^N((\mathbb{I}_N+QQ^*)^{-1})_{js}\sqrt{c_j^*c_s^*}e^{-(z_j+z_s)x+[4\beta(z_j^3+z_s^3)+2i\alpha(z_j^2+z_s^2)]t}.
 \end{array}
\ene

\begin{figure}[!t]
    \centering
 \vspace{-0.15in}
  {\scalebox{0.32}[0.32]{\includegraphics{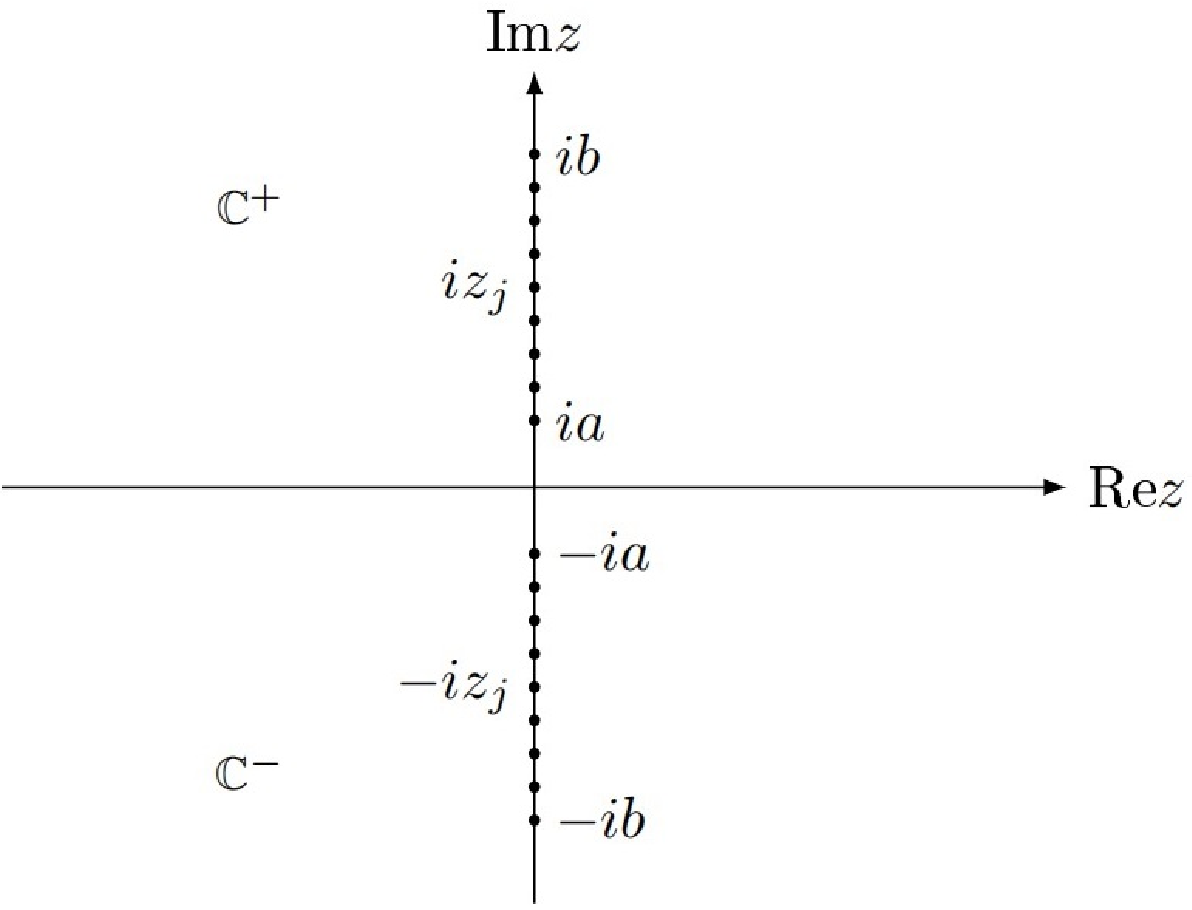}}}\hspace{-0.35in}
\vspace{0.05in}
\caption{Analytical domains of eigenfunctions and distribution of the discrete spectrum $Z_N$.}
   \label{fig8}
\end{figure}

Recently, Girotti {\it et al}~\cite{Girotti-1,Girotti-2} investigated the soliton gas phenomena of KdV and mKdV equations with discrete spectra located on the imaginary axis. Similarly, we assume that the discrete spectrum belongs to the interval $(ia, ib)\cup(-ib,-ia)$. We mainly focus on the situation where the limit $N\to\infty$ of $q_N(x,t)$, under the following conditions~\cite{Girotti-1}:

\begin{itemize}

 \item {} The simple poles $\{iz_j\}_{j=1}^N$ are sampled from a smooth density function $p(z)$ so that $\int_{a}^{z_j}p(\lambda)d\lambda=\frac{j}{N},j=1,\cdots,N$ (see Fig.~\ref{fig8}).

 \item {} The norming parameters $\{c_j\}_{j=1}^N$ satisfy
\bee
c_j=\dfrac{i(b-a)r(iz_j)}{N\pi},
\ene
where $r(z)$ is a real-valued, continuous and non-vanishing function for $z\in(ia,ib)\, (0<a<b)$, with the symmetry $r(z^*)=r(z)$.

\end{itemize}

Then the residue conditions (\ref{Res1}) of matrix function $M(x,t;z)$ can be rewrite as:
\begin{align}\label{Res10}
\begin{aligned}
&\mathrm{Res}_{z=iz_j}M(x,t;z)=\lim\limits_{z\to iz_j}M(x,t;z)\left[\!\!\begin{array}{cc}
0& 0  \vspace{0.05in}\\
\dfrac{i(b-a)r(iz_j)}{N\pi}e^{-2z_jx+(8\beta z_j^3-4i\alpha z_j^2)t}& 0
\end{array}\!\!\right],\vspace{0.05in}\\
&\mathrm{Res}_{z=-iz_j}M(x,t;z)=\lim\limits_{z\to -iz_j}M(x,t;z)\left[\!\!\begin{array}{cc}
0& \dfrac{i(b-a)r(iz_j)}{N\pi}e^{-2z_jx+(8\beta z_j^3+4i\alpha z_j^2)t}  \vspace{0.05in}\\
0& 0
\end{array}\!\!\right].
\end{aligned}
\end{align}
Notice that for the fixed $t=t_0$, as $x\to +\infty$ the residue conditions (\ref{Res10}) only have the exponentially small terms such that
the potential $q$ is exponentially small by a small norm argument. However, as $x\to -\infty$
both terms of the residue conditions (\ref{Res10}) are exponentially large. For this case, one may make the transform
 \bee
  \widehat M(z)=M(z)T(z),\quad T(z)=\prod_{j=1}^N\left(\frac{z-iz_j}{z+iz_j}\right)^{\sigma_3}
 \ene
such that the residue conditions of $\widehat M(z)$ are
\begin{align}\label{Res10g}
\begin{aligned}
&\mathrm{Res}_{z=iz_j}\widehat M(x,t;z)=\lim\limits_{z\to iz_j}M(x,t;z)\left[\!\!\begin{array}{cc}
0& \dfrac{N\pi}{i(b-a)r(iz_j)T'^2(iz_j)}e^{2z_jx-(8\beta z_j^3-4i\alpha z_j^2)t} \vspace{0.05in}\\
0 & 0
\end{array}\!\!\right],\vspace{0.05in}\\
&\mathrm{Res}_{z=-iz_j}\widehat M(x,t;z)=\lim\limits_{z\to -iz_j}M(x,t;z)\left[\!\!\begin{array}{cc}
0& 0 \vspace{0.05in}\\
\dfrac{N\pi T^4(-iz_j)}{i(b-a)r(iz_j)T'^2(-iz_j)}e^{2z_jx-(8\beta z_j^3+4i\alpha z_j^2)t}  & 0
\end{array}\!\!\right],
\end{aligned}
\end{align}
in which the moduli of  coefficients of two exponential functions are less than $e^{c_1N+c_2}$ for some positive constans $c_{1,2}>0$ such that one knows that for the fixed $t=t_0$, these two non-zero terms are exponentially small as $x\to -\infty$ with $x < (c_1N+c_2)/(c_3-2{\rm min}\{z_j\})$ for some positive constant $c_3$ ($0<c_3<2z_j$).  The corresponding potential is
\bee\label{fanyan-1g}
q_N(x,t)= 2i\lim\limits_{z\rightarrow\infty}(z \widehat M(x,t;z))_{12},
\ene

Let a closed curve $\Gamma_+$ ($\Gamma_-$ ) be a small radius encircling the poles $\{iz_j\}_{j=1}^N$ ($\{-iz_j\}_{j=1}^N$) counterclockwise (clockwise) in the upper (lower) half plane $\mathbb{C}_+$ ($\mathbb{C}_-$). Make the following transformation:
\bee
M^{(1)}(x,t;z)
=\begin{cases}
M(x,t;z)\left[\!\!\begin{array}{cc}
1& 0  \vspace{0.05in}\\
-\sum\limits_{j=1}^{N}\dfrac{i(b-a)r(iz_j)}{N\pi(z-iz_j)}e^{2i(z_jx+2\alpha z_j^2t+4\beta z_j^3t)} & 1
\end{array}\!\!\right],\quad z~\mathrm{within}~\Gamma_+,\vspace{0.05in}\\
M(x,t;z)\left[\!\!\begin{array}{cc}
1& -\sum\limits_{j=1}^{N}\dfrac{i(b-a)r(iz_j)}{N\pi(z+iz_j)}e^{-2i(z_jx+2\alpha z_j^2t+4\beta z_j^3t)}  \vspace{0.05in}\\
0& 1
\end{array}\!\!\right],\quad z~\mathrm{within}~\Gamma_-,\v\\
M(x,t;z),\quad \mathrm{otherwise}.
\end{cases}
\ene
Then the matrix function $M^{(1)}(x,t;z)$ satisfies the following Riemann-Hilbert problem.

\begin{prop}\label{RH2} 
Find a $2\times 2$ matrix function $M^{(1)}(x,t;z)$ that satisfies the following properties:

\begin{itemize}

 \item {} Analyticity: $M^{(1)}(x,t;z)$ is analytic in $\mathbb{C}\setminus(\Gamma_+\cup\Gamma_-)$ and takes continuous boundary values on $\Gamma_+\cup\Gamma_-$.

 \item {} Jump condition: The boundary values on the jump contour $\Gamma_+\cup\Gamma_-$ are defined as
 \bee\label{V1}
M_+^{(1)}(x,t;z)=M_-^{(1)}(x,t;z)V_1(x,t;z),\quad z\in\Gamma_+\cup\Gamma_-,
\ene
where
\bee\label{V1-1}
V_1(x,t;z)
=\begin{cases}
\left[\!\!\begin{array}{cc}
1& 0  \vspace{0.05in}\\
-\sum\limits_{j=1}^{N}\dfrac{i(b-a)r(iz_j)}{N\pi(z-iz_j)}e^{2i(z_jx+2\alpha z_j^2t+4\beta z_j^3t)} & 1
\end{array}\!\!\right],\quad z\in\Gamma_+,\vspace{0.05in}\\
\left[\!\!\begin{array}{cc}
1& \sum\limits_{j=1}^{N}\dfrac{i(b-a)r(iz_j)}{N\pi(z+iz_j)}e^{-2i(z_jx+2\alpha z_j^2t+4\beta z_j^3t)}  \vspace{0.05in}\\
0& 1
\end{array}\!\!\right],\quad z\in\Gamma_-;
\end{cases}
\ene

 \item {} Normalization: $M^{(1)}(x,t;z)=\mathbb{I}+\mathcal{O}(z^{-1}),\quad z\rightarrow\infty.$

\end{itemize}
\end{prop}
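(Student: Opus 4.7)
The plan is to verify the three defining properties of RHP~\ref{RH2} (analyticity, jump condition, normalization) directly from the piecewise definition of $M^{(1)}$. The underlying idea is the standard ``dressing'' procedure: the triangular factors introduced inside the small disks bounded by $\Gamma_+$ and $\Gamma_-$ are engineered so that their poles at the discrete spectrum $Z_N$ exactly cancel the residues of $M$ from RHP~\ref{RH1}, trading a meromorphic RHP for a holomorphic one with a new jump on $\Gamma_+\cup\Gamma_-$.

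The first and only substantive step is analyticity inside the two disks. Outside them the transformation equals the identity, so $M^{(1)}\equiv M$ is analytic there by RHP~\ref{RH1}. For a pole $z=iz_j$ inside $\Gamma_+$, I would expand
\begin{equation*}
M(x,t;z) = \frac{1}{z-iz_j}\mathop{\mathrm{Res}}_{z=iz_j} M(x,t;z) + M_{\mathrm{reg}}(x,t;z)
\end{equation*}
and use the residue identity in (\ref{Res10}) to write $\mathop{\mathrm{Res}}_{z=iz_j}M=c_j\, M_{\mathrm{reg}}(x,t;iz_j)\, E_{21}$, where $E_{21}=\bigl(\begin{smallmatrix}0&0\\1&0\end{smallmatrix}\bigr)$ and $c_j$ is the scalar prefactor of the first residue in (\ref{Res10}). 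The triangular factor in the $\Gamma_+$ piece of $M^{(1)}$ has the form $\mathbb{I}-\sum_k (c_k/(z-iz_k))\,E_{21}$, so the Laurent expansion of the product $M\cdot T_+$ at $z=iz_j$ has a double-pole coefficient proportional to $E_{21}^2=0$ and a simple-pole coefficient equal to $\mathop{\mathrm{Res}}_{z=iz_j}M-c_j M_{\mathrm{reg}}(x,t;iz_j)E_{21}=0$. Hence $M^{(1)}$ extends holomorphically through $iz_j$, and an identical argument with $E_{12}=\bigl(\begin{smallmatrix}0&1\\0&0\end{smallmatrix}\bigr)$ in place of $E_{21}$ removes the poles at $-iz_j$ inside $\Gamma_-$ using the second line of (\ref{Res10}).

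The jump condition is then immediate from the construction: because $M$ is analytic on a neighborhood of each contour, crossing $\Gamma_+$ counterclockwise from its exterior (the $-$ side) to its interior (the $+$ side) right-multiplies $M^{(1)}$ by exactly the $\Gamma_+$ block of $V_1$ in (\ref{V1-1}). On $\Gamma_-$ the clockwise orientation swaps the roles of the $\pm$ sides, so the required jump is the inverse of the upper-triangular matrix used in the definition of $M^{(1)}$; that inverse is obtained by flipping the sign of its $(1,2)$-entry, reproducing the $\Gamma_-$ block of (\ref{V1-1}). The normalization is trivial since $z=\infty$ sits in the ``otherwise'' region where $M^{(1)}\equiv M$, and the $\mathbb{I}+\mathcal{O}(z^{-1})$ behavior is inherited from RHP~\ref{RH1}.

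The only place requiring any care is the Laurent-expansion step, which rests on the algebraic identities $E_{12}^2=E_{21}^2=0$ together with a quick check that the exponential factor $e^{\pm 2i(z_jx+2\alpha z_j^2t+4\beta z_j^3 t)}$ appearing in the transformation agrees with the exponential in $c_j$ after evaluation at $z=\pm iz_j$; this agreement is simply the specialization of the unified exponent $2izx+4i(\alpha z^2+2\beta z^3)t$ of RHP~\ref{RH-0} to the points $z=\pm iz_j$, so no genuine analytic input beyond the residue identities of RHP~\ref{RH1} is needed.
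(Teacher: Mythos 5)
Your verification is correct and is exactly the standard pole-removal argument that the paper's construction of $M^{(1)}$ implicitly relies on (the paper simply states the triangular transformation and asserts the resulting RHP without spelling out the Laurent-expansion check, the nilpotency of the off-diagonal units, or the orientation bookkeeping on $\Gamma_\pm$). You also rightly flag the one point needing care: the exponential written in $V_1$ must be read as $e^{2i\theta(z)}$ specialized to $z=\pm iz_j$, i.e.\ $e^{-2z_jx+(8\beta z_j^3\mp 4i\alpha z_j^2)t}$ as in (\ref{Res10}), for the residues to cancel — with that reading your argument goes through verbatim.
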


According to Eq.~(\ref{fanyan-1}), we can recover $q(x,t)$ by the following formula:
\bee\label{fanyan-2}
q(x,t)=2i\lim\limits_{z\rightarrow\infty}(z M^{(1)}(x,t;z))_{12}.
\ene

For convenience, we can assume that the $N$ poles are equally spaced along $(ia,ib)$ with distance between two poles equal to $s_d=\frac{b-a}{N}$.

Based on Ref.~\cite{Girotti-1}, one has the following lemma:

\begin{lemma}\label{le1}
For any open set $A_+$ {\rm (}$A_-${\rm)} containing the interval $[ia,ib]$ {\rm (}$[-ib,-ia]${\rm )}, the following identities hold:
\bee
\lim\limits_{N\to\infty}\sum\limits_{j=1}^{N}\frac{c_je^{2i(z_jx+2\alpha z_j^2t+4\beta z_j^3t)}}{z \mp iz_j}=
\pm \frac{1}{\pi}\int_{\pm ia}^{\pm ib}\dfrac{r(k)e^{2i(kx+2\alpha k^2t+4\beta k^3t)}}{z-k}dk,& z\in \mathbb{C}\setminus A_{\pm},
\ene
where the open intervals $(ia,ib)$ and $(-ib,-ia)$ are both oriented upwards.
\end{lemma}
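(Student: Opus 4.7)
The plan is to recognize the left-hand sum as a Riemann sum for the integral on the right-hand side. Start from the assumption that the $N$ poles $\{iz_j\}_{j=1}^N$ are equally spaced on $(ia,ib)$ with spacing $\Delta z=(b-a)/N$; under the given sampling hypothesis $c_j=i(b-a)r(iz_j)/(N\pi)$, one can rewrite
\begin{equation*}
c_j=\tfrac{i}{\pi}\,r(iz_j)\,\Delta z,
\end{equation*}
so the finite sum for the $+$ case becomes
\begin{equation*}
\sum_{j=1}^{N}\frac{c_j\,e^{2i(z_jx+2\alpha z_j^2t+4\beta z_j^3t)}}{z-iz_j}
=\frac{i}{\pi}\sum_{j=1}^{N} F(z_j;x,t,z)\,\Delta z,
\end{equation*}
where $F(u;x,t,z):=r(iu)\,e^{2i(ux+2\alpha u^2t+4\beta u^3t)}/(z-iu)$. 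This is a midpoint-type Riemann sum for $u\in[a,b]$.

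Next I would argue uniform convergence of the Riemann sum to the integral
$\tfrac{i}{\pi}\int_a^b F(u;x,t,z)\,du$, then change variables $k=iu$, $dk=i\,du$, so the interval $[a,b]$ maps bijectively onto the upward-oriented segment $[ia,ib]$ and one obtains the stated $\tfrac{1}{\pi}\int_{ia}^{ib}r(k)e^{2i(kx+2\alpha k^2t+4\beta k^3t)}/(z-k)\,dk$ (after absorbing the factor of $i$ from $du=-i\,dk$). The $-$ case is handled identically after replacing $iz_j$ by $-iz_j$ and using the prescribed symmetry $r(k^*)=r(k)$, which guarantees that the integrand on the mirror segment $[-ib,-ia]$ is the complex conjugate structure needed, with the orientation/sign giving the global factor $-1$ that matches the stated $\pm$ on the right-hand side.

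The main technical point, and the only place where something more than plain calculus is needed, is justifying the uniform convergence of the Riemann sum on the set $z\in\mathbb{C}\setminus A_\pm$. The hypothesis that $A_+$ is an open neighborhood of $[ia,ib]$ ensures that the denominator $z-iu$ stays bounded below in modulus by some $\delta(A_+)>0$ uniformly for $u\in[a,b]$ and $z\in\mathbb{C}\setminus A_+$; combined with the continuity (and boundedness) of $r$ and of the exponential factor on the compact parameter set $u\in[a,b]$ (for each fixed $x,t$), the integrand $F(u;x,t,z)$ is $C^1$ in $u$ with derivative uniformly bounded in $z\in\mathbb{C}\setminus A_+$. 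Standard quadrature estimates then give the usual $\mathcal{O}(1/N)$ error of a Riemann sum, which tends to zero as $N\to\infty$ uniformly for $z$ on every compact subset of $\mathbb{C}\setminus A_+$. This is the only nontrivial step; the rest is substitution and a linear change of variable.
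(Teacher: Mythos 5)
Your proposal is correct and follows essentially the same route as the paper: both recognize the sum as a Riemann sum with spacing $(b-a)/N$ for the contour integral over $(ia,ib)$ and pass to the limit, handling the $-$ case by symmetry. The only difference is that you spell out the uniform-convergence justification (the denominator bounded below by the distance from $\mathbb{C}\setminus A_\pm$ to the segment), which the paper's proof asserts implicitly.
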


\begin{proof}
For $z\in \mathbb{C}\setminus A_{+}$, since $c_j=\dfrac{i(b-a)r(iz_j)}{N\pi}$, one has
\begin{align}\no
\begin{array}{rl}
\d\lim\limits_{N\to\infty}\sum\limits_{j=1}^{N}\frac{c_je^{2i(z_jx+2\alpha z_j^2t+4\beta z_j^3t)}}{z-iz_j}&=\d\lim\limits_{N\to\infty}\sum\limits_{j=1}^{N}\frac{r(iz_j)(ib-ia)}{N\pi(z-iz_j)}e^{2i(z_jx+2\alpha z_j^2t+4\beta z_j^3t)} \v\\
&=\d\lim\limits_{N\to\infty}\sum\limits_{j=1}^{N}\frac{ir(iz_j)s_d}{\pi(z-iz_j)}e^{2i(z_jx+2\alpha z_j^2t+4\beta z_j^3t)}\v\\
&=\d \dfrac{1}{\pi}\int_{ia}^{ib}\frac{r(k)}{z-k}e^{2i(kx+2\alpha k^2t+4\beta k^3t)}dk.
\end{array}
\end{align}
Similarly, it can be proven that another identity hold $z\in \mathbb{C}\setminus A_{-}$. Thus the proof is completed.
\end{proof}

When $N\to\infty$, according to Lemma \ref{le1}, the jump matrix $V_1(x,t;z)$ defined by Eq.~(\ref{V1-1}) can be rewrite as:
\bee\label{V1-2}
V_1(x,t;z)
=\begin{cases}
\left[\!\!\begin{array}{cc}
1& 0  \vspace{0.05in}\\
-\dfrac{1}{\pi}\d\int_{ia}^{ib}\frac{r(k)}{z-k}e^{2i(kx+2\alpha k^2t+4\beta k^3t)}dk & 1
\end{array}\!\!\right],\quad z\in\Gamma_+,\vspace{0.1in}\\
\left[\!\!\begin{array}{cc}
1& \dfrac{1}{\pi}\d\int_{-ib}^{-ia}\frac{r(k)}{z-k}e^{-2i(kx+2\alpha k^2t+4\beta k^3t)}dk \vspace{0.05in}\\
0& 1
\end{array}\!\!\right],\quad z\in\Gamma_-;
\end{cases}
\ene

Make the following transformation:
\bee
M^{(2)}(x,t;z)
=\begin{cases}
M^{(1)}(x,t;z)\left[\!\!\begin{array}{cc}
1& 0  \vspace{0.05in}\\
\dfrac{1}{\pi}\d\int_{ia}^{ib}\frac{r(k)}{z-k}e^{2i(kx+2\alpha k^2t+4\beta k^3t)}dk & 1
\end{array}\!\!\right],\quad z~\mathrm{within}~\Gamma_+,\vspace{0.05in}\\
M^{(1)}(x,t;z)\left[\!\!\begin{array}{cc}
1& \dfrac{1}{\pi}\d\int_{-ib}^{-ia}\frac{r(k)}{z-k}e^{2i(kx+2\alpha k^2t+4\beta k^3t)}dk  \vspace{0.05in}\\
0& 1
\end{array}\!\!\right],\quad z~\mathrm{within}~\Gamma_-,\v\\
M^{(1)}(x,t;z),\quad \mathrm{otherwise}.
\end{cases}
\ene
For convenience, let $\gamma_1:=(a,b),\gamma_2:=(-b,-a)$ and both line segments $\gamma_1$ and $\gamma_2$ are directed to the right. Then matrix function $M^{(2)}(x,t;z)$ satisfies the following Riemann-Hilbert problem.

\begin{prop}\label{RH3} 
Find a $2\times 2$ matrix function $M^{(2)}(x,t;z)$ that satisfies the following properties:

\begin{itemize}

 \item {} Analyticity: $M^{(2)}(x,t;z)$ is analytic in $\mathbb{C}\setminus(i\gamma_1\cup i\gamma_2)$ and takes continuous boundary values on $i\gamma_1\cup i\gamma_2$.

 \item {} Jump condition: The boundary values on the jump contour $i\gamma_1\cup i\gamma_2$ are defined as
 \bee\label{V2}
M_+^{(2)}(x,t;z)=M_-^{(2)}(x,t;z)V_2(x,t;z),\quad z\in i\gamma_1\cup i\gamma_2,
\ene
where
\bee
V_2(x,t;z)
=\begin{cases}
\left[\!\!\begin{array}{cc}
1& 0  \vspace{0.05in}\\
2ir(z)e^{2i(xz+2\alpha tz^2+4\beta tz^3)}& 1
\end{array}\!\!\right],\quad z\in i\gamma_1,\vspace{0.05in}\\
\left[\!\!\begin{array}{cc}
1& 2ir(z)e^{-2i(xz+2\alpha tz^2+4\beta tz^3)}  \vspace{0.05in}\\
0& 1
\end{array}\!\!\right],\quad z\in i\gamma_2;
\end{cases}
\ene

 \item {} Normalization: $M^{(2)}(x,t;z)=\mathbb{I}+\mathcal{O}(z^{-1}),\quad z\rightarrow\infty.$

\end{itemize}
\end{prop}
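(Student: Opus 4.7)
The task is to verify that the matrix function $M^{(2)}(x,t;z)$, defined piecewise from $M^{(1)}(x,t;z)$ through the triangular multiplicative correction introduced just above, satisfies the three listed properties: analyticity on $\mathbb{C}\setminus(i\gamma_1\cup i\gamma_2)$, the prescribed jump by $V_2$, and normalization at infinity. My plan is to exploit the fact that the correction matrices are constructed precisely so as to cancel the jump $V_1$ of $M^{(1)}$ across $\Gamma_{\pm}$, while transferring the discontinuity onto $i\gamma_1\cup i\gamma_2$ via the Sokhotski--Plemelj formula applied to the Cauchy-type integrals sitting in their off-diagonal entries.

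First I would show that the jump of $M^{(2)}$ across $\Gamma_+$ is trivial. Let $G_+(z)$ denote the triangular matrix used to define $M^{(2)}$ inside $\Gamma_+$, whose lower-left entry is $F(z):=\tfrac{1}{\pi}\int_{ia}^{ib}\frac{r(k)}{z-k}e^{2i(kx+2\alpha k^2 t+4\beta k^3 t)}\,dk$. The key algebraic identity is $V_1(z)\,G_+(z)=\mathbb{I}$ on $\Gamma_+$: this is immediate upon comparing the lower-left entry of $V_1$ in \eqref{V1-2} with that of $G_+$ (they are negatives of each other). Using $M^{(1)}_+=M^{(1)}_-\,V_1$ on $\Gamma_+$ and the fact that the inside of $\Gamma_+$ is the $+$-side,
\begin{equation*}
(M^{(2)}_-)^{-1}M^{(2)}_+ = (M^{(1)}_-)^{-1} M^{(1)}_+\, G_+ = V_1\,G_+ = \mathbb{I},
\end{equation*}
so $M^{(2)}$ extends continuously, and in fact analytically, across $\Gamma_+$. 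The analogous computation handles $\Gamma_-$. Combined with analyticity of $F$ off $i\gamma_1$, its analogue off $i\gamma_2$, and analyticity of $M^{(1)}$ off $\Gamma_+\cup\Gamma_-$, this delivers the claimed analyticity of $M^{(2)}$ on $\mathbb{C}\setminus(i\gamma_1\cup i\gamma_2)$.

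Next I would compute the new jump across $i\gamma_1$, which sits inside $\Gamma_+$ where $M^{(2)}=M^{(1)} G_+$ and where $M^{(1)}$ is analytic by Riemann--Hilbert problem~\ref{RH2}. The Sokhotski--Plemelj formula applied to $F$ gives $F_+(z)-F_-(z)=2i r(z)e^{2i(zx+2\alpha z^2 t+4\beta z^3 t)}$ for $z\in i\gamma_1$ with the stated upward orientation, so
\begin{equation*}
(M^{(2)}_-)^{-1}M^{(2)}_+ = G_{+,-}^{-1}\,G_{+,+} = \left[\begin{array}{cc} 1 & 0 \\ F_+-F_- & 1 \end{array}\right] = \left[\begin{array}{cc} 1 & 0 \\ 2ir(z)e^{2i(xz+2\alpha tz^2+4\beta tz^3)} & 1 \end{array}\right],
\end{equation*}
matching the first case of $V_2$. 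The symmetric computation on $i\gamma_2$, using the upper-right Cauchy integral and the reversed orientation of $\Gamma_-$, produces the second case. Finally, as $z\to\infty$ both Cauchy integrals decay like $1/z$, so the correction matrices tend to $\mathbb{I}$; since in any neighborhood of infinity $M^{(2)}=M^{(1)}$, the normalization $M^{(2)}(z)=\mathbb{I}+\mathcal{O}(z^{-1})$ follows from that of $M^{(1)}$.

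The most delicate step, and the one I expect to be the main obstacle, is bookkeeping the orientations and signs: one must pin down the orientation of $\Gamma_{\pm}$ (counterclockwise around the poles in $\mathbb{C}_+$ and clockwise around those in $\mathbb{C}_-$, as dictated by the residue conventions) together with the upward orientation of $i\gamma_1,i\gamma_2$ so that the Sokhotski--Plemelj jump produces $+2ir(z)e^{2i(\cdots)}$ with exactly the sign asserted in $V_2$, rather than its negative. The rest reduces to the identity $V_1 G_{\pm}=\mathbb{I}$ on $\Gamma_{\pm}$, manifest from the construction; the replacement of the discrete sum form of $V_1$ by its integral form is already justified by Lemma~\ref{le1}.
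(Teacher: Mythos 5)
Your argument is correct and is precisely the (unwritten) justification behind the paper's statement: the paper merely defines the transformation $M^{(1)}\mapsto M^{(2)}$ and asserts Riemann--Hilbert problem~\ref{RH3} without proof, so the two steps you supply --- the algebraic cancellation $V_1G_{\pm}=\mathbb{I}$ on $\Gamma_{\pm}$ and the Sokhotski--Plemelj jump of the Cauchy integrals across $i\gamma_1\cup i\gamma_2$ --- are exactly what Lemma~\ref{le1} and the integral form \eqref{V1-2} are set up to deliver. The one point you flag but do not actually settle is the sign: with $(ia,ib)$ oriented upward and the usual convention that the $+$ boundary value is taken from the left of the oriented contour, Plemelj applied to $F(z)=\frac{1}{\pi}\int_{ia}^{ib}\frac{r(k)}{z-k}e^{2i(kx+2\alpha k^2t+4\beta k^3t)}dk$ gives $F_+-F_-=-2ir(z)e^{2i(zx+2\alpha z^2t+4\beta z^3t)}$, so obtaining the entry $+2ir(z)e^{2i(\cdots)}$ of $V_2$ requires taking the $+$ side to be the right of the upward-oriented segment (equivalently, reversing the stated orientation); since this is the same sign the paper itself carries forward into $V_3$ and $V_4$, it is a shared convention choice rather than a defect of your argument, but it should be pinned down explicitly rather than asserted.
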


According to Eq.~(\ref{fanyan-2}), we can recover $q(x,t)$ by the following formula:
\bee\label{fanyan-3}
q(x,t)=2i\lim\limits_{z\rightarrow\infty}(z M^{(2)}(x,t;z))_{12}.
\ene

We want to rotate the jump line to the real axis, so we will rewrite Eq.~(\ref{V2}) as follows:
\bee
M_+^{(2)}(x,t;iz)=M_-^{(2)}(x,t;iz)\begin{cases}
\left[\!\!\begin{array}{cc}
1& 0  \vspace{0.05in}\\
2ir(iz)e^{-2zx-4i\alpha z^2t+8\beta z^3t}& 1
\end{array}\!\!\right],\quad z\in \gamma_1,\vspace{0.05in}\\
\left[\!\!\begin{array}{cc}
1& 2ir(iz)e^{2zx+4i\alpha z^2t-8\beta z^3t}  \vspace{0.05in}\\
0& 1
\end{array}\!\!\right],\quad z\in \gamma_2.
\end{cases}
\ene

Make the following transformation:
\bee
M^{(3)}(x,t;z)=M^{(2)}(x,t;iz),\quad R(z)=2r(iz),
\ene
Then matrix function $M^{(3)}(x,t;z)$ satisfies the following Riemann-Hilbert problem:

\v
\begin{prop}\label{RH4} 
Find a $2\times 2$ matrix function $M^{(3)}(x,t;z)$ that satisfies the following properties:

\begin{itemize}

 \item {} Analyticity: $M^{(3)}(x,t;z)$ is analytic in $\mathbb{C}\setminus(\gamma_1\cup \gamma_2)$ and takes continuous boundary values on $\gamma_1\cup \gamma_2$.

 \item {} Jump condition: The boundary values on the jump contour $\gamma_1\cup\gamma_2$ are defined as
 \bee
M_+^{(3)}(x,t;z)=M_-^{(3)}(x,t;z)V_3(x,t;z),\quad z\in \gamma_1\cup \gamma_2,
\ene
where
\bee
V_3(x,t;z)
=\begin{cases}
\left[\!\!\begin{array}{cc}
1& 0  \vspace{0.05in}\\
iR(z)e^{-2zx-4i\alpha z^2t+8\beta z^3t}& 1
\end{array}\!\!\right],\quad z\in \gamma_1,\vspace{0.05in}\\
\left[\!\!\begin{array}{cc}
1& iR(z)e^{2zx+4i\alpha z^2t-8\beta z^3t}  \vspace{0.05in}\\
0& 1
\end{array}\!\!\right],\quad z\in \gamma_2.
\end{cases}
\ene

 \item {} Normalization: $M^{(3)}(x,t;z)=\mathbb{I}+\mathcal{O}(z^{-1}),\quad z\rightarrow\infty$.

\end{itemize}
\end{prop}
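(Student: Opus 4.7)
The plan is to verify that $M^{(3)}(x,t;z):=M^{(2)}(x,t;iz)$ inherits the three required properties from the Riemann-Hilbert problem for $M^{(2)}$ (Proposition \ref{RH3}) under the spectral change of variable $w=iz$, which rotates the imaginary axis to the real axis. There is no genuine analytic content beyond bookkeeping; the work consists of tracking contours, orientations, and signs in the exponent.

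First I would handle \emph{analyticity} and \emph{normalization} together. Since $M^{(2)}(x,t;w)$ is analytic on $\mathbb{C}\setminus(i\gamma_1\cup i\gamma_2)$, and the affine map $w=iz$ carries $\gamma_1\cup\gamma_2$ precisely onto $i\gamma_1\cup i\gamma_2$, the composition $M^{(3)}(x,t;z)=M^{(2)}(x,t;iz)$ is analytic on $\mathbb{C}\setminus(\gamma_1\cup\gamma_2)$ and takes continuous boundary values there. For the normalization, $z\to\infty$ implies $w=iz\to\infty$, so $M^{(3)}(x,t;z)=\mathbb{I}+\mathcal{O}(z^{-1})$ follows immediately from the corresponding expansion of $M^{(2)}$ at infinity.

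Next I would verify the \emph{jump condition}. Writing $w=iz$ in the exponent of the $(2,1)$-entry of the upper-contour jump of $V_2$, I compute
\begin{equation}
2i(xw+2\alpha tw^{2}+4\beta tw^{3})\Big|_{w=iz}=2i\bigl(ixz-2\alpha tz^{2}-4i\beta tz^{3}\bigr)=-2zx-4i\alpha tz^{2}+8\beta tz^{3},
\end{equation}
and with $R(z):=2r(iz)$ the entry $2ir(w)e^{2i(xw+2\alpha tw^{2}+4\beta tw^{3})}$ becomes $iR(z)e^{-2zx-4i\alpha tz^{2}+8\beta tz^{3}}$, matching the lower-triangular jump on $\gamma_1$ stated in Proposition \ref{RH4}. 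The same substitution in the $(1,2)$-entry on the lower contour gives the conjugate exponent $2zx+4i\alpha tz^{2}-8\beta tz^{3}$, producing the upper-triangular jump on $\gamma_2$.

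The one step that requires care (and which I expect to be the main, if minor, obstacle) is the \emph{orientation} of the jump contours. Under $w=iz$, tangent vectors are rotated by $+\pi/2$, so the upward-oriented contours $i\gamma_1,i\gamma_2$ pull back to contours on $\gamma_1,\gamma_2$ oriented to the right, consistent with the convention fixed just before Proposition \ref{RH3}. Consequently the $+$ and $-$ boundary values of $M^{(2)}$ along $i\gamma_j$ correspond precisely to the $+$ and $-$ boundary values of $M^{(3)}$ along $\gamma_j$, so the relation $M_+^{(2)}=M_-^{(2)}V_2$ transports without a sign flip to $M_+^{(3)}=M_-^{(3)}V_3$ with $V_3$ as written. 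This completes the verification.
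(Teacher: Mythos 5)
Your proposal is correct and follows essentially the same route as the paper: substitute $w=iz$ into the jump relation of Proposition \ref{RH3}, compute $2i(xw+2\alpha tw^{2}+4\beta tw^{3})\big|_{w=iz}=-2zx-4i\alpha tz^{2}+8\beta tz^{3}$, and set $R(z)=2r(iz)$, with analyticity and normalization transported trivially by the change of variable. Your extra check that the rightward orientation of $\gamma_1\cup\gamma_2$ pulls back to the upward orientation of $i\gamma_1\cup i\gamma_2$ (so the $\pm$ boundary values match without a sign flip) is a point the paper leaves implicit, but it agrees with the paper's conventions.
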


According to Eq.~(\ref{fanyan-3}), we can recover $q(x,t)$ by the following formula:
\bee\label{fanyan-4}
q(x,t)=-2\lim\limits_{z\rightarrow\infty}(z M^{(3)}(x,t;z))_{12}.
\ene

\section{Large-space asymptotics of the potential $q(x,0)$ as $x\to-\infty$}

Recently, Girotti {\it et al}~\cite{Girotti-1} studied the asymptotic behavior of the soliton gas of KdV equation when $t=0$. Similarly, we will study the asymptotic behavior of soliton gas $q(x):=q(x,0)$ of the Hirota equation at $t=0$. Let $Y(x; z)=M^{(3)}(x,0;z)$. Then we construct the following Riemann-Hilbert problem:

\begin{prop}\label{RH5} 
Find a $2\times 2$ matrix function $Y(x;z)$ that satisfies the following properties:

\begin{itemize}

 \item {} Analyticity: $Y(x;z)$ is analytic in $\mathbb{C}\setminus(\gamma_1\cup \gamma_2)$ and takes continuous boundary values on $\gamma_1\cup \gamma_2$.

 \item {} Jump condition: The boundary values on the jump contour $\gamma_1\cup\gamma_2$ are defined as
 \bee
Y_+(x;z)=Y_-(x;z)V_4(x;z),\quad z\in \gamma_1\cup \gamma_2,
\ene
where
\bee
V_4(x;z)
=\begin{cases}
\left[\!\!\begin{array}{cc}
1& 0  \vspace{0.05in}\\
iR(z)e^{-2zx}& 1
\end{array}\!\!\right],\quad z\in \gamma_1,\vspace{0.05in}\\
\left[\!\!\begin{array}{cc}
1& iR(z)e^{2zx}  \vspace{0.05in}\\
0& 1
\end{array}\!\!\right],\quad z\in \gamma_2;
\end{cases}
\ene

 \item {} Normalization: $Y(x;z)=\mathbb{I}+\mathcal{O}(z^{-1}),\quad z\rightarrow\infty.$

\end{itemize}
\end{prop}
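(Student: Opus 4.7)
The plan is to derive Riemann-Hilbert Problem~\ref{RH5} by direct specialization of Riemann-Hilbert Problem~\ref{RH4} at $t=0$. Define $Y(x;z) := M^{(3)}(x,0;z)$ and verify each of the three listed properties in turn, using the fact that $M^{(3)}$ has already been constructed in the previous section as a sequence of explicit and invertible transformations of the original pure-soliton matrix $M$.

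For the analyticity and continuity of boundary values: since $M^{(3)}(x,t;z)$ is analytic in $\mathbb{C}\setminus(\gamma_1\cup\gamma_2)$ and takes continuous boundary values on $\gamma_1\cup\gamma_2$ for each fixed $(x,t)$, the same conclusion holds for $Y(x;z)$ by restricting to $t=0$. For the jump relation: the $t$-dependent exponentials in $V_3(x,t;z)$ are $e^{-4i\alpha z^2 t + 8\beta z^3 t}$ on $\gamma_1$ and $e^{4i\alpha z^2 t - 8\beta z^3 t}$ on $\gamma_2$, both of which equal $1$ at $t=0$; hence $V_3(x,0;z)=V_4(x;z)$ term-by-term as written. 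For the normalization: $M^{(3)}(x,t;z)=\mathbb{I}+\mathcal{O}(z^{-1})$ as $z\to\infty$ holds uniformly in the other parameters, so it persists at $t=0$.

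There is no genuine obstacle in this verification; the statement functions as a reformulation that freezes the time variable to isolate the large-space problem. The real difficulty, deferred to the rest of Section~3, will be to extract the $x\to-\infty$ asymptotics of $Y(x;z)$: the off-diagonal entries $iR(z)e^{\pm 2zx}$ on the jump contours grow exponentially as $x\to-\infty$, so a direct small-norm Neumann series is not available. I would expect the main obstacle to be the construction of a scalar $g$-function associated with the bands $[a,b]\cup[-b,-a]$ that conjugates these growing entries into controllable ones, followed by opening of lenses along $\gamma_1\cup\gamma_2$, producing an outer model problem solvable in terms of theta functions on the hyperelliptic surface branched at $\{\pm a,\pm b\}$, plus local parametrices at the endpoints. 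Matching the outer model to $Y$ should then yield the Jacobi elliptic wave (\ref{ps}) at $t=0$ with the $\mathcal{O}(|x|^{-1})$ correction announced in the abstract.
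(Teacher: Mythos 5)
Your proposal is correct and coincides with the paper's own (implicit) justification: the paper simply sets $Y(x;z)=M^{(3)}(x,0;z)$, and the jump matrix $V_4$ is exactly $V_3(x,0;z)$ since the $t$-dependent exponentials $e^{\mp(4i\alpha z^2-8\beta z^3)t}$ reduce to $1$ at $t=0$. Your remarks on the forthcoming $g$-function, lens opening, theta-function outer parametrix, and endpoint local parametrices also match the actual structure of the remainder of Section~3.
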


According to Eq.~(\ref{fanyan-4}), we can recover $q(x)$ by the following formula:
\bee\label{fanyan-5}
q(x)=-2\lim\limits_{z\rightarrow\infty}(z Y(x;z))_{12}.
\ene


To solve the Riemann-Hilbert problem \ref{RH5}, we make the following transformation:
\bee\label{Y1-Y}
Y_1(x;z)=Y(x;z)[f(z)e^{xg(z)}]^{\sigma_3},
\ene
where $f(z)$ and $g(z)$ are scalar functions which satisfy the following properties~\cite{Girotti-1}:
\begin{itemize}

 \item {} Analyticity: $f(z),\, g(z)$ are analytic in $\mathbb{C}\setminus(-b,b)$ and takes continuous boundary values on $(-b,b)$.

 \item {} Jump condition: The boundary values of $f(z),\, g(z)$ on the jump contour are defined as
\begin{align}
\begin{aligned}
&f_+(z)f_-(z)=\left\{\begin{array}{ll} R^{-1}(z), & z\in \gamma_1,\v\\
 R(z), & z\in \gamma_2,
 \end{array}\right.
 \v\\
&f_+(z)=f_-(z)e^{m_1},\quad z\in[-a,a],
\end{aligned}
\end{align}
and
\begin{align}
\begin{aligned}
&g_+(z)+g_-(z)=2z,\quad z\in \gamma_1\cup \gamma_2,\v\\
&g_+(z)-g_-(z)=m_2,\quad z\in [-a, a],
\end{aligned}
\end{align}
where
where
\begin{align}\no
\begin{aligned}
&m_1=-\frac{b}{K_1(m)}\int_{a}^{b}\dfrac{\ln R(\lambda)}{C_+(\lambda)}d\lambda,\quad C(z)=\sqrt{(z^2-a^2)(z^2-b^2)},\quad
m_2=-\dfrac{i\pi b}{K_1(m)},
\end{aligned}
\end{align}
and the first and second kinds of complete elliptic integrals are denoted as:
\bee
K_1(m):=\int_{0}^{1}\dfrac{ds}{\sqrt{(1-s^2)(1-m^2s^2)}}, \quad K_2(m):=\int_{0}^{1}\frac{\sqrt{1-m^2s^2}}{\sqrt{1-t^2}}ds.
\ene
with modulus $m=\frac{a}{b}$.

 \item {} Normalization: $f(z)=1+\mathcal{O}(z^{-1}),\,\, g(z)=\mathcal{O}(z^{-1}),\quad z\rightarrow\infty$.
\end{itemize}

The scalar functions $f(z)$ and $g(z)$ can also be accurately solved as~\cite{Girotti-1,Girotti-2}:
\begin{align}
\begin{aligned}
&f(z)=\exp\left[\frac{C(z)}{2\pi i}\left(\int_{\gamma_2-\gamma_1}\frac{\ln R(\lambda)}{C_+(\lambda)(\lambda-z)}d\lambda
+\int_{-a}^{a}\frac{m_1}{C(\lambda)(\lambda-z)}d\lambda
\right)\right],\v\\ \\
&g(z)=z-\int_{b}^{z}\frac{\lambda^2+b^2(K_2(m)/K_1(m)-1)}{C(\lambda)}d\lambda.
\end{aligned}
\end{align}

Then matrix function $Y_1(x;z)$ satisfies the following Riemann-Hilbert problem.

\begin{prop}\label{RH6} 
Find a $2\times 2$ matrix function $Y_1(x;z)$ that satisfies the following properties:

\begin{itemize}

 \item {} Analyticity: $Y_1(x;z)$ is analytic in $\mathbb{C}\setminus(-b,b)$ and takes continuous boundary values on $(-b,b)$.

 \item {} Jump condition: The boundary values on the jump contour are defined as
 \bee
Y_{1+}(x;z)=Y_{1-}(x;z)V_5(x;z),\quad z\in \gamma_1\cup[-a,a]\cup\gamma_2,
\ene
where
\bee
V_5(x;z)
=\left\{\begin{array}{ll}
\left[\!\!\begin{array}{cc}
\delta f(z) e^{x\Delta g(z)} & 0  \vspace{0.05in}\\
i& \delta^{-1}f(z)e^{-x\Delta g(z)}
\end{array}\!\!\right],& z\in \gamma_1,\vspace{0.05in}\\
\left[\!\!\begin{array}{cc}
e^{xm_2+m_1}& 0  \vspace{0.05in}\\
0& e^{-(xm_2+m_1)}
\end{array}\!\!\right],& z\in [-a,a],\vspace{0.05in}\\
\left[\!\!\begin{array}{cc}
\delta f(z) e^{x\Delta g(z)} & i  \vspace{0.05in}\\
0& \delta^{-1}f(z)e^{-x\Delta g(z)}
\end{array}\!\!\right],& z\in \gamma_2;
\end{array}\right.
\ene
with $\delta f(z)=f_+(z)/f_-(z),\,\Delta g(z)=g_+(z)-g_-(z).$

 \item {} Normalization: $Y_1(x;z)=\mathbb{I}+\mathcal{O}(z^{-1}),\quad z\rightarrow\infty.$
\end{itemize}
\end{prop}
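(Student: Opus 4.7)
The plan is to verify Proposition~\ref{RH6} by direct computation, treating the stated analyticity, jump, and normalization properties of the scalar $g$-functions $f(z)$ and $g(z)$ (whose explicit elliptic-integral formulas are provided) as given technical input. Since $Y(x;z)$ is analytic on $\mathbb{C}\setminus(\gamma_1\cup\gamma_2)$ and $f(z)e^{xg(z)}$ is analytic on $\mathbb{C}\setminus[-b,b]$, the transformation~(\ref{Y1-Y}) inherits analyticity on $\mathbb{C}\setminus[-b,b]$; the normalization $Y_1(x;z)\to\mathbb{I}$ as $z\to\infty$ follows from $Y\to\mathbb{I}$, $f\to 1$, $g\to 0$.

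The core of the proof is the verification of the jump matrix $V_5$ on each of the three subarcs of $[-b,b]$. Using $Y_+=Y_-V_4$ and the diagonal factor structure of the transformation, I would write
\begin{equation*}
V_5 \;=\; Y_{1-}^{-1}Y_{1+} \;=\; [f_-e^{xg_-}]^{-\sigma_3}\,V_4\,[f_+e^{xg_+}]^{\sigma_3}.
\end{equation*}
On $\gamma_1$ the diagonal $(1,1)$ and $(2,2)$ entries immediately collapse to $\delta f(z)e^{x\Delta g(z)}$ and its reciprocal. The key cancellation occurs in the off-diagonal $(2,1)$ entry, where
\begin{equation*}
iR(z)e^{-2zx}\,\cdot\, f_+(z)f_-(z)\,e^{x(g_+(z)+g_-(z))} \;=\; iR(z)e^{-2zx}\cdot R^{-1}(z)\,e^{2zx} \;=\; i,
\end{equation*}
using exactly the jump identities $f_+f_-=R^{-1}$ and $g_++g_-=2z$ on $\gamma_1$. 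A symmetric computation on $\gamma_2$ yields the stated upper-triangular jump with off-diagonal $i$, and on $[-a,a]$, where $Y$ has no jump so $V_4=\mathbb{I}$, the remaining jumps $f_+/f_-=e^{m_1}$ and $g_+-g_-=m_2$ collapse $V_5$ to $\mathrm{diag}(e^{xm_2+m_1},\,e^{-xm_2-m_1})$. These three matrix multiplications, together with the already-established analyticity and normalization, complete the verification.

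The main obstacle I foresee lies not in these matrix manipulations but in rigorously constructing the scalar $g$-functions $f$ and $g$ with \emph{exactly} the listed jump and normalization properties. This amounts to solving a scalar multiplicative RHP (for $f$) and an additive RHP (for $g$) on $\gamma_1\cup[-a,a]\cup\gamma_2$, both of which are naturally posed on the elliptic Riemann surface $w^2=(z^2-a^2)(z^2-b^2)$. The constants $m_1=-(b/K_1(m))\int_a^b\ln R(\lambda)/C_+(\lambda)\,d\lambda$ and $m_2=-i\pi b/K_1(m)$, together with the coefficient $b^2(K_2(m)/K_1(m)-1)$ appearing under the integral defining $g$, are forced by precisely two requirements: single-valuedness of $\log f$ and $g$ on $\mathbb{C}\setminus[-b,b]$ (equivalently, vanishing $B$-periods on the elliptic curve), and the normalization $g(z)=\mathcal{O}(z^{-1})$ at infinity (vanishing residue at infinity of the associated Abelian differential of the third kind). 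Verifying these two elliptic-integral identities is the technical heart of the proof; once they are in hand, the matrix-algebra step above is routine.
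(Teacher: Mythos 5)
Your proposal is correct and follows exactly the route the paper (implicitly) takes: the paper states the transformation $Y_1=Y[f e^{xg}]^{\sigma_3}$ together with the jump and normalization properties of $f$ and $g$, and the asserted RHP for $Y_1$ is precisely the conjugation identity $V_5=[f_-e^{xg_-}]^{-\sigma_3}V_4[f_+e^{xg_+}]^{\sigma_3}$ that you verify entry by entry on $\gamma_1$, $[-a,a]$, and $\gamma_2$ using $f_+f_-=R^{\mp1}$, $g_++g_-=2z$, $f_+/f_-=e^{m_1}$, and $g_+-g_-=m_2$. Your remark that the real technical content lies in the existence of $f$ and $g$ (single-valuedness and normalization forcing $m_1$, $m_2$, and the coefficient $b^2(K_2(m)/K_1(m)-1)$) is accurate; the paper likewise treats this as given by displaying the explicit elliptic-integral formulas.
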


\subsection{Opening lenses: jump matrix decompositions}

Based on the idea of Ref.~\cite{Girotti-1}, let us open the lens $O_1$ pass through points $z=a$ and $z=b$ and open the lens $O_2$ pass through points $z=-b$ and $z=-a$ (see Fig.~\ref{fig1}). We define a new function as follows:
\bee
R_{\pm}(z):=\pm R(z),\quad z\in\gamma_1\cup\gamma_2.
\ene

\begin{figure}[!t]
    \centering
 \vspace{-0.15in}
  {\scalebox{0.65}[0.65]{\includegraphics{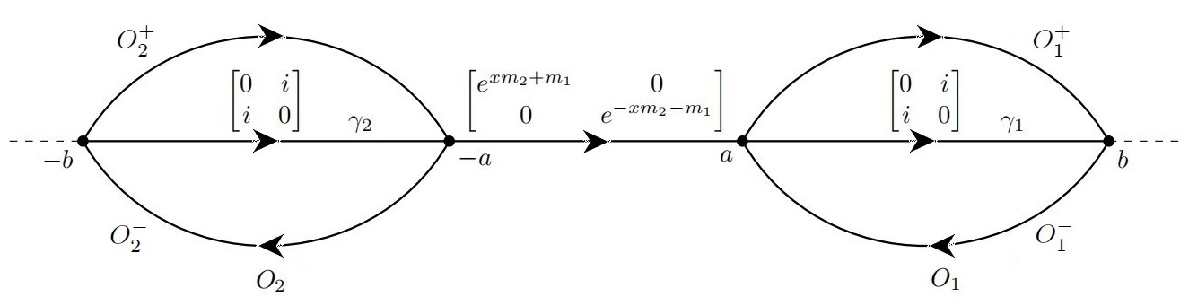}}}\hspace{-0.35in}
\vspace{0.05in}
\caption{The Riemann-Hilbert problem~\ref{RH7} for matrix function $Y_2(x; z)$. Opening lenses $O_1$ and $O_2$.}
   \label{fig1}
\end{figure}

Note that the jump matrix $V_5(x;z)|_{\gamma_1}$ has the following decomposition:
\begin{align}
\begin{aligned}
V_5(x;z)&=\left[\!\!\begin{array}{cc}
\delta f(z) e^{x\Delta g(z)} & 0  \vspace{0.05in}\\
i& \delta^{-1}f(z)e^{-x\Delta g(z)}
\end{array}\!\!\right] \vspace{0.05in}\\
&=\left[\!\!\begin{array}{cc}
1& \dfrac{ie^{x\Delta g(z)}}{R_{-}(z)f_-^2(z)}  \vspace{0.05in}\\
0& 1
\end{array}\!\!\right]\left[\!\!\begin{array}{cc}
0& i  \vspace{0.05in}\\
i& 0
\end{array}\!\!\right]\left[\!\!\begin{array}{cc}
1& -\dfrac{ie^{-x\Delta g(z)}}{R_{+}(z)f_+^2(z)}  \vspace{0.05in}\\
0& 1
\end{array}\!\!\right],\quad z\in \gamma_1,
\end{aligned}
\end{align}
and the jump matrix $V_5(x;z)|_{\gamma_2}$ has the following decomposition:
\begin{align}
\begin{aligned}
V_5(x;z)&=
\left[\!\!\begin{array}{cc}
\delta f(z) e^{x\Delta g(z)} & i  \vspace{0.05in}\\
0 & \delta^{-1}f(z)e^{-x\Delta g(z)}
\end{array}\!\!\right]\vspace{0.05in}\\
&=\left[\!\!\begin{array}{cc}
1& 0  \vspace{0.05in}\\
\dfrac{if_-^2(z)}{R_{-}(z)}e^{-x\Delta g(z)}& 1
\end{array}\!\!\right]\left[\!\!\begin{array}{cc}
0& i  \vspace{0.05in}\\
i& 0
\end{array}\!\!\right]\left[\!\!\begin{array}{cc}
1& 0  \vspace{0.05in}\\
-\dfrac{if_+^2(z)}{R_{+}(z)}e^{x\Delta g(z)}& 1
\end{array}\!\!\right],\quad z\in \gamma_2.
\end{aligned}
\end{align}

According to the above-mentioned  decompositions with opening lenses, we make the following transformation:
\bee\label{Y2-Y1}
Y_2(x;z)=\begin{cases}
Y_1(x;z)\left[\!\!\begin{array}{cc}
1& \dfrac{ie^{-2x(g(z)-z)}}{R(z)f^2(z)}  \vspace{0.05in}\\
0& 1
\end{array}\!\!\right],\qquad\quad z~\mathrm{in~the~upper}~(O_1^+)/\mathrm{lower} (O_1^-)~\mathrm{part~of~lens}~O_1,\vspace{0.05in}\\
Y_1(x;z)\left[\!\!\begin{array}{cc}
1& 0  \vspace{0.05in}\\
\dfrac{if^2(z)}{R(z)}e^{2x(g(z)-z)}& 1
\end{array}\!\!\right],\quad z~\mathrm{in~the~upper}~(O_2^+)/\mathrm{lower} (O_2^-)~\mathrm{part~of~lens}~O_2,\vspace{0.05in}\\
Y_1(x;z),\quad \mathrm{otherwise}.
\end{cases}
\ene

Then the matrix function $Y_2(x;z)$ satisfies the following Riemann-Hilbert problem.

\begin{prop}\label{RH7} 
Find a $2\times 2$ matrix function $Y_2(x;z)$ that satisfies the following properties:

\begin{itemize}

 \item {} Analyticity: $Y_2(x;z)$ is analytic in $\mathbb{C}\setminus((-b,b)\cup O_1\cup O_2)$ and takes continuous boundary values on $(-b,b)\cup O_1\cup O_2$.

 \item {} Jump condition: The boundary values on the jump contour are defined as
 \bee
Y_{2+}(x;z)=Y_{2-}(x;z)V_6(x;z),\quad z\in \gamma_1\cup[-a,a]\cup\gamma_2\cup O_1^+\cup O_1^-\cup O_2^+\cup O_2^-,
\ene
where the jump matrix is defined as
\bee\label{V6}
V_6(x;z)
=\begin{cases}
\left[\!\!\begin{array}{cc}
1& -\dfrac{ie^{-2x(g(z)-z)}}{R(z)f^2(z)}  \vspace{0.05in}\\
0& 1
\end{array}\!\!\right],\quad z\in O_1^+\cup O_1^-,\vspace{0.05in}\\
\left[\!\!\begin{array}{cc}
1& 0  \vspace{0.05in}\\
-\dfrac{if^2(z)}{R(z)}e^{2x(g(z)-z)} & 1
\end{array}\!\!\right],\quad z\in O_2^+\cup O_2^-,\v\\
\left[\!\!\begin{array}{cc}
e^{xm_2+m_1}& 0  \vspace{0.05in}\\
0& e^{-(xm_2+m_1)}
\end{array}\!\!\right],\quad z\in [-a,a],\vspace{0.05in}\\
\left[\!\!\begin{array}{cc}
0& i  \vspace{0.05in}\\
i& 0
\end{array}\!\!\right],\quad z\in  \gamma_1\cup\gamma_2;
\end{cases}
\ene
 \item {} Normalization: $Y_2(x;z)=\mathbb{I}+\mathcal{O}(z^{-1}),\quad z\rightarrow\infty.$
\end{itemize}
\end{prop}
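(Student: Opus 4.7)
The plan is to verify by direct calculation that the piecewise transformation (\ref{Y2-Y1}) converts the jumps of $Y_1(x;z)$ from Proposition~\ref{RH6} into those claimed in (\ref{V6}), while preserving analyticity on the complement of the new contour and the normalization at infinity. The statement is essentially a bookkeeping verification: once the two triangular--antidiagonal--triangular factorizations of $V_5|_{\gamma_1}$ and $V_5|_{\gamma_2}$ displayed just above (\ref{Y2-Y1}) are in hand, the transformation is tailored precisely so that the new jumps come out.

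First I would confirm the two matrix factorizations of $V_5(x;z)$ stated on $\gamma_1$ and on $\gamma_2$. These are purely algebraic and follow from $2\times 2$ matrix multiplication using the scalar jump identities $f_+f_- = R^{-1}$ on $\gamma_1$, $f_+f_- = R$ on $\gamma_2$, $g_+ + g_- = 2z$ on $\gamma_1\cup\gamma_2$, and the definitions $\delta f = f_+/f_-$, $\Delta g = g_+ - g_-$. Each factorization splits $V_5$ into a constant antidiagonal middle factor flanked by two unit-triangular factors, one of which is the boundary value from above on $\gamma_j$ and extends analytically into the interior of the upper half of the corresponding lens, the other from below extending into the lower half.

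Next I would compute the jump of $Y_2$ on each piece of its contour. On the lens boundaries $O_1^{\pm}$ and $O_2^{\pm}$, the matrix $Y_1$ is analytic across, so $Y_{2-}^{-1}Y_{2+}$ is simply the relevant triangular correction from (\ref{Y2-Y1}), producing the upper- or lower-triangular matrices in (\ref{V6}). The exponent $\pm 2x(g(z)-z)$ is reconciled with $\pm x\Delta g(z)$ appearing in the factorization by means of $g_+ + g_- = 2z$: approaching $\gamma_1$ from above gives $g(z)-z \to \Delta g(z)/2$, and similarly on $\gamma_2$. On $\gamma_1$ and $\gamma_2$ themselves, the two triangular corrections absorb the outer factors of the factorization of $V_5$, leaving only the constant antidiagonal middle factor, which is exactly the jump recorded in (\ref{V6}) on $\gamma_1\cup\gamma_2$. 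On $[-a,a]$, which lies outside both lenses, $Y_2=Y_1$, so the diagonal jump $\mathrm{diag}(e^{xm_2+m_1}, e^{-(xm_2+m_1)})$ carries over unchanged.

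Finally, analyticity of $Y_2$ in the lens interiors follows because the triangular factors in (\ref{Y2-Y1}) are analytic there: $f$ is an exponential of an analytic function and hence non-zero in the lens interior, and $R$ is tacitly assumed to admit an analytic non-vanishing extension off $\gamma_1\cup\gamma_2$ (which is standard for the opening-lens construction). The normalization $Y_2=\mathbb{I}+\mathcal{O}(z^{-1})$ at infinity is immediate because outside the lenses $Y_2=Y_1$. The main obstacle I foresee is the sign and exponent bookkeeping in the absorption step---correctly tracking which boundary value of $f^2$ and of $e^{xg}$ is used in each lens half and making sure the signs match (\ref{V6}) precisely---but no deep analytic input is required beyond the scalar model problem already solved for $f$ and $g$.
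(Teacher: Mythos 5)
Your proposal is correct and follows essentially the same route as the paper, which likewise presents the two triangular--antidiagonal--triangular factorizations of $V_5$ on $\gamma_1$ and $\gamma_2$ and then defines the transformation (\ref{Y2-Y1}) so that the outer factors are absorbed into the lens interiors, leaving the antidiagonal jump on $\gamma_1\cup\gamma_2$, the unchanged diagonal jump on $[-a,a]$, and the triangular jumps on $O_1^\pm\cup O_2^\pm$. Your reconciliation of $\pm 2x(g(z)-z)$ with $\pm x\Delta g(z)$ via $g_++g_-=2z$, and your remarks on analyticity and normalization, are exactly the bookkeeping the paper leaves implicit.
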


\begin{lemma}~\cite{Girotti-1}\label{le2} The following inequalities hold:
\begin{align}
\begin{aligned}
&\mathrm{Re}(g(z)-z)<0,\quad z\in O_1\setminus\{a,b\},\v\\
&\mathrm{Re}(z-g(z))<0,\quad z\in O_2\setminus\{-a,-b\}.
\end{aligned}
\end{align}

\end{lemma}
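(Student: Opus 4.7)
The plan is to reduce the two inequalities to a standard $g$-function argument for the soliton-gas condensate problem, exploiting the fact that $g(z)$ has been tailored precisely to make $\mathrm{Re}(g(z)-z)$ a suitable "phase" whose sign controls the steepest-descent deformation. Throughout I will focus on $O_1$; the claim for $O_2$ follows by the symmetry $g(-z)=-g(z)$ (which in turn follows from the reality/oddness of the integrand defining $g$) and the identification $O_2 = -O_1$.

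First I would establish the Schwarz reflection symmetry $g(\overline{z})=\overline{g(z)}$. Since the integrand $(\lambda^{2}+b^{2}(K_{2}(m)/K_{1}(m)-1))/C(\lambda)$ in the definition of $g$ is real on the real line away from $[-b,-a]\cup[a,b]$ (with $C(z)$ defined via its standard branch, real on $(-a,a)$), and since the contour of integration from $b$ to $z$ can be reflected, the symmetry is immediate. Combined with the jump $g_{+}(z)+g_{-}(z)=2z$ on $\gamma_{1}$, Schwarz symmetry gives $g_{-}(z)=\overline{g_{+}(z)}$ for real $z\in\gamma_{1}$, and hence
\[
\mathrm{Re}\bigl(g_{\pm}(z)-z\bigr)=0,\qquad z\in\gamma_{1}.
\]
An identical argument handles $\gamma_{2}$. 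Thus $\mathrm{Re}(g(z)-z)$ is a \emph{harmonic} function in $O_{1}\setminus\gamma_{1}$ which vanishes identically on the cut $\gamma_{1}$.

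Next I would compute the normal derivative of $\mathrm{Re}(g(z)-z)$ across $\gamma_{1}$. From $g(z)=z-\int_{b}^{z}\frac{\lambda^{2}+b^{2}(K_{2}(m)/K_{1}(m)-1)}{C(\lambda)}\,d\lambda$ we have
\[
g'(z)-1=-\frac{z^{2}+b^{2}\bigl(K_{2}(m)/K_{1}(m)-1\bigr)}{C(z)}.
\]
On $\gamma_{1}$ the boundary value $C_{+}(z)$ is purely imaginary, while the numerator is real, so $g'_{+}(z)-1$ is purely real; equivalently, the tangential derivative of $\mathrm{Re}(g-z)$ along $\gamma_{1}$ vanishes (confirming the previous step), while its normal derivative equals $\mathrm{Im}(i(g'_{+}(z)-1))$ up to a sign. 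The constant $K_{2}(m)/K_{1}(m)-1$ is precisely the value forced by the normalization $g(z)=O(z^{-1})$ at infinity, and this choice is what makes the real quadratic $z^{2}+b^{2}(K_{2}/K_{1}-1)$ possess its zero inside $(-a,a)$. A direct computation then shows that $\mathrm{Re}(g(z)-z)$ decreases strictly as one moves off $\gamma_{1}$ (in either the upward or the downward normal direction into $O_{1}$), so the desired inequality holds in some open neighborhood of $\gamma_{1}\setminus\{a,b\}$.

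Finally, to promote the local statement to all of $O_{1}\setminus\{a,b\}$, I would use the maximum principle for the harmonic function $u(z):=\mathrm{Re}(g(z)-z)$ on $O_{1}^{+}$ and on $O_{1}^{-}$ separately. On the portion of the lens boundary lying along $\gamma_{1}$ we have $u\equiv 0$, so I need to control $u$ on the exterior boundary arcs of the lens. By choice of the lens (which the author is free to make as narrow as needed near $\gamma_{1}$), any local decrease from $0$ along the normal direction can be propagated through the interior by harmonicity; combined with the endpoint analysis at $z=a,b$ (where the standard square-root behavior of $C(z)$ gives $g(z)-z=O((z-b)^{3/2})$, so $u$ extends continuously with $u(b)=u(a)=0$), this forces $u<0$ throughout $O_{1}\setminus\{a,b\}$. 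The cited analogue in~\cite{Girotti-1} for the KdV/mKdV case follows exactly this pattern, and the only novelty here is that the cubic phase does not enter (the $x\to-\infty$ asymptotics fix $t=0$), so that the signs are determined entirely by $g$.

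The main obstacle is pinning down the endpoint and propagation step: showing that the narrow-lens argument can indeed be carried out globally, i.e. that $u(z)$ does not vanish on any closed curve inside $O_{1}^{\pm}$. Once the sign of the normal derivative across $\gamma_{1}$ is established, this is a routine harmonic-function/maximum-principle matter, but it does require choosing the lenses $O_{1}$ and $O_{2}$ sufficiently thin that they lie inside the sub-level set $\{u<0\}$—a freedom that is allowed by the construction of the contour deformation.
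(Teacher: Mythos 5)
The paper does not prove this lemma at all: it is imported from \cite{Girotti-1} with a citation and no argument, so there is no internal proof to compare against. Your outline reproduces the standard argument from that reference, and its skeleton is correct: the jump relation $g_++g_-=2z$ on $\gamma_1\cup\gamma_2$ combined with Schwarz symmetry forces $\mathrm{Re}(g_\pm(z)-z)=0$ on the cuts; $\mathrm{Re}(g(z)-z)$ is harmonic off the cuts; its normal derivative on the open arcs has a strict sign; the endpoint behavior is $O((z\mp b)^{3/2})$, $O((z\mp a)^{3/2})$; and the lenses are then drawn inside the resulting open sublevel set. That last step is the right mechanism --- the maximum principle is not really what does the work, and you correctly fall back on ``choose the lens thin enough'' at the end.

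Two points need repair before this is a proof. First, a sign-of-reality slip: on $\gamma_1$ the quantity $g_+'(z)-1=-\bigl(z^2+b^2(K_2(m)/K_1(m)-1)\bigr)/C_+(z)$ is purely \emph{imaginary}, not purely real, since the numerator is real and $C_+$ is purely imaginary there ($(z^2-a^2)(z^2-b^2)<0$ on $(a,b)$). It is exactly this imaginarity that makes the tangential derivative of $\mathrm{Re}(g-z)$ vanish and concentrates all the information in the normal derivative $-\,\mathrm{Im}\bigl(g_+'(z)-1\bigr)$; as written, ``purely real'' would contradict the conclusion you draw from it in the same sentence. Second, and more substantively, the decisive computation is asserted rather than performed: everything reduces to showing $P(z):=z^2+b^2\bigl(K_2(m)/K_1(m)-1\bigr)>0$ on $(a,b)$ (and on $(-b,-a)$) together with the orientation of $\mathrm{Im}\,C_+$ on each cut. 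Positivity of $P$ there is equivalent to its root lying in $(-a,a)$, which rests on the classical inequality $K_2(m)>(1-m^2)K_1(m)$ for $0<m<1$; you state the conclusion but give no argument, and without it the sign of the normal derivative --- hence the direction of every inequality in the lemma --- is undetermined. Relatedly, the reduction of the $O_2$ case to the $O_1$ case via $g(-z)=-g(z)$ is not immediate: with $f$ the (even) integrand one only gets $g(-z)+g(z)=\int_{-b}^{b}f(\lambda)\,d\lambda$, a constant whose real part must be checked to vanish before real parts can be transferred. It is cleaner to run the same normal-derivative computation directly on $\gamma_2$, where the flip in the sign of $\mathrm{Im}\,C_+$ relative to $\gamma_1$ is precisely what turns $\mathrm{Re}(g-z)<0$ into $\mathrm{Re}(z-g)<0$.
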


According to Lemma \ref{le2}, we know that the off-diagonal entries of the jump matrix defined by Eq.~(\ref{V6}) along the
upper and lower lenses $O_1,O_2$ are exponentially decay when $x\to-\infty$.

\subsection{The outer matrix  parametrix}

We construct a new Riemann-Hilbert problem with the matrix function having only jumps on $(-b,b)$.

\begin{prop}\label{RH8} 
Find a $2\times 2$ matrix function $Y^{o}(x;z)$ that satisfies the following properties:

\begin{itemize}

 \item {} Analyticity: $Y^o(x;z)$ is analytic in $\mathbb{C}\setminus(-b,b)$ and takes continuous boundary values on $(-b,b)$.

 \item {} Jump condition: The boundary values on the jump contour are defined as
 \bee
Y^o_{+}(x;z)=Y^o_{-}(x;z)V_7(x;z),\quad z\in \gamma_1\cup[-a,a]\cup\gamma_2,
\ene
where
\bee
V_7(x;z)
=\begin{cases}
\left[\!\!\begin{array}{cc}
0& i  \vspace{0.05in}\\
i& 0
\end{array}\!\!\right],\quad z\in \gamma_1\cup \gamma_2,\vspace{0.05in}\\
\left[\!\!\begin{array}{cc}
e^{xm_2+m_1}& 0  \vspace{0.05in}\\
0& e^{-(xm_2+m_1)}
\end{array}\!\!\right],\quad z\in [-a,a].
\end{cases}
\ene

 \item {} Normalization: $Y^o(x;z)=\mathbb{I}+\mathcal{O}(z^{-1}),\quad z\rightarrow\infty.$
\end{itemize}
\end{prop}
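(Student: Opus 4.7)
The plan is to solve Riemann--Hilbert problem~\ref{RH8} explicitly via finite-gap (algebro-geometric) methods on the elliptic Riemann surface $\mathcal{R}\colon w^2=C(z)^2=(z^2-a^2)(z^2-b^2)$ of genus~$1$. Because the jump matrix $V_7$ is \emph{constant in $z$} on each piece of its contour (only parametrically $x$-dependent), $Y^o$ is a classical ``outer'' parametrix that can be assembled as the product of a scalar Szeg\H{o}-type factor absorbing the anti-diagonal jump on $\gamma_1\cup\gamma_2$ and a Jacobi-theta factor absorbing the diagonal jump on $[-a,a]$.

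First I would introduce the quartic-root scalar
\[
\gamma(z)=\left(\frac{(z-b)(z+a)}{(z-a)(z+b)}\right)^{1/4},
\]
cut along $\gamma_1\cup\gamma_2$ with the principal branch normalized so that $\gamma(z)\to 1$ as $z\to\infty$. A short direct computation yields $\gamma_+/\gamma_-=i$ across $\gamma_1\cup\gamma_2$ and no jump across $[-a,a]$. Writing $U=\left(\begin{smallmatrix}1&1\\1&-1\end{smallmatrix}\right)$, which diagonalizes the anti-diagonal jump as $U^{-1}\!\left(\begin{smallmatrix}0&i\\i&0\end{smallmatrix}\right)\!U=\mathrm{diag}(i,-i)$, the matrix
\[
\mathcal{N}(z)=U\,\mathrm{diag}\!\left(\gamma(z),\gamma(z)^{-1}\right)U^{-1}=\frac12\left(\!\!\begin{array}{cc}\gamma(z)+\gamma(z)^{-1}&\gamma(z)-\gamma(z)^{-1}\\ \gamma(z)-\gamma(z)^{-1}&\gamma(z)+\gamma(z)^{-1}\end{array}\!\!\right)
\]
satisfies the required anti-diagonal jump on $\gamma_1\cup\gamma_2$, is holomorphic across $[-a,a]$, has $\det\mathcal{N}\equiv 1$, and $\mathcal{N}(\infty)=\mathbb{I}$. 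This would already be the parametrix if the $[-a,a]$ jump were absent; however $\mathcal{N}$ carries quartic singularities at $\pm a,\pm b$ that must be cancelled in the next step.

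Second, I would incorporate the diagonal jump $e^{(xm_2+m_1)\sigma_3}$ on $[-a,a]$ via the Jacobi theta function of $\mathcal{R}$. Let $\omega=dz/C(z)$ be the holomorphic differential normalized so that $\oint_{\mathfrak{a}}\omega=1$ around the cycle $\mathfrak{a}$ encircling $\gamma_1$; its $\mathfrak{b}$-period $\tau$ has positive imaginary part and is naturally expressed through $K_1(m), K_2(m)$, in agreement with the appearance of these integrals in $m_1, m_2$. With $u(z)=\int_b^z\omega$ the Abel map and $\theta(\cdot)=\theta(\cdot\,|\,\tau)$, I seek $Y^o$ in the Baker--Akhiezer form
\[
Y^o(x;z)=\mathcal{C}(x)\,\mathcal{N}(z)\,\mathrm{diag}\!\left(\frac{\theta(u(z)+d+cx)}{\theta(u(z)+d)},\;\frac{\theta(-u(z)+d+cx)}{\theta(-u(z)+d)}\right),
\]
where the period constant $c$ is tuned, via the quasi-periodicity $\theta(u+\tau)=e^{-i\pi\tau-2\pi iu}\theta(u)$ that is triggered by crossing $[-a,a]$ (i.e.\ traversing the $\mathfrak{b}$-cycle), so as to reproduce precisely the factor $e^{\pm(xm_2+m_1)}$ using the identification $m_2=-i\pi b/K_1(m)$; the shift $d$ is a nonsingular half-period chosen so that zeros of the denominator theta functions exactly cancel the quartic branch-point singularities of $\mathcal{N}$ at $\pm a,\pm b$; and the $x$-dependent diagonal prefactor $\mathcal{C}(x)$ is an explicit ratio of theta values pinned down by the normalization $Y^o(x;z)\to\mathbb{I}$ at $z=\infty$.

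The main obstacle is to satisfy these three requirements simultaneously---exact matching of the exponential jump factor on $[-a,a]$, cancellation of branch-point singularities of $\mathcal{N}$ by zeros of the theta denominators, and correct normalization at infinity---while preserving $\det Y^o\equiv 1$; this is the familiar delicate bookkeeping of algebro-geometric parametrices, handled by the theory of Baker--Akhiezer functions on a genus-$1$ surface. Once the construction is carried out, \emph{uniqueness} follows from the standard Liouville argument: the ratio of two solutions has no jumps, extends holomorphically through the branch points (removable singularities), stays bounded, and equals $\mathbb{I}$ at $\infty$, hence is identically $\mathbb{I}$.
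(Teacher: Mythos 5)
Your construction is essentially the paper's: your $\gamma(z)$ is the paper's $m_5(z)$, your Abel map $u$ is its $m_3$, and the theta-quotient Baker--Akhiezer ansatz with a (quarter-)period shift chosen to kill the fourth-root singularities is exactly how the paper's entries $\psi_{ij}$ are built, with the diagonal jump on $[-a,a]$ verified through the quasi-periodicity $\vartheta_3(z+h\tau+l;\tau)=e^{-(2\pi hz+\pi h^2\tau)}\vartheta_3(z;\tau)$. The only real difference is that the paper hard-codes the shifts entrywise (with $+\tfrac14$ in one row and $-\tfrac14$ in the other, rather than a single diagonal theta factor multiplying $\mathcal{N}(z)$ on the right), which is precisely the ``delicate bookkeeping'' you defer.
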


Notice that the Riemann-Hilbert problem \ref{RH8} is similar to the one arising from the long-time asymptotic behavior of KdV equation with step-like initial conditions~\cite{kdv-13}.
In order to solve the Riemann-Hilbert problem \ref{RH8}, we will define a two-sheeted Riemann surface $\Omega$ of genus $1$ related to the function $C(z)$ (see Fig.~\ref{fig2}),
\bee\label{Omega}
\Omega:=\left\{(\zeta_1,\zeta_2)\in\mathbb{C}^2|\zeta_2^2=C^2(z)=(z^2-a^2)(z^2-b^2)\right\}.
\ene

\begin{figure}[!t]
    \centering
 \vspace{-0.15in}
  {\scalebox{0.55}[0.55]{\includegraphics{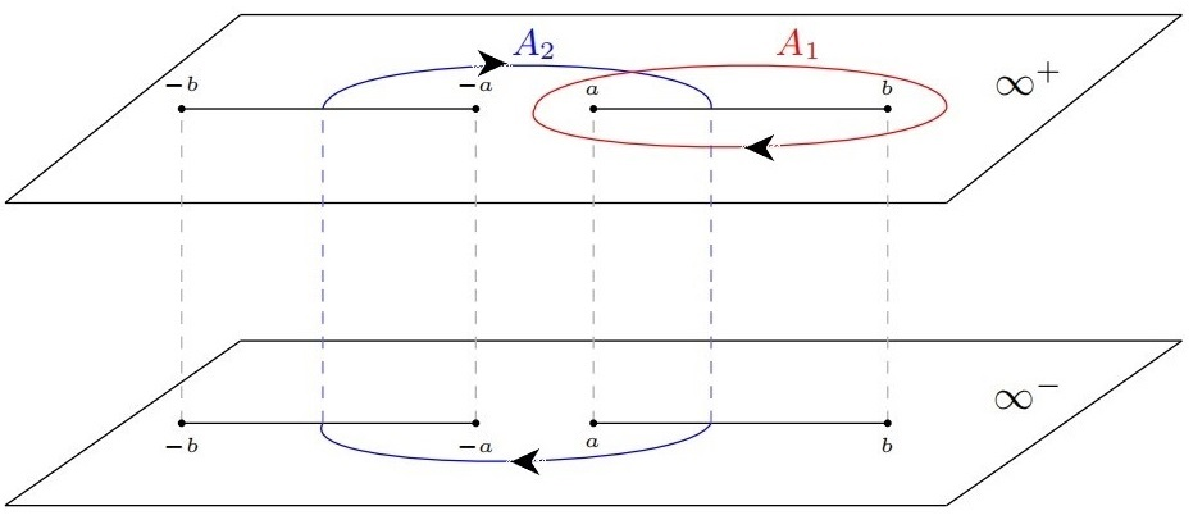}}}\hspace{-0.35in}
\vspace{0.1in}
\caption{The two-sheeted Riemann surface with genus 1, $\Omega$, given by Eq.~(\ref{Omega}).}
   \label{fig2}
\end{figure}

Note that
\begin{align}
\begin{aligned}
&\oint_{A_1}\frac{m_2}{4\pi iC(\lambda)}d\lambda=\dfrac{iK_1(\sqrt{1-m^2})}{2K_1(m)}=:\tau,\v\\
&\oint_{A_2}\frac{m_2}{4\pi iC(\lambda)}d\lambda=1.
\end{aligned}
\end{align}

Let
\begin{align}
\begin{aligned}
&m_3(z)=\int_{b}^{z}\frac{m_2}{4\pi iC(\lambda)}d\lambda,\v\\
&m_4(z)=\int_{b}^{z}\frac{\lambda^2+b^2(\frac{K_2(m)}{K_1(m)}-1)}{C(\lambda)}d\lambda,\v\\
&m_5(z)=\left(\frac{(z-b)(z+a)}{(z-a)(z+b)}\right)^{\frac14},
\end{aligned}
\end{align}
Then one has the following jump relations:
\begin{align}
\begin{aligned}
&m_{3+}(z)+m_{3-}(z)=0,\quad z\in\gamma_1,\v\\
&m_{3+}(z)-m_{3-}(z)=-\tau,\quad z\in(-a,a),\v\\
&m_{3+}(z)+m_{3-}(z)=-1,\quad z\in\gamma_2,\v\\
&m_{3+}(z)-m_{3-}(z)=0,\quad z\in[b,+\infty),\v\\
&m_{5+}(z)=im_{5-}(z),\quad z\in\gamma_1\cup\gamma_2,
\end{aligned}
\end{align}

Based on Ref.~\cite{Girotti-1}, one has the following lemma:

\begin{lemma}\label{le4} These functions:
\begin{align}
\begin{aligned}
&\psi_{11}(x;z)=(m_5(z)+m_5^{-1}(z))\dfrac{\vartheta_3(0;\tau)\vartheta_3(m_3(z)+\frac{xm_2+m_1}{2\pi i}+\frac14;\tau)}{\vartheta_3(m_3(z)+\frac14;\tau)\vartheta_3(\frac{xm_2+m_1}{2\pi i};\tau)},\v\\
&\psi_{21}(x;z)=(m_5(z)-m_5^{-1}(z))\dfrac{\vartheta_3(0;\tau)\vartheta_3(m_3(z)+\frac{xm_2+m_1}{2\pi i}-\frac14;\tau)}{\vartheta_3(m_3(z)-\frac14;\tau)\vartheta_3(\frac{xm_2+m_1}{2\pi i};\tau)},\v\\
&\psi_{12}(x;z)=(m_5(z)-m_5^{-1}(z))\dfrac{\vartheta_3(0;\tau)\vartheta_3(-m_3(z)+\frac{xm_2+m_1}{2\pi i}+\frac14;\tau)}{\vartheta_3(-m_3(z)+\frac14;\tau)\vartheta_3(\frac{xm_2+m_1}{2\pi i};\tau)},\v\\
&\psi_{22}(x;z)=(m_5(z)+m_5^{-1}(z))\dfrac{\vartheta_3(0;\tau)\vartheta_3(-m_3(z)+\frac{xm_2+m_1}{2\pi i}-\frac14;\tau)}{\vartheta_3(-m_3(z)-\frac14;\tau)\vartheta_3(\frac{xm_2+m_1}{2\pi i};\tau)},
\end{aligned}
\end{align}
have the following jump relations:
\begin{align}
\begin{aligned}
&\psi_{11+}(x;z)=i\psi_{12-}(x;z),\quad \psi_{21+}(x;z)=i\psi_{22-}(x;z),\quad z\in\gamma_1\cup\gamma_2,\v\\
&\psi_{12+}(x;z)=i\psi_{11-}(x;z),\quad \psi_{22+}(x;z)=i\psi_{21-}(x;z),\quad z\in\gamma_1\cup\gamma_2,\v\\
&\psi_{11+}(x;z)=\psi_{11-}(x;z)e^{xm_2+m_1},\quad \psi_{12+}(x;z)=\psi_{12-}(x;z)e^{-(xm_2+m_1)},\quad z\in(-a,a),\v\\
&\psi_{21+}(x;z)=\psi_{21-}(x;z)e^{xm_2+m_1},\quad \psi_{22+}(x;z)=\psi_{22-}(x;z)e^{-(xm_2+m_1)},\quad z\in(-a,a).
\end{aligned}
\end{align}
where the third Jacobi theta function is defined as
\bee
\vartheta_3(z;\tau)=\sum_{n=-\infty}^{+\infty}e^{i(2\pi nz+\pi n^2\tau)},\quad z\in\mathbb{C},
\ene
which has the periodicity condition:
\bee\label{peri-cond}
\vartheta_3(z+h\tau+l;\tau)=e^{-(2\pi hz+\pi h^2\tau)}\vartheta_3(z;\tau),\quad h,\, l\in\mathbb{Z}.
\ene
\end{lemma}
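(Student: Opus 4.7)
The plan is to verify each of the claimed jump relations by direct substitution, using three inputs: the stated jump $m_{5+}=im_{5-}$ on $\gamma_1\cup\gamma_2$, the listed jumps of $m_3(z)$ on $\gamma_1$, $\gamma_2$ and $(-a,a)$, and the quasi-periodicity \ref{peri-cond} of $\vartheta_3$. Each $\psi_{ij}$ factors as an $m_5$-prefactor multiplied by a ratio of theta values whose arguments are affine in $\pm m_3(z)$; the construction is engineered so that the jumps of the two pieces combine to give exactly the claimed transitions.

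First I would assemble the algebraic ingredients. From $m_{5+}=im_{5-}$ one obtains $m_{5+}+m_{5+}^{-1}=i(m_{5-}-m_{5-}^{-1})$ and $m_{5+}-m_{5+}^{-1}=i(m_{5-}+m_{5-}^{-1})$, which interchange precisely the prefactors of $\psi_{11}\leftrightarrow\psi_{12}$ and $\psi_{21}\leftrightarrow\psi_{22}$. The jumps $m_{3+}+m_{3-}=0$ on $\gamma_1$, $m_{3+}+m_{3-}=-1$ on $\gamma_2$, and $m_{3+}-m_{3-}=-\tau$ on $(-a,a)$ are already recorded immediately before the lemma as period integrals along the cycles $A_1,A_2$ of the Riemann surface $\Omega$, and I would use them as given.

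Next, for the jumps across $\gamma_1\cup\gamma_2$, I would substitute $m_{3+}=-m_{3-}$ (on $\gamma_1$) or $m_{3+}=-1-m_{3-}$ (on $\gamma_2$) into every theta argument of $\psi_{11+}$; in the latter case the additional integer shift is absorbed by the $l$-periodicity $\vartheta_3(z+l;\tau)=\vartheta_3(z;\tau)$ from \ref{peri-cond}. After this substitution, each theta argument of $\psi_{11+}$ matches the corresponding theta argument of $\psi_{12-}$, and combining with the $m_5$-identity above yields $\psi_{11+}=i\psi_{12-}$. The three companion relations $\psi_{12+}=i\psi_{11-}$, $\psi_{21+}=i\psi_{22-}$, $\psi_{22+}=i\psi_{21-}$ follow by identical manipulations, the $\pm\frac14$ shifts in the theta arguments playing symmetric roles between rows and columns.

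The main technical step is the arc $(-a,a)$, where $m_5(z)$ is single-valued (no branch is crossed), so only the theta ratio transforms. Using $m_{3+}=m_{3-}-\tau$ together with the $\tau$-quasi-periodicity in \ref{peri-cond}, the numerator and denominator theta factors of $\psi_{11+}$ each acquire an exponential multiplier; taking their quotient cancels the contributions from $m_{3-}$, from the fixed shift $\frac14$, and from the $\pi\tau$ terms, leaving precisely the factor $e^{xm_2+m_1}$ once the $x$-dependent shift $\frac{xm_2+m_1}{2\pi i}$ in the theta argument is accounted for. The analogous computation for $\psi_{12},\psi_{22}$, where $m_3(z)$ enters with the opposite sign, produces $e^{-(xm_2+m_1)}$, while $\psi_{21}$ parallels $\psi_{11}$. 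The main bookkeeping obstacle will be the phase arithmetic in the quasi-periodicity formula: one must interpret \ref{peri-cond} with the standard convention $\vartheta_3(z+h\tau+l;\tau)=e^{-i(2\pi hz+\pi h^2\tau)}\vartheta_3(z;\tau)$ to ensure that the combination of $m_{3+}-m_{3-}=-\tau$ with the $1/(2\pi i)$ in the $x$-shift yields $e^{+(xm_2+m_1)}$ rather than its inverse or a pure phase; beyond this the proof is routine algebra.
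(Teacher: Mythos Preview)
Your proposal is correct and follows essentially the same approach as the paper's own proof: verify the jumps on $\gamma_1\cup\gamma_2$ using $m_{5+}=im_{5-}$ together with $m_{3+}=-m_{3-}$ (or $-1-m_{3-}$, absorbed by integer periodicity), and verify the jumps on $(-a,a)$ using $m_{3+}=m_{3-}-\tau$ together with the $\tau$-quasi-periodicity of $\vartheta_3$, exactly as the paper does for $\psi_{11}$ before declaring the remaining cases analogous. Your observation that \eqref{peri-cond} must be read with the standard factor $e^{-i(2\pi hz+\pi h^2\tau)}$ is apt---the paper's displayed formula omits the $i$, but its actual computation in the proof uses the correct version, and the $\tau$-dependent prefactors cancel in the ratio in any case.
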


\begin{proof}
If $z\in\gamma_1\cup\gamma_2$, we have
\begin{align}\no
\begin{aligned}
\psi_{11+}(x;z)&=(m_{5+}(z)+m_{5+}^{-1}(z))\dfrac{\vartheta_3(0;\tau)\vartheta_3(m_{3+}(z)+\frac{xm_2+m_1}{2\pi i}+\frac14;\tau)}{\vartheta_3(m_{3+}(z)+\frac14;\tau)\vartheta_3(\frac{xm_2+m_1}{2\pi i};\tau)}\v\\
&=i(m_{5-}(z)-m_{5-}^{-1}(z))\dfrac{\vartheta_3(0;\tau)\vartheta_3(-m_{3-}(z)+\frac{xm_2+m_1}{2\pi i}+\frac14;\tau)}{\vartheta_3(-m_{3-}(z)+\frac14;\tau)\vartheta_3(\frac{xm_2+m_1}{2\pi i};\tau)}\v\\
&=i\psi_{12-}(x;z).
\end{aligned}
\end{align}

If $z\in(-a,a)$, we have
\begin{align}\label{le4-1}
\begin{aligned}
\psi_{11+}(x;z)&=(m_{5}(z)+m_{5}^{-1}(z))\dfrac{\vartheta_3(0;\tau)\vartheta_3(m_{3+}(z)+\frac{xm_2+m_1}{2\pi i}+\frac14;\tau)}{\vartheta_3(m_{3+}(z)+\frac14;\tau)\vartheta_3(\frac{xm_2+m_1}{2\pi i};\tau)}\v\\
&=(m_{5}(z)+m_{5}^{-1}(z))\dfrac{\vartheta_3(0;\tau)\vartheta_3(m_{3-}(z)-\tau+\frac{xm_2+m_1}{2\pi i}+\frac14;\tau)}{\vartheta_3(m_{3-}(z)-\tau+\frac14;\tau)\vartheta_3(\frac{xm_2+m_1}{2\pi i};\tau)}.
\end{aligned}
\end{align}

According to Eq.~(\ref{peri-cond}), we have
\begin{align}\label{le4-2}
\begin{aligned}
&\dfrac{\vartheta_3(m_{3-}(x;z)-\tau+\frac{xm_2+m_1}{2\pi i}+\frac14;\tau)}{\vartheta_3(m_{3-}(z)-\tau+\frac14;\tau)}\v\\
=&\dfrac{e^{-2i\pi\tau+2\pi i(m_{3-}(z)+\frac{xm_2+m_1}{2\pi i}+\frac14)}\vartheta_3(m_{3-}(z)+\frac{xm_2+m_1}{2\pi i}+\frac14;\tau)}{e^{-2i\pi\tau+2\pi i(m_{3-}(z)+\frac14)}\vartheta_3(m_{3-}(z)-\tau+\frac14;\tau)}\v\\
=&\dfrac{\vartheta_3(m_{3-}(z)+\frac{xm_2+m_1}{2\pi i}+\frac14;\tau)}{\vartheta_3(m_{3-}(z)+\frac14;\tau)}e^{xm_2+m_1},
\end{aligned}
\end{align}

Substituting Eq.~(\ref{le4-2}) into Eq.~(\ref{le4-1}), we obtain
\bee
\psi_{11+}(x;z)=\psi_{11-}(x;z)e^{xm_2+m_1}.
\ene
Similarly, it can be proven that other equations also hold. Thus the proof is completed.
\end{proof}

\begin{propo} The solution of Riemann-Hilbert problem \ref{RH8} can be expressed in the following form.
\bee\label{Yoxz}
Y^o(x;z)=\frac12\left[\!\!\begin{array}{cc}
\psi_{11}(x;z)& \psi_{12}(x;z)  \vspace{0.05in}\\
\psi_{21}(x;z)& \psi_{22}(x;z)
\end{array}\!\!\right].
\ene

\end{propo}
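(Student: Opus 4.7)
The plan is to verify that the candidate matrix $Y^o(x;z)$ in Eq.~(\ref{Yoxz}) satisfies all three defining properties of Riemann-Hilbert problem \ref{RH8}: the jump conditions on $\gamma_1\cup[-a,a]\cup\gamma_2$, analyticity in $\mathbb{C}\setminus(-b,b)$, and the normalization $Y^o=\mathbb{I}+\mathcal{O}(z^{-1})$ as $z\to\infty$. By uniqueness (the jump matrix has determinant $1$ and the contour is compact away from regular points), matching all three characterises $Y^o$.

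First, I would verify the jump conditions, which is essentially an immediate consequence of Lemma \ref{le4}. Reading off the matrix relation $Y^o_+=Y^o_- V_7$ column by column, on $\gamma_1\cup\gamma_2$ one needs $\psi_{11+}=i\psi_{12-}$, $\psi_{21+}=i\psi_{22-}$, $\psi_{12+}=i\psi_{11-}$, and $\psi_{22+}=i\psi_{21-}$, while on $[-a,a]$ one needs $\psi_{j1\pm}=\psi_{j1\mp}e^{\pm(xm_2+m_1)}$ and $\psi_{j2\pm}=\psi_{j2\mp}e^{\mp(xm_2+m_1)}$ for $j=1,2$. The overall $\tfrac12$ prefactor cancels from both sides, so all of these relations are precisely the identities proved in Lemma \ref{le4}.

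Next, I would establish analyticity off $(-b,b)$. The candidate involves three building blocks that each carry potential singularities: the algebraic factor $m_5(z)\pm m_5^{-1}(z)$, which is analytic in $\mathbb{C}\setminus(\gamma_1\cup\gamma_2)$ except for fourth-root branch behaviour at $\pm a,\pm b$; the Abel-type integral $m_3(z)$, whose jump across $[b,\infty)$ is $0$ and whose jumps across $(-a,a)$, $\gamma_1$, $\gamma_2$ are absorbed by the periodicity (\ref{peri-cond}) of $\vartheta_3$; and the theta functions themselves, which are entire in their first argument. I would argue: (i)~the apparent cuts on $[b,\infty)$ and on $(-\infty,-b]$ arising from $m_3(z)$ cancel against the integer period $1$ of $\vartheta_3(\,\cdot\,;\tau)$ (since $m_{3+}-m_{3-}\in\mathbb{Z}$ there); (ii)~the combinations $m_5+m_5^{-1}$ and $m_5-m_5^{-1}$ have at worst square-root (and not fourth-root) singularities at $\pm a,\pm b$, and these are cancelled by zeros of $\vartheta_3(\pm m_3(z)\pm\tfrac14;\tau)$ at the branch points; and (iii)~one must verify that the denominator $\vartheta_3$-factors have no zeros for $z\in\mathbb{C}\setminus(-b,b)$, which follows from the standard fact that $\vartheta_3(\,\cdot\,;\tau)$ vanishes only on the half-period lattice $\tfrac12+\tfrac{\tau}{2}+\mathbb{Z}+\tau\mathbb{Z}$ together with an argument that $m_3(z)+\tfrac14$ avoids this set off $(-b,b)$.

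Finally, I would check the normalization. As $z\to\infty$ one has $m_5(z)\to 1$, so $m_5+m_5^{-1}\to 2$ (giving the $\tfrac12$ compensation for diagonal entries) and $m_5-m_5^{-1}=\mathcal{O}(z^{-1})$ (killing off-diagonal entries). The subtler point is to show that the theta-function ratios on the diagonal tend to $1$; this reduces to verifying that $m_3(\infty)\equiv -\tfrac14\pmod{\mathbb{Z}+\tau\mathbb{Z}}$, which is where the specific choice of the constant $m_2=-i\pi b/K_1(m)$ enters through the normalisation $\oint_{A_2}\frac{m_2}{4\pi i C(\lambda)}\,d\lambda=1$. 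I expect this half-period identity and the cancellation of branch-point singularities via the theta zeros to be the main obstacles; the jump relations themselves are the easy part, being handed to us by Lemma \ref{le4}. Once normalization is confirmed, uniqueness of the RHP solution finishes the proof.
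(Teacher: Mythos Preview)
Your proposal is correct and follows the same core approach as the paper, namely invoking Lemma~\ref{le4} to verify the jump relations. The paper's own proof is a single sentence citing Lemma~\ref{le4} and nothing more; you go further by also sketching the analyticity and normalization checks (including the identification $m_3(\infty)\equiv -\tfrac14$), which the paper leaves implicit.
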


\begin{proof} According to Lemma \ref{le4}, we know that the matrix function $Y^o(x;z)$ is a solution of Riemann-Hilbert problem \ref{RH8}.

\end{proof}

\subsection{The local matrix parametrix }

In this section, we will construct a local matrix parametrix $Y^{b}(x;z)$ as $z\in\delta^{b}:=\{z||z-b|<\epsilon,~\epsilon~\mathrm{is~a~suitable~small,~ positive~parameter} \}$. Similarly, we can define regions $\delta^{a}$, $\delta^{-b}$, and $\delta^{-a}$. Note that
\bee
z-g_{\pm}(z)=\mathcal{O}(\sqrt{z-b}),\quad z\to b.
\ene

Then, we define the following conformal map:
\bee
\zeta_b:=\dfrac{x^2(g(z)-z)^2}{4},\quad z\in\delta^{b}.
\ene

Make the following transformation:
\bee
Y^{(1)}(x;z)
=\begin{cases}
Y_2(x;z)\left(\dfrac{\exp(\frac{i\pi}{4})}{\sqrt{R(z)}f(z)}\right)^{\sigma_3}e^{-2\sqrt{\zeta_b}\sigma_3}\left[\!\!\begin{array}{cc}
0& 1 \vspace{0.05in}\\
1& 0
\end{array}\!\!\right],\quad z\in\mathbb{C}_+\cap\delta^{b},\v\v\\
Y_2(x;z)\left(\dfrac{\exp(\frac{i\pi}{4})}{\sqrt{-R(z)}f(z)}\right)^{\sigma_3}e^{-2\sqrt{\zeta_b}\sigma_3}\left[\!\!\begin{array}{cc}
0& 1 \vspace{0.05in}\\
1& 0
\end{array}\!\!\right],\quad z\in\mathbb{C}_-\cap\delta^{b}.
\end{cases}
\ene

\begin{lemma} The matrix function $Y^{(1)}(x;z)$ satisfies the following jump conditions:
\bee
Y_+^{(1)}(x;z)
=\begin{cases}
Y_-^{(1)}(x;z)\left[\!\!\begin{array}{cc}
1& 0 \vspace{0.05in}\\
-1& 1
\end{array}\!\!\right],\quad z\in\delta^{b}\cap\{\mathrm{upper~and~lower~lenses}\},\v\v\\
Y_-^{(1)}(x;z)\left[\!\!\begin{array}{cc}
0& -1 \vspace{0.05in}\\
1& 0
\end{array}\!\!\right],\quad z\in\delta^{b}\cap[0,+\infty).
\end{cases}
\ene

\end{lemma}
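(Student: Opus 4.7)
The verification reduces to direct computation of $(Y^{(1)}_-)^{-1}Y^{(1)}_+$ on each piece of the local contour in $\delta^b$, using the definition of $Y^{(1)}$, the jump matrix $V_6$, and the previously established scalar jump identities for $f$, $g$, $R$, and the conformal coordinate $\zeta_b$. Write the transformation compactly as $Y^{(1)}(x;z) = Y_2(x;z)\,E_\pm(z)$ with
\[
E_\pm(z) := \left(\frac{e^{i\pi/4}}{\sqrt{\pm R(z)}\,f(z)}\right)^{\sigma_3} e^{-2\sqrt{\zeta_b}\,\sigma_3}\,\sigma_1,
\]
the plus sign applying on $\mathbb{C}_+\cap\delta^b$ and the minus on $\mathbb{C}_-\cap\delta^b$.

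For the lens piece $O_1^\pm\cap\delta^b$, the factor $E_\pm$ is analytic across the lens (the cuts of $f$ and $g$ lie on $\gamma_1$ and $(-a,a)$, not on the lens), so the jump reduces to $E_\pm^{-1}V_6|_{O_1^\pm}E_\pm$. Conjugation by the diagonal part of $E_\pm$ multiplies the single off-diagonal entry $\pm i\,e^{\mp 2x(g(z)-z)}/(R(z)f(z)^2)$ by $\mp iR(z)f(z)^2$ (because $(e^{i\pi/4})^{-2}=-i$), producing $\pm e^{\mp 2x(g(z)-z)}$. With the conformal branch choice $2\sqrt{\zeta_b}=x(g(z)-z)$, the remaining exponential cancels against $e^{-2\sqrt{\zeta_b}\sigma_3}$ and leaves $\pm 1$; the outer $\sigma_1$ then flips the triangularity, giving the claimed $\bigl(\begin{smallmatrix}1&0\\-1&1\end{smallmatrix}\bigr)$. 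The upper and lower halves of the lens produce identical jumps.

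For the real-axis piece $\delta^b\cap[0,+\infty)$, I would split it into $\gamma_1\cap\delta^b = (b-\epsilon,b)$ and the continuation $(b,b+\epsilon)$. On the former, $V_6|_{\gamma_1}=i\sigma_1$ and in addition $E_\pm$ itself carries jumps through $f_+f_-=R^{-1}$, $g_+ + g_-=2z$ (which reverses the sign of $g(z)-z$, hence of $\sqrt{\zeta_b}$), and the branch identity relating $\sqrt{R_+}$ to $\sqrt{-R_-}$ via a factor of $i$. On the latter ray, $V_6=I$, but the transition from $\sqrt{R}$ on the upper side to $\sqrt{-R}$ on the lower still produces a nontrivial jump of $E_\pm$ alone. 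In both cases, substituting the scalar jump identities into $E_-^{-1}V_6E_+$ and using the identity $\sigma_1 A^{\sigma_3}\sigma_1=A^{-\sigma_3}$ to commute the $\sigma_1$ past the diagonal factors collapses the exponentials, the $f$-factors, and the branch factors into the common constant matrix $\bigl(\begin{smallmatrix}0&-1\\1&0\end{smallmatrix}\bigr)$.

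The main obstacle is branch bookkeeping at the turning point $z=b$: the phase conventions for $\sqrt{R(z)}$, $\sqrt{-R(z)}$, and $\sqrt{\zeta_b}$ in each half-plane must be fixed so that the local contour around $b$ exhibits a single consistent constant jump on both halves of $\delta^b\cap[0,+\infty)$, and so that the piece on $\gamma_1$ and the piece on $(b,b+\epsilon)$ yield the same matrix even though $V_6$ differs on them. In particular, the prefactor $e^{i\pi/4}$ is included precisely so that the $i$ coming from $V_6|_{\gamma_1}$ merges with the ratio $\sqrt{-R}/\sqrt{R}$ to produce the off-diagonal entries $-1$ and $+1$ rather than some other phase. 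Once these conventions are pinned down, all remaining cancellations are routine matrix algebra.
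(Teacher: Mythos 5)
Your proposal is correct and follows essentially the same route as the paper: a direct conjugation computation of $(Y^{(1)}_-)^{-1}Y^{(1)}_+ = E_\pm^{-1}V_6E_\pm$ on each piece of the local contour, using the scalar jump identities for $f$, $g$, $R$ and the choice of branch for $\sqrt{\zeta_b}$ to collapse everything to constant matrices. If anything you are slightly more explicit than the paper on the real-axis piece, since the paper's displayed computation only invokes the jump $i\sigma_1$ of $Y_2$, which holds on $(b-\epsilon,b)\subset\gamma_1$, and does not separately discuss $(b,b+\epsilon)$, where the jump of $Y^{(1)}$ arises purely from the branch changes of $\sqrt{\pm R}$ and $\sqrt{\zeta_b}$ across the axis.
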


\begin{proof}
If $z\in\delta^{b}\cap[0,+\infty)$, we have

\begin{align}\no
\begin{aligned}
Y_+^{(1)}(x;z)&=Y_{2+}(x;z)\left(\dfrac{\exp(\frac{i\pi}{4})}{\sqrt{R(z)}f(z)}\right)^{\sigma_3}e^{-2\sqrt{\zeta_b}\sigma_3}\left[\!\!\begin{array}{cc}
0& 1 \vspace{0.05in}\\
1& 0
\end{array}\!\!\right]\v\\
&=Y_{2-}(x;z)\left[\!\!\begin{array}{cc}
0& i  \vspace{0.05in}\\
i& 0
\end{array}\!\!\right]\left(\dfrac{\exp(\frac{i\pi}{4})}{\sqrt{R(z)}f(z)}\right)^{\sigma_3}e^{-2\sqrt{\zeta_b}\sigma_3}\left[\!\!\begin{array}{cc}
0& 1 \vspace{0.05in}\\
1& 0
\end{array}\!\!\right]\v\\
&=Y_-^{(1)}(x;z)\left[\!\!\begin{array}{cc}
0& 1 \vspace{0.05in}\\
1& 0
\end{array}\!\!\right]e^{2\sqrt{\zeta_b}\sigma_3}\left(\dfrac{\exp(\frac{i\pi}{4})}{\sqrt{R(z)}f(z)}\right)^{-\sigma_3}\left[\!\!\begin{array}{cc}
0& i  \vspace{0.05in}\\
i& 0
\end{array}\!\!\right]\left(\dfrac{\exp(\frac{i\pi}{4})}{\sqrt{R(z)}f(z)}\right)^{\sigma_3}e^{-2\sqrt{\zeta_b}\sigma_3}\left[\!\!\begin{array}{cc}
0& 1 \vspace{0.05in}\\
1& 0
\end{array}\!\!\right]\v\\
&=Y_-^{(1)}(x;z)\left[\!\!\begin{array}{cc}
0& -1 \vspace{0.05in}\\
1& 0
\end{array}\!\!\right].
\end{aligned}
\end{align}

If $z\in\delta^{b}\cap\{\mathrm{upper~lens}\}$, we have
\begin{align}\no
\begin{aligned}
Y_+^{(1)}(x;z)&=Y_{2+}(x;z)\left(\dfrac{\exp(\frac{i\pi}{4})}{\sqrt{R(z)}f(z)}\right)^{\sigma_3}e^{-2\sqrt{\zeta_b}\sigma_3}\left[\!\!\begin{array}{cc}
0& 1 \vspace{0.05in}\\
1& 0
\end{array}\!\!\right]\v\\
&=Y_{2-}(x;z)\left[\!\!\begin{array}{cc}
1& -\dfrac{ie^{-2x(g(z)-z)}}{R(z)f^2(z)}  \vspace{0.05in}\\
0& 1
\end{array}\!\!\right]\left(\dfrac{\exp(\frac{i\pi}{4})}{\sqrt{R(z)}f(z)}\right)^{\sigma_3}e^{-2\sqrt{\zeta_b}\sigma_3}\left[\!\!\begin{array}{cc}
0& 1 \vspace{0.05in}\\
1& 0
\end{array}\!\!\right]\v\\
&=Y_-^{(1)}(x;z)\left[\!\!\begin{array}{cc}
0& 1 \vspace{0.05in}\\
1& 0
\end{array}\!\!\right]e^{2\sqrt{\zeta_b}\sigma_3}\left(\dfrac{\exp(\frac{i\pi}{4})}{\sqrt{R(z)}f(z)}\right)^{-\sigma_3}\left[\!\!\begin{array}{cc}
1& -\dfrac{ie^{-2x(g(z)-z)}}{R(z)f^2(z)}  \vspace{0.05in}\\
0& 1
\end{array}\!\!\right]\v\\
&\quad \times \left(\dfrac{\exp(\frac{i\pi}{4})}{\sqrt{R(z)}f(z)}\right)^{\sigma_3}e^{-2\sqrt{\zeta_b}\sigma_3}\left[\!\!\begin{array}{cc}
0& 1 \vspace{0.05in}\\
1& 0
\end{array}\!\!\right]\v\\
&=Y_-^{(1)}(x;z)\left[\!\!\begin{array}{cc}
1& 0 \vspace{0.05in}\\
-1& 1
\end{array}\!\!\right].
\end{aligned}
\end{align}

As the same way, we can obtain the jump condition of matrix function $Y^{(1)}(x;z)$ as $z\in\delta^{b}\cap\{\mathrm{lower~lens}\}$. Thus the proof is completed.
\end{proof}

\begin{figure}[!t]
    \centering
 \vspace{-0.15in}
  {\scalebox{0.26}[0.26]{\includegraphics{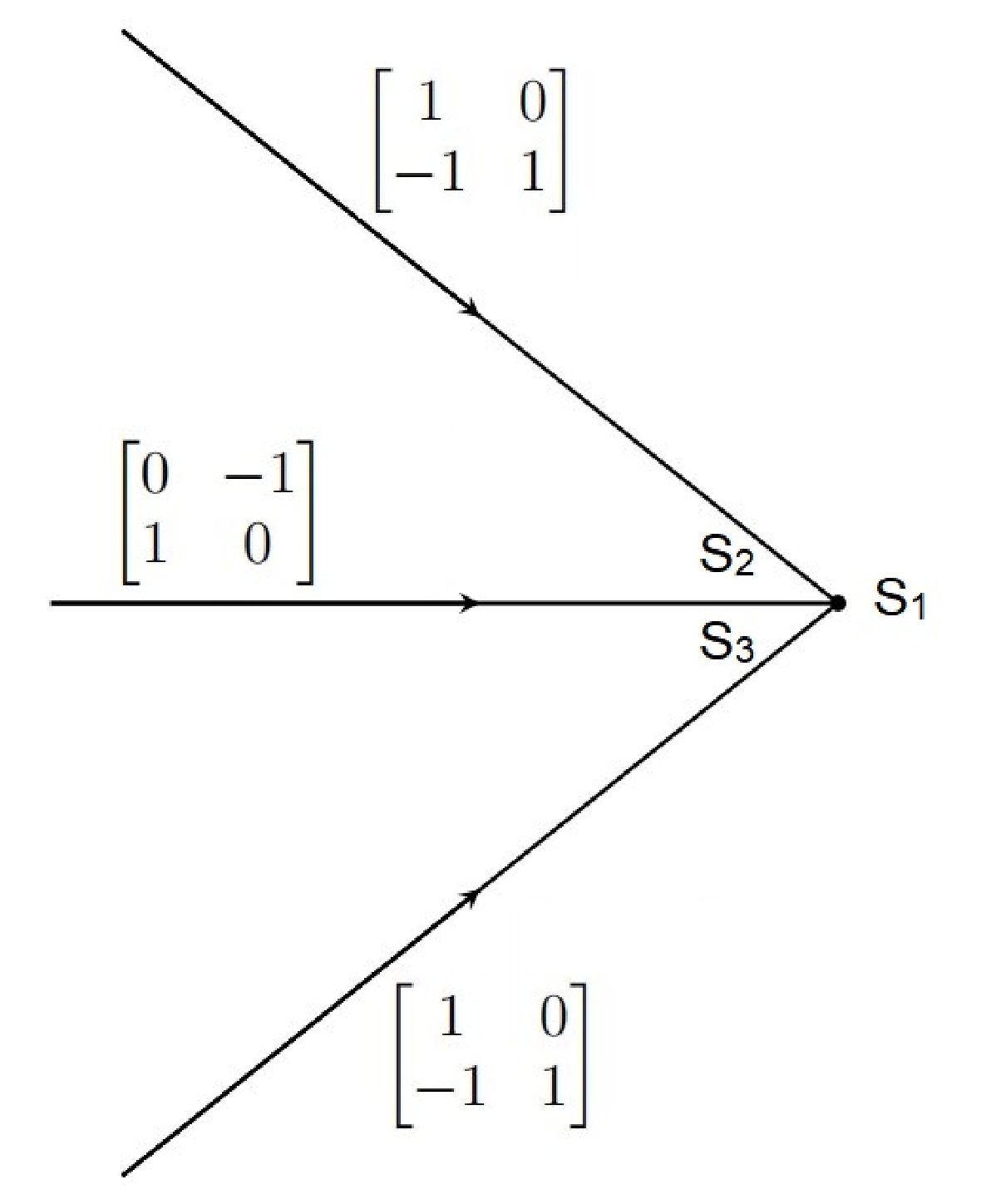}}}\hspace{-0.35in}
\vspace{0.05in}
\caption{Regional division for the Bessel model $U_{Bes}(z)$.}
   \label{fig5}
\end{figure}

Next, we will introduce the Bessel model $U_{Bes}(z)$ which satisfies the following Riemann-Hilbert problem~\cite{Girotti-1,Kuijlaars-1}:
\begin{prop}\label{BES}
Find a $2\times 2$ matrix function $U_{Bes}(z)$ that satisfies the following properties:

\begin{itemize}

 \item {} Analyticity: $U_{Bes}(z)$ is analytic for $z$ in the three regions shown in Figure \ref{fig5}, namely, $S_1:=\{z\in\mathbb{C}||\mathrm{arg}(z)|<\frac{2\pi}{3}\}$,
 $S_2:=\{z\in\mathbb{C}|\frac{2\pi}{3}<\mathrm{arg}(z)<\pi\}$ and $S_3:=\{z\in\mathbb{C}|-\pi<\mathrm{arg}(z)<-\frac{2\pi}{3}\}$, where $-\pi<\mathrm{arg}(z)\leq\pi$, and takes continuous boundary values on the excluded rays and at the origin from each sector.

 \item {} Jump condition: The boundary values on the jump contour $\gamma_U$ where $\gamma_U:=\gamma_{\pm}\cup\gamma_0$ with $\gamma_{\pm}=\{
 \mathrm{arg}(z)=\pm\frac{2\pi}{3}\}$ and $\gamma_0=\{
 \mathrm{arg}(z)=\pi\}$ are defined as
 \bee
U_{Bes+}(z)=U_{Bes-}(z)V_{Bes}(z),\quad z\in \gamma_U,
\ene
where
\bee
V_{Bes}(z)
=\begin{cases}
\left[\!\!\begin{array}{cc}
1& 0 \vspace{0.05in}\\
-1& 1
\end{array}\!\!\right],\quad z\in \gamma_{\pm},\vspace{0.05in}\\
\left[\!\!\begin{array}{cc}
0& -1 \vspace{0.05in}\\
1& 0
\end{array}\!\!\right],\quad z\in \gamma_0;
\end{cases}
\ene

 \item {} Normalization:
 \bee\no
U_{Bes}(z)=\mathcal{O}(\ln(|z|))\left[\!\!\begin{array}{cc}
1 & 1 \vspace{0.05in}\\ 1 &  1 \end{array}\!\!\right],\quad z\rightarrow 0.
\ene

\end{itemize}

\end{prop}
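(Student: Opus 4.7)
The plan is to construct $U_{Bes}(z)$ explicitly via modified Bessel functions of order zero and then verify each of the three required properties sector by sector. In the principal sector $S_1=\{z:|\arg z|<2\pi/3\}$, with $z^{1/2}$ denoting the principal branch (cut along $\arg z=\pi$), I would set
\begin{equation*}
U_{Bes}(z)=\left[\begin{array}{cc} I_0(2 z^{1/2}) & \dfrac{i}{\pi}\, K_0(2 z^{1/2}) \\[4pt] 2\pi i\, z^{1/2} I_0'(2 z^{1/2}) & -2\, z^{1/2} K_0'(2 z^{1/2}) \end{array}\right],
\end{equation*}
and in each of $S_2$ and $S_3$ I would multiply this formula on the right by the appropriate unipotent lower-triangular factor dictated by the prescribed jump on $\gamma_+$ and $\gamma_-$, respectively. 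With this choice the jump conditions on $\gamma_{\pm}$ become tautologies by construction.

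The nontrivial jump is the one on $\gamma_0=\{\arg z=\pi\}$, which is the analytical core of the argument. I would verify it using the classical monodromy identities
\begin{equation*}
I_0(e^{i\pi}w)=I_0(w),\qquad K_0(e^{i\pi}w)=K_0(w)-i\pi I_0(w),
\end{equation*}
applied to the boundary values of $U_{Bes}$ on $\gamma_0$ coming from $S_2$ (upper) and from $S_3$ (lower). Combining these monodromies with the two sector conjugations introduced above, a short algebraic simplification should reproduce precisely the target jump matrix $\left[\begin{array}{cc} 0 & -1 \\ 1 & 0 \end{array}\right]$.

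For the endpoint normalization, I would invoke the small-argument expansions $I_0(w)=1+O(w^2)$, $K_0(w)=-\ln w+O(1)$, $I_0'(w)=O(w)$, and $K_0'(w)=-1/w+O(w\ln w)$ as $w\to 0$. These produce entries of size $\mathcal{O}(\ln|z|)$ throughout the first row, and absorbing the $z^{1/2}$ prefactors in the second row one again obtains entries that are either bounded or $\mathcal{O}(\ln|z|)$. The growth persists uniformly across all three sectors since the lower-triangular sector factors only recombine entries additively.

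The main obstacle I expect is the branch and sign bookkeeping on $\gamma_0$: a misplaced sign coming from $(e^{i\pi}z)^{1/2}=-z^{1/2}$ or a stray factor of $\pi$ in the Bessel monodromy will break the target jump. As a cross-check I would verify the Wronskian identity $I_0(w)K_0'(w)-K_0(w)I_0'(w)=-1/w$, which forces $\det U_{Bes}(z)\equiv 1$ and simultaneously pins down the constants $i/\pi$, $2\pi i$, and $-2$ in the definition. Consistency of $\det U_{Bes}\equiv 1$ across all three sectors is in fact the cleanest global sanity check that the three jumps $\gamma_+$, $\gamma_-$, $\gamma_0$ are compatible with a single sectionally holomorphic function.
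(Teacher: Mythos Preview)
Your proposal is correct and follows exactly the standard construction: the paper itself does not supply a proof of this Riemann--Hilbert problem but simply states that ``the solution of the Riemann--Hilbert Problem~\ref{BES} can be expressed explicitly according to the Bessel functions'' and cites Kuijlaars--McLaughlin--Van~Assche--Vanlessen~\cite{Kuijlaars-1}, which is precisely the explicit modified-Bessel construction you outline. One small caution: the jump matrix on $\gamma_{\pm}$ here carries a $-1$ in the $(2,1)$ entry rather than the $+1$ appearing in the original reference, so when you build $U_{Bes}$ in $S_2$ and $S_3$ by right-multiplying the $S_1$ expression, make sure the unipotent factor you attach is consistent with this sign convention (equivalently, with the contour orientation used in the paper); otherwise your monodromy check on $\gamma_0$ and your Wronskian sanity check are exactly the right verifications.
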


The solution of the Riemann-Hilbert Problem \ref{BES} can be expressed explicitly according to the Bessel functions~\cite{Kuijlaars-1}.  In particular, the solution $U_{Bes}(z)$ satisfies:
\bee
U_{Bes}(z)=\frac{\sqrt{2}}{2}(2\pi\sqrt{z})^{-\frac{\sigma_3}{2}}\left[\!\!\begin{array}{cc}
1& -i \vspace{0.05in}\\
-i& 1
\end{array}\!\!\right]\left(\mathbb{I}+\mathcal{O}(z^{-\frac12})\right)e^{2\sqrt{z}\sigma_3},\quad z\to \infty,
\ene
in $\mathbb{C}$ except from the jump curve $\gamma_{U}$.

Then the local matrix parametrix $Y^{b}(x;z)$ can be expressed as:
\bee
Y^{b}(x;z)=Y^{b1}(x;z)U_{Bes}(\zeta_b)\left[\!\!\begin{array}{cc}
0& 1 \vspace{0.05in}\\
1& 0
\end{array}\!\!\right]\left(\sqrt{\pm R(z)}f(z)e^{2\sqrt{\zeta_b}-i\pi/4}\right)^{-\sigma_3},\quad z\in\delta^{b}\cap\mathbb{C_{\pm}},
\ene
where
\bee\no
Y^{b1}(x;z)=\frac{\sqrt{2}}{2}Y^o(x;z)\left(\dfrac{\exp(\frac{i\pi}{4})}{\sqrt{\pm R(z)}f(z)}\right)^{\sigma_3}\left[\!\!\begin{array}{cc}
i& 1 \vspace{0.05in}\\
1& i
\end{array}\!\!\right](2\pi\sqrt{\zeta_b})^{\frac{\sigma_3}{2}}.
\ene

Then, we have
\bee
Y^{b}(x;z)\left(Y^o(x;z)\right)^{-1}=\mathbb{I}+\mathcal{O}(|x|^{-1}),\quad x\to-\infty,z\in\partial\delta^{b}\setminus\gamma_U.
\ene

Similarly, we can construct local matrix functions $Y^{a}(x;z),Y^{-a}(x;z)$ and $Y^{-b}(x;z)$ which satisfy the following properties:
\begin{align}\no
\begin{aligned}
&Y^{a}(x;z)\left(Y^o(x;z)\right)^{-1}=\mathbb{I}+\mathcal{O}(|x|^{-1}),\quad x\to-\infty,z\in\partial\delta^{a}\setminus\gamma_U,\v\\
&Y^{-a}(x;z)\left(Y^o(x;z)\right)^{-1}=\mathbb{I}+\mathcal{O}(|x|^{-1}),\quad x\to-\infty,z\in\partial\delta^{-a}\setminus\gamma_U,\v\\
&Y^{-b}(x;z)\left(Y^o(x;z)\right)^{-1}=\mathbb{I}+\mathcal{O}(|x|^{-1}),\quad x\to-\infty,z\in\partial\delta^{-b}\setminus\gamma_U.
\end{aligned}
\end{align}

Then we construct a matrix function as follows:
\bee
Y_3(x;z)
=\begin{cases}
Y^o(x;z),\quad z\in \mathbb{C}\setminus(\delta^{a}\cup\delta^{b}\cup\delta^{-a}\cup\delta^{-b}),\vspace{0.05in}\\
Y^{\pm a}(x;z),\quad z\in\delta^{\pm a},\vspace{0.05in}\\
Y^{\pm b}(x;z),\quad z\in\delta^{\pm b},
\end{cases}
\ene
which has the following jump condition:
 \bee
Y_{3+}(x;z)=Y_{3-}(x;z)V_8(x;z).
\ene

\subsection{The small norm Riemann-Hilbert problem}

Define the error matrix function $E(x;z)$ as follow:
\bee
E(x;z)=Y_2(x;z)Y_3^{-1}(x;z).
\ene

Then error matrix function $E(x;z)$ satisfies the following Riemann-Hilbert problem.

\begin{prop}\label{RH9} 
Find a $2\times 2$ matrix function $E(x;z)$ that satisfies the following properties:

\begin{itemize}

 \item {} Analyticity: $E(x;z)$ is analytic in $\mathbb{C}\setminus\gamma_3$ where $\gamma_3:=O_1\cup O_2\cup\partial\delta^{a}\cup\partial\delta^{b}\cup\partial\delta^{-a}\cup\partial\delta^{-b}\setminus(\delta^{a}\cup\delta^{b}\cup\delta^{-a}\cup\delta^{-b})$ and takes continuous boundary values on $\gamma_3$.

 \item {} Jump condition: The boundary values on the jump contour are defined as
 \bee
E_{+}(x;z)=E_{-}(x;z)V_E(x;z),\quad z\in \gamma_3,
\ene
where
\bee
V_E(x;z)=Y_{3-}(x;z)V_6(x;z)V_8^{-1}(x;z)Y_{3-}^{-1}(x;z),
\ene
which satisfies the following properties:
\bee
V_E(x;z)
=\begin{cases}
\mathbb{I}+\mathcal{O}(e^{-c|x|}),\quad z\in O_1\cup O_2\setminus(\overline{\delta}^a\cup\overline{\delta}^b\cup\overline{\delta}^{-a}\cup\overline{\delta}^{-b}),\v\\
\mathbb{I}+\mathcal{O}(\dfrac{1}{|x|}),\quad z\in\partial\delta^{a}\cup\partial\delta^{b}\cup\partial\delta^{-a}\cup\partial\delta^{-b},
\end{cases}
\ene
with $c$ being a suitable positive, real parameter.

 \item {} Normalization: $E(x;z)=\mathbb{I}+\mathcal{O}(z^{-1}),\quad z\rightarrow\infty.$

\end{itemize}
\end{prop}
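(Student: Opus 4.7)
The plan is to verify each property of the Riemann-Hilbert problem for $E(x;z)$ by carefully combining the jump structures of $Y_2$ and $Y_3$ established in the earlier construction, together with the matching estimates on the disk boundaries. Since $E(x;z) = Y_2(x;z) Y_3^{-1}(x;z)$, the properties of $E$ will follow from those of $Y_2$ (RHP \ref{RH7}) and $Y_3$ (built from the outer parametrix $Y^o$ and the local Bessel parametrices $Y^{\pm a}, Y^{\pm b}$).

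For analyticity, I would first show that the jumps of $Y_2$ and $Y_3$ cancel on the three contours interior to the disks $\delta^{\pm a}\cup\delta^{\pm b}$, namely on the segment $[-a,a]$, on $\gamma_1\cup\gamma_2$, and on the lens pieces $O_1\cup O_2$ inside these disks. By construction, the local parametrix $Y^{b}(x;z)$ was defined precisely so that $Y^{b}$ has exactly the same jump matrices as $Y_2$ on $\delta^{b}\cap([0,+\infty)\cup\{\text{lenses}\})$, which is guaranteed by the explicit computation in Lemma~5 combined with the jump relations for the Bessel model $U_{Bes}(\zeta_b)$ (RHP \ref{BES}). The analogous statement at $z=\pm a, -b$ follows by symmetry. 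Therefore $E$ has no jump inside any disk, and also has removable singularities at the endpoints $\pm a,\pm b$ themselves by the behavior of $U_{Bes}(\zeta)$ near $\zeta=0$. Outside the disks, $Y_3$ reduces to $Y^o$, which matches the jumps of $Y_2$ on $[-a,a]\cup\gamma_1\cup\gamma_2$; hence $E$ is also jump-free there. This leaves jumps only on $\gamma_3 = (O_1\cup O_2)\setminus(\delta^{a}\cup\delta^{b}\cup\delta^{-a}\cup\delta^{-b}) \cup \partial\delta^{a}\cup\partial\delta^{b}\cup\partial\delta^{-a}\cup\partial\delta^{-b}$, as claimed.

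For the jump condition and its estimates, the formula $V_E = Y_{3-} V_6 V_8^{-1} Y_{3-}^{-1}$ is immediate from $E_+ = Y_{2+} Y_{3+}^{-1} = Y_{2-} V_6 V_8^{-1} Y_{3-}^{-1} = E_- Y_{3-} V_6 V_8^{-1} Y_{3-}^{-1}$. On the lens arcs in $\gamma_3$, Lemma \ref{le2} gives $\mathrm{Re}(g(z)-z)<0$ on $O_1\setminus\{a,b\}$ and $\mathrm{Re}(z-g(z))<0$ on $O_2\setminus\{-a,-b\}$; combined with the triangular structure of $V_6$ in Eq.~(\ref{V6}), the off-diagonal entries are $\mathcal{O}(e^{-c|x|})$ uniformly on these arcs (with $c>0$ bounded below by the distance of the lens from the endpoints times a positive constant), while $Y_{3-} = Y^o$ is bounded there since $Y^o$ is explicitly expressed through Jacobi theta functions that stay bounded away from the branch points. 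On each disk boundary $\partial\delta^{\pm a,\pm b}$, the matching conditions already established,
\begin{equation}\no
Y^{\pm a}(x;z)(Y^o(x;z))^{-1} = \mathbb{I}+\mathcal{O}(|x|^{-1}),\qquad Y^{\pm b}(x;z)(Y^o(x;z))^{-1} = \mathbb{I}+\mathcal{O}(|x|^{-1}),
\end{equation}
yield exactly $V_E = \mathbb{I}+\mathcal{O}(|x|^{-1})$ as $x\to-\infty$; these estimates descend from the large-argument asymptotics of $U_{Bes}(\zeta)$ after pulling back through the conformal map $\zeta_b$.

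Finally, normalization $E(x;z)\to\mathbb{I}$ as $z\to\infty$ follows because both $Y_2$ and $Y_3$ are $\mathbb{I}+\mathcal{O}(z^{-1})$ at infinity (the former by RHP \ref{RH7}, the latter because outside the disks $Y_3=Y^o$, whose explicit theta-function representation in Eq.~(\ref{Yoxz}) has the correct normalization). The step I anticipate to be the main obstacle is the verification that the local parametrix has \emph{exactly} the jumps required inside each disk so that $E$ extends holomorphically across the interior contours; this hinges on choosing the prefactor $Y^{b1}(x;z)$ correctly so that both the conjugation by $(\sqrt{\pm R(z)}\,f(z)\,e^{2\sqrt{\zeta_b}-i\pi/4})^{\sigma_3}$ and the swap matrix $\bigl[\begin{smallmatrix}0&1\\1&0\end{smallmatrix}\bigr]$ reproduce the off-diagonal jump $\bigl[\begin{smallmatrix}0&i\\i&0\end{smallmatrix}\bigr]$ on $\gamma_1\cap\delta^{b}$ and the triangular lens jumps of $V_6$, with the endpoint $z=b$ being only a removable singularity of $E$. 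Once this bookkeeping is done, the small-norm structure of RHP \ref{RH9} is immediate.
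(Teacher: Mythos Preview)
Your proposal is correct and follows the standard route; in fact the paper does not supply a separate proof of this Riemann--Hilbert problem at all, but simply states it as an immediate consequence of the definition $E(x;z)=Y_2(x;z)Y_3^{-1}(x;z)$ together with the previously established jump relations for $Y_2$, the construction of $Y_3$, Lemma~\ref{le2}, and the matching estimates $Y^{\pm a,\pm b}(Y^o)^{-1}=\mathbb{I}+\mathcal{O}(|x|^{-1})$. Your write-up makes explicit precisely the bookkeeping (cancellation of interior jumps, the conjugation formula for $V_E$, and the two regimes of decay) that the paper leaves implicit, so there is nothing to compare beyond noting that you have spelled out what the paper takes for granted.
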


According to Riemann-Hilbert problem \ref{RH9}, we obtain
\bee\label{E-error}
E(x;z)=\mathbb{I}+\frac{E_1(x)}{xz}+\mathcal{O}(z^{-2}).
\ene

\begin{propo}
As $x\to-\infty$, the potential function $q(x,0)$ has the following large-space asymptotics
\bee \label{hirota-3}
q(x,0)=(b-a)\dfrac{\vartheta_3(0;\tau)\vartheta_3(\frac{xm_2+m_1}{2\pi i}+\frac12;\tau)}{\vartheta_3(\frac12;\tau)\vartheta_3(\frac{xm_2+m_1}{2\pi i};\tau)}+\mathcal{O}(|x|^{-1}).
\ene
\end{propo}

\begin{proof}
According to Eqs.~(\ref{Y1-Y}),(\ref{Y2-Y1}) and (\ref{E-error}), we have
\begin{align}\no
\begin{aligned}
Y(x;z)&=Y_1(x;z)f(z)^{-\sigma_3}e^{-xg(z)\sigma_3}\v\\
&=Y_2(x;z)f(z)^{-\sigma_3}e^{-xg(z)\sigma_3}\v\\
&=E(x;z)Y_3(x;z)f(z)^{-\sigma_3}e^{-xg(z)\sigma_3}\v\\
&=\left(\mathbb{I}+\frac{E_1(x)}{xz}+\mathcal{O}(z^{-2})\right)Y_3(x;z)f(z)^{-\sigma_3}e^{-xg(z)\sigma_3},
\end{aligned}
\end{align}
which leads to
\begin{align}
\begin{aligned}
Y_{12}(x;z)&=\left(Y_{3,12}(x;z)+\frac{E_{1,12}(x)}{xz}+\mathcal{O}(z^{-2})\right)f(z)e^{xg(z)}\v\\
&=\left(Y^{o}_{12}(x;z)+\frac{E_{1,12}(x)}{xz}+\mathcal{O}(z^{-2})\right)f(z)e^{xg(z)}\v\\
&=\frac12\left(\psi_{12}(x;z)+\frac{E_{1,12}(x)}{xz}+\mathcal{O}(z^{-2})\right)f(z)e^{xg(z)}.
\end{aligned}
\end{align}

Note that
\begin{align}
\begin{aligned}
&f(z)=1+\dfrac{f_1}{z}+\mathcal{O}(z^{-2}),\v\\
&e^{xg(z)}=1+\dfrac{x(\frac{a^2+b^2}{2}+b^2(\frac{K_2(m)}{K_1(m)}-1))}{z}+\mathcal{O}(z^{-2}),\v\\
&m_5(z)=1+\dfrac{a-b}{2z}+\mathcal{O}(z^{-2}).
\end{aligned}
\end{align}

According to Eq.~(\ref{fanyan-5}), one can show the large-space asymptotics (\ref{hirota-3}) of the solution of the Hirota equation.
\end{proof}


Note that similarly, we also consider the asymptotic behavior of $q(x,t)$ as $x\to-\infty$ for each fixed $t=t_0$, where
the norming constant $c_j$ of the residue conditions (\ref{Res1}) of matrix function $M(x,t_0;z)$ is changed as
\bee
c_j=\dfrac{i(b-a)r(iz_j)e^{4i\alpha z_j^2t_0-8\beta z_j^3t_0}}{N\pi},
\ene
where $r(z)$ is a real-valued, continuous and non-vanishing function for $z\in(ia,ib)$, with the symmetry $r(z^*)=r(z)$.
Similarly, as $x\to-\infty$, we have the same large-space asymptotic behavior of the potential function $q(x,t_0)$ as Eq.~(\ref{hirota-3}).

\section{Long-time asymptotics of the potential $q(x,t)$ as $t\to+\infty$}

We recall the Riemann-Hilbert problem \ref{RH4} of function $M^{(3)}(x,t;z)$:
\begin{align}\label{M3-2}
\begin{aligned}
&M_+^{(3)}(x,t;z)=\begin{cases}
M_-^{(3)}(x,t;z)\left[\!\!\begin{array}{cc}
1& 0  \vspace{0.05in}\\
iR(z)e^{-2zx-4i\alpha z^2t+8\beta z^3t}& 1
\end{array}\!\!\right],\quad z\in \gamma_1,\vspace{0.05in}\\
M_-^{(3)}(x,t;z)\left[\!\!\begin{array}{cc}
1& iR(z)e^{2zx+4i\alpha z^2t-8\beta z^3t}  \vspace{0.05in}\\
0& 1
\end{array}\!\!\right],\quad z\in \gamma_2,
\end{cases}\v\\
&M^{(3)}(x,t;z)=\mathbb{I}+\mathcal{O}(z^{-1}),\quad z\rightarrow\infty.
\end{aligned}
\end{align}

Girotti {\it et al}~\cite{Girotti-2} studied the long-time asymptotics of the dense soliton gas of modified KdV equation. Similarly, in what follows we will study the long-time asymptotics of the potential $q(x,t)$ as $t\to +\infty$ based on the Riemann-Hilbert problem. In fact, the following similar results can also be found for $t\to -\infty$.

\subsection{The case $\xi=x/t>4\beta b^2$}

Let $\xi:=\frac{x}{t}$, then Eq.~(\ref{M3-2}) can be rewrite as:
\begin{align}\label{M3-3}
\begin{aligned}
&M_+^{(3)}(x,t;z)=\begin{cases}
M_-^{(3)}(x,t;z)\left[\!\!\begin{array}{cc}
1& 0  \vspace{0.05in}\\
iR(z)e^{-2zt(\xi-4\beta z^2+2i\alpha z)}& 1
\end{array}\!\!\right],\quad z\in \gamma_1,\vspace{0.05in}\\
M_-^{(3)}(x,t;z)\left[\!\!\begin{array}{cc}
1& iR(z)e^{2zt(\xi-4\beta z^2+2i\alpha z)}  \vspace{0.05in}\\
0& 1
\end{array}\!\!\right],\quad z\in \gamma_2
\end{cases}\v\\
&M^{(3)}(x,t;z)=\mathbb{I}+\mathcal{O}(z^{-1}),\quad z\rightarrow\infty.
\end{aligned}
\end{align}

Then we have
\bee
M^{(3)}(x,t;z)=\mathbb{I}+\mathcal{O}(e^{-2ta(\xi-4\beta b^2)}),\quad \xi>4\beta b^2,~t\to+\infty.
\ene
and the potential function $q(x,t)$ becomes trivial.

\subsection{The case $\xi_c<\xi=x/t<4\beta b^2$}

The jump matrices in Eq.~(\ref{M3-3}) contain the two terms $iR(z)e^{\pm 2zt(\xi-4\beta z^2+2i\alpha z)}$. To deal with them conveniently, we only consider them to be of the pure imaginary functions, that is, $\alpha=0,\, \beta\not=0$, in which the Hirota equation reduces to the complex modified KdV equation
\bee\label{cmkdv}
q_t+\beta (q_{xxx}+ 6|q|^2q_x)=0,
\ene
in which one can take $\beta=1$ without loss of generality.

Eq.~(\ref{M3-3}) with $\alpha=0,\,\beta=1$ can be rewritten as:
\begin{align}\label{M3-3g}
\begin{aligned}
&M_+^{(3)}(x,t;z)=\begin{cases}
M_-^{(3)}(x,t;z)\left[\!\!\begin{array}{cc}
1& 0  \vspace{0.05in}\\
iR(z)e^{-2tz(\xi-4z^2)}& 1
\end{array}\!\!\right],\quad z\in \gamma_1,\vspace{0.05in}\\
M_-^{(3)}(x,t;z)\left[\!\!\begin{array}{cc}
1& iR(z)e^{2tz(\xi-4 z^2)}  \vspace{0.05in}\\
0& 1
\end{array}\!\!\right],\quad z\in \gamma_2,
\end{cases} \v\\
&M^{(3)}(x,t;z)=\mathbb{I}+\mathcal{O}(z^{-1}),\quad z\rightarrow\infty.
\end{aligned}
\end{align}

In what follows we consider the condition $\xi_c<\xi<b^2$, where $\xi_c$ to be determined below (see Eq.~(\ref{xi-c})).

Let
\bee\no
\gamma_{1a_1}:=(a_1,b),\quad \gamma_{2a_1}:=(-b,-a_1),\quad a_1\in(a,b).
\ene
Therefore, we construct the following Riemann-Hilbert problem based on the RHP (\ref{M3-3g}).
\begin{prop}\label{RH10} 
Find a $2\times 2$ matrix function $Y_{a_1}(x,t;z)$ that satisfies the following properties:

\begin{itemize}

 \item {} Analyticity: $Y_{a_1}(x,t;z)$ is analytic in $\mathbb{C}\setminus(\gamma_1\cup \gamma_2)$ and takes continuous boundary values on $\gamma_1\cup \gamma_2$.

 \item {} Jump condition: The boundary values on the jump contour $\gamma_1\cup\gamma_2$ are defined as
 \bee
Y_{a_1+}(x,t;z)=Y_{a_1-}(x,t;z)V_{4a_1}(x,t;z),\quad z\in \gamma_1\cup \gamma_2,
\ene
where
\bee
V_{4a_1}(x,t;z)
=\begin{cases}
\left[\!\!\begin{array}{cc}
1& 0  \vspace{0.05in}\\
iR(z)e^{-2tz(\xi-4z^2)}& 1
\end{array}\!\!\right],\quad z\in \gamma_1,\vspace{0.05in}\\
\left[\!\!\begin{array}{cc}
1& iR(z)e^{2tz(\xi-4 z^2)}  \vspace{0.05in}\\
0& 1
\end{array}\!\!\right],\quad z\in \gamma_2;
\end{cases}
\ene

 \item {} Normalization: $Y_{a_1}(x,t;z)=\mathbb{I}+\mathcal{O}(z^{-1}),\quad z\rightarrow\infty.$

\end{itemize}
\end{prop}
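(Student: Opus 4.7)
The plan is to observe that Riemann-Hilbert problem \ref{RH10} is essentially Riemann-Hilbert problem \ref{RH4} rewritten in the long-time variable $\xi=x/t$ under the cmKdV specialization $\alpha=0,\ \beta=1$. Substituting $\alpha=0,\ \beta=1$ into the exponents appearing in $V_3$ gives $-2zx+8z^3 t$ on $\gamma_1$ and $2zx-8z^3 t$ on $\gamma_2$, and factoring out $2zt$ converts these precisely into $-2tz(\xi-4z^2)$ and $2tz(\xi-4z^2)$, matching $V_{4a_1}$ verbatim. Hence existence can be established simply by exhibiting the candidate
\[
Y_{a_1}(x,t;z) := M^{(3)}(x,t;z)\big|_{\alpha=0,\ \beta=1},
\]
and verifying that this candidate meets the three requirements in Riemann-Hilbert problem \ref{RH10}.

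First, I would confirm that $M^{(3)}(x,t;z)$ is a bona fide solution of Riemann-Hilbert problem \ref{RH4}. This was built in Section 2 through the explicit chain of invertible transformations $M\mapsto M^{(1)}\mapsto M^{(2)}\mapsto M^{(3)}$, starting from the reflectionless RHP with discrete data $Z_N$ and passing to the continuum limit $N\to\infty$ using Lemma \ref{le1}. Each transformation in the chain either multiplies by an upper/lower triangular factor that is analytic off small contours or applies the rotation $z\mapsto iz$; none of these alters the required normalization $\mathbb{I}+\mathcal{O}(z^{-1})$ at infinity, and the final jump contour is precisely $\gamma_1\cup\gamma_2$ with jump matrix $V_3$. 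Specializing to $\alpha=0,\ \beta=1$ and using the identity above for the exponents yields exactly $V_{4a_1}$, so the jump condition of Riemann-Hilbert problem \ref{RH10} is satisfied.

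Second, I would establish uniqueness by the standard $2\times 2$ Liouville argument: given two solutions $Y_{a_1}$ and $\widetilde Y_{a_1}$, note that $\det Y_{a_1}$ and $\det\widetilde Y_{a_1}$ are entire (the jump matrix $V_{4a_1}$ has unit determinant) and tend to $1$ at infinity, so both determinants equal $1$; hence $\widetilde Y_{a_1}^{-1}$ exists everywhere. The product $Y_{a_1}\widetilde Y_{a_1}^{-1}$ is analytic off $\gamma_1\cup\gamma_2$ and, because both matrices satisfy the same multiplicative jump, extends continuously across $\gamma_1\cup\gamma_2$. By Morera's theorem it is entire, and by Liouville's theorem, together with the common normalization at infinity, it equals $\mathbb{I}$.

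The main technical obstacle will be to justify the continuous boundary values on the closed arcs $\gamma_1\cup\gamma_2$, including uniform behaviour as $z$ approaches the endpoints $\pm a$ and $\pm b$. Since $R(z)=2r(iz)$ is real-valued, continuous and non-vanishing on $\gamma_1\cup\gamma_2$ and the exponential factor $e^{\pm 2tz(\xi-4z^2)}$ is entire, the jump matrix $V_{4a_1}$ is smooth across the interior of each arc, so standard Cauchy-integral regularity yields continuous limits there. At the four endpoints the boundary values of $M^{(3)}$ inherited from the construction have at worst an integrable (fourth-root) singularity, which is the kind of mild behaviour that still qualifies as "continuous boundary values on $\gamma_1\cup\gamma_2$" in the RHP sense used throughout the paper; this matches the endpoint behaviour the subsequent Bessel parametrices in Section 3 were built to absorb.
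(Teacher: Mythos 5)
Your proposal is correct and follows essentially the same route as the paper: Riemann--Hilbert problem \ref{RH10} is introduced there simply as the restatement of the jump relations (\ref{M3-3g}) (i.e.\ Riemann--Hilbert problem \ref{RH4} with $\alpha=0$, $\beta=1$ and the phase rewritten via $\xi=x/t$), which is exactly your identification $Y_{a_1}=M^{(3)}|_{\alpha=0,\,\beta=1}$. The extra Liouville-type uniqueness argument and the endpoint-regularity remarks you supply are standard and consistent with, though not spelled out in, the paper.
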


According to Eq.~(\ref{fanyan-4}), we can recover $q(x,t)$ by the following formula:
\bee\label{fanyan-6}
q(x,t)=-2\lim\limits_{z\rightarrow\infty}(z Y_{a_1}(x,t;z))_{12}.
\ene

To solve Riemann-Hilbert problem \ref{RH10}, we make the following transformation:
\bee\label{Y1-Y-a1}
Y_{1a_1}(x,t;z)=Y_{a_1}(x,t;z)[f_{a_1}(z)e^{tg_{a_1}(z)}]^{\sigma_3},
\ene
where $f_{a_1}(z)$ and $g_{a_1}(z)$ are scalar functions which satisfy the following properties:
\begin{itemize}

 \item {} Analyticity: $f_{a_1}(z),\, g_{a_1}(z)$ are analytic in $\mathbb{C}\setminus(-b,b)$ and take continuous boundary values on $(-b,b)$.

 \item {} Jump conditions: The boundary values on the jump contour are defined as
\begin{align}
\begin{aligned}
&f_{a_1+}(z)f_{a_1-}(z)=R^{-1}(z),\quad z\in \gamma_{1a_1},\v\\
&f_{a_1+}(z)f_{a_1-}(z)=R(z),\quad z\in \gamma_{2a_1},\v\\
&f_{a_1+}(z)=f_{a_1-}(z)e^{m_{1a_1}},\quad z\in[-a_1,a_1]
\end{aligned}
\end{align}
and
\begin{align}
\begin{aligned}
&g_{a_1+}(z)+g_{a_1-}(z)=2\xi z-8z^3,\quad z\in \gamma_{1a_1}\cup\gamma_{2a_1},\v\\
&g_{a_1+}(z)-g_{a_1-}(z)=m_{2a_1},\quad z\in[-a_1,a_1] ,
\end{aligned}
\end{align}
where
\begin{align}\no
m_{1a_1}=\dfrac{\displaystyle\int_{a_1}^{b}\dfrac{2\ln R(\lambda)}{C_{a_1+}(\lambda)}d\lambda}{\displaystyle\int_{-a_1}^{a_1}\dfrac{d\lambda}{C_{a_1}(\lambda)}},\quad
C_{a_1}(z)=\sqrt{(z^2-a_1^2)(z^2-b^2)},
\end{align}
\bee\no
m_{2a_1}=-\dfrac{i\pi b(2a_1^2+2b^2-\xi)}{K_1(m_{a_1})},\quad m_{a_1}=\frac{a_1}{b}.
\ene

 \item {} Normalization: $f_{a_1}(z)=1+\mathcal{O}(z^{-1}),\quad g_{a_1}(z)=\mathcal{O}(z^{-1}), \quad z\rightarrow\infty$.
\end{itemize}

These scalar functions $f_{a_1}(z)$ and $g_{a_1}(z)$ can also be accurately solved as~\cite{Girotti-1,Girotti-2}:
\begin{align}
\begin{aligned}
&f_{a_1}(z)=\exp\left\{\frac{C_{a_1}(z)}{2\pi i}\left(\int_{\gamma_{2a_1}-\gamma_{1a_1}}\frac{\ln R(\lambda)}{C_{a_1+}(\lambda)(\lambda-z)}d\lambda
+\int_{-a_1}^{a_1}\frac{m_{1a_1}}{C_{a_1}(\lambda)(\lambda-z)}d\lambda
\right)\right\},\v\v\\
&g_{a_1}(z)=-4z^3+\xi z+\int_{b}^{z}\frac{12z^4-z^2[\xi+6(a_1^2+b^2)]+4a_1^2b^2
+b^2[2(a_1^2+b^2)-\xi]\left(\frac{K_2(m_{a_1})}{K_1(m_{a_1})}-1\right)}{C_{a_1}(\lambda)}d\lambda,
\end{aligned}
\end{align}
where the parameter $a_1$ is determined by the following equation:
\bee 
\xi=2(a_1^2+b^2)+\dfrac{4a_1^2(a_1^2-b^2)}{a_1^2+b^2\left(\frac{K_2(m_{a_1})}{K_1(m_{a_1})}-1\right)},
\ene
such that one has
\bee\label{xi-c}
\xi_c=2(a^2+b^2)+\dfrac{4a^2(a^2-b^2)}{a^2+b^2\left(\frac{K_2(m)}{K_1(m)}-1\right)},\quad m=\frac{a}{b}.
\ene

Therefore, the matrix function $Y_{1a_1}(x,t;z)$ satisfies the following Riemann-Hilbert problem.

\begin{prop}\label{RH11} 
Find a $2\times 2$ matrix function $Y_{1a_1}(x,t;z)$ that satisfies the following properties:

\begin{itemize}

 \item {} Analyticity: $Y_{1a_1}(x,t;z)$ is analytic in $\mathbb{C}\setminus(-b,b)$ and takes continuous boundary values on $(-b,b)$.

 \item {} Jump condition: The boundary values on the jump contour are defined as
 \bee
Y_{1a_1+}(x,t;z)=Y_{1a_1-}(x,t;z)V_{5a_1}(x,t;z),\quad z\in (-b,b),
\ene
where
\bee
V_{5a_1}(x,t;z)
=\begin{cases}
\left[\!\!\begin{array}{cc}
\delta f_{a_1}(z)e^{t\Delta g_{a_1}(z)}& 0  \vspace{0.05in}\\
i& \delta^{-1}f_{a_1}(z) e^{-t\Delta g_{a_1}(z)}
\end{array}\!\!\right],\quad z\in \gamma_{1a_1},\vspace{0.05in}\\
\left[\!\!\begin{array}{cc}
e^{tm_{2a_1}+m_{1a_1}}& 0  \vspace{0.05in}\\
iR(z)\widehat\delta f_{a_1}(z)e^{t(\widehat\Delta g_{a_1}(z)-2\xi z+8z^3)}& e^{-(tm_{2a_1}+m_{1a_1})}
\end{array}\!\!\right],\quad z\in [a,a_1],\vspace{0.05in}\\
\left[\!\!\begin{array}{cc}
e^{tm_{2a_1}+m_{1a_1}}& 0  \vspace{0.05in}\\
0& e^{-(tm_{2a_1}+m_{1a_1})}
\end{array}\!\!\right],\quad z\in [-a,a],\vspace{0.05in}\\
\left[\!\!\begin{array}{cc}
e^{tm_{2a_1}+m_{1a_1}}& \dfrac{iR(z)e^{-t(\widehat\Delta g_{a_1}(z)-2\xi z+8z^3)}}{\widehat\delta f_{a_1}(z)}  \vspace{0.05in}\\
0& e^{-(tm_{2a_1}+m_{1a_1})}
\end{array}\!\!\right],\quad z\in [-a_1,-a],\vspace{0.05in}\\
\left[\!\!\begin{array}{cc}
\widehat\delta f_{a_1}(z) e^{t\widehat\Delta g_{a_1}(z)} & i  \vspace{0.05in}\\
0& \dfrac{e^{-t\widehat\Delta g_{a_1}(z)}}{\widehat\delta f_{a_1}(z)}
\end{array}\!\!\right],\quad z\in \gamma_{2a_1};
\end{cases}
\ene
with  $\delta f_{a_1}(z)=f_{a_1+}(z)/f_{a_1-}(z),\, \widehat\delta f_{a_1}(z)=f_{a_1+}(z)f_{a_1-}(z),\, \Delta g_{a_1}(z)=g_{a_1+}(z)-g_{a_1-}(z),\, \widehat\Delta g_{a_1}(z)=g_{a_1+}(z)+g_{a_1-}(z).$

 \item {} Normalization: $Y_{1a_1}(x,t;z)=\mathbb{I}+\mathcal{O}(z^{-1}),\quad z\rightarrow\infty$.
\end{itemize}
\end{prop}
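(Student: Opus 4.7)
The plan is to verify the three listed properties of $Y_{1a_1}(x,t;z)$ directly from its defining formula $Y_{1a_1}(x,t;z)=Y_{a_1}(x,t;z)[f_{a_1}(z) e^{tg_{a_1}(z)}]^{\sigma_3}$, using the data of $Y_{a_1}$ provided in RHP \ref{RH10} together with the analyticity, jump, and normalization properties specified for the scalar functions $f_{a_1}$ and $g_{a_1}$. The function $Y_{a_1}$ is analytic off $\gamma_1\cup\gamma_2 \subset (-b,b)$, while the diagonal scalar factor $[f_{a_1}(z)e^{tg_{a_1}(z)}]^{\sigma_3}$ is analytic in $\mathbb{C}\setminus (-b,b)$, so their product is automatically analytic in $\mathbb{C}\setminus(-b,b)$ and takes continuous $\pm$ boundary values on that segment. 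The normalization $Y_{1a_1}=\mathbb{I}+\mathcal{O}(z^{-1})$ as $z\to\infty$ then follows at once by combining $Y_{a_1}=\mathbb{I}+\mathcal{O}(z^{-1})$, $f_{a_1}(z)=1+\mathcal{O}(z^{-1})$ and $g_{a_1}(z)=\mathcal{O}(z^{-1})$, the last of which gives $e^{tg_{a_1}(z)}=1+\mathcal{O}(z^{-1})$.

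The main step is the verification of the jump matrix $V_{5a_1}$. Setting $D(z)=[f_{a_1}(z)e^{tg_{a_1}(z)}]^{\sigma_3}$, the identity $Y_{1a_1+}=Y_{1a_1-}V_{5a_1}$ is equivalent to $V_{5a_1}=D_{-}^{-1}V_{4a_1}D_{+}$ at every point of $\gamma_1\cup\gamma_2$, and to $V_{5a_1}=D_{-}^{-1}D_{+}$ on $[-a,a]$, where $Y_{a_1}$ itself is analytic. I would partition $(-b,b)$ into the five natural subarcs $\gamma_{1a_1}=(a_1,b)$, $[a,a_1]$, $[-a,a]$, $[-a_1,-a]$, and $\gamma_{2a_1}=(-b,-a_1)$, and reduce $D_{-}^{-1}V_{4a_1}D_{+}$ piece by piece by substituting the appropriate jump data. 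On $\gamma_{1a_1}$ the product conditions $f_{a_1+}f_{a_1-}=R^{-1}(z)$ and $g_{a_1+}+g_{a_1-}=2\xi z-8z^3$ exactly cancel the factor $R(z)e^{-2tz(\xi-4z^2)}$ in the subdiagonal entry down to the constant $i$, while the diagonal entries become $\delta f_{a_1}(z)e^{t\Delta g_{a_1}(z)}$ and its reciprocal; $\gamma_{2a_1}$ is treated identically for the upper-triangular jump. On $[-a,a]$ only the ratio-type conditions $f_{a_1+}/f_{a_1-}=e^{m_{1a_1}}$ and $g_{a_1+}-g_{a_1-}=m_{2a_1}$ contribute, and $V_{5a_1}$ collapses to the purely diagonal matrix $\mathrm{diag}(e^{tm_{2a_1}+m_{1a_1}},\, e^{-(tm_{2a_1}+m_{1a_1})})$.

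The delicate pieces, and the main obstacle, are the two intermediate arcs $[a,a_1]$ and $[-a_1,-a]$. There, $Y_{a_1}$ still jumps by $V_{4a_1}|_{\gamma_1}$ (respectively $V_{4a_1}|_{\gamma_2}$) because $[a,a_1]\subset\gamma_1$ and $[-a_1,-a]\subset\gamma_2$, yet the scalar factor supplies only the ratio-type jumps inherited from $[-a_1,a_1]$; no product identity is available to simplify $f_{a_1+}f_{a_1-}$ or $g_{a_1+}+g_{a_1-}$ on these sub-arcs, so the unresolved combinations $\widehat\delta f_{a_1}(z)$ and $\widehat\Delta g_{a_1}(z)$ remain genuinely present in the off-diagonal entry, producing $iR(z)\widehat\delta f_{a_1}(z)e^{t(\widehat\Delta g_{a_1}(z)-2\xi z+8z^3)}$ on $[a,a_1]$ and its mirror image $iR(z)e^{-t(\widehat\Delta g_{a_1}(z)-2\xi z+8z^3)}/\widehat\delta f_{a_1}(z)$ on $[-a_1,-a]$. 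Careful bookkeeping of which scalar jump condition applies on which sub-interval, together with consistent use of the rightward orientation of $\gamma_1$ and $\gamma_2$ in forming $D_{\pm}$, then completes the verification of $V_{5a_1}$ and hence of RHP \ref{RH11}.
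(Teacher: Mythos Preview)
Your proposal is correct and is exactly the verification the paper has in mind: the paper simply states RHP~\ref{RH11} as an immediate consequence of the conjugation $Y_{1a_1}=Y_{a_1}[f_{a_1}e^{tg_{a_1}}]^{\sigma_3}$ and the listed scalar jump conditions, leaving the entry-by-entry computation of $V_{5a_1}=D_-^{-1}V_{4a_1}D_+$ on the five sub-arcs to the reader. Your identification of the ``delicate'' intermediate arcs $[a,a_1]$ and $[-a_1,-a]$---where the $V_{4a_1}$ jump of $Y_{a_1}$ persists but only the ratio-type scalar relations from $[-a_1,a_1]$ are available, forcing the residual combinations $\widehat\delta f_{a_1}$ and $\widehat\Delta g_{a_1}$ into the off-diagonal entries---is precisely the point of the construction and matches the paper's stated jump matrix.
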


\subsubsection{Opening lenses}

Based on the idea~\cite{Girotti-2}, we open the lens $O_1$ to pass through points $z=a_1$ and $z=b$ and lens $O_2$ to pass through points $z=-b$ and $z=-a_1$ (see Fig.~\ref{fig3}). We define a new function as follows:
\bee
R_{\pm}(z):=\pm R(z),\quad z\in\gamma_{1a_1}\cup\gamma_{2a_1}.
\ene

\begin{figure}[!t]
    \centering
 \vspace{-0.15in}
  {\scalebox{0.76}[0.76]{\includegraphics{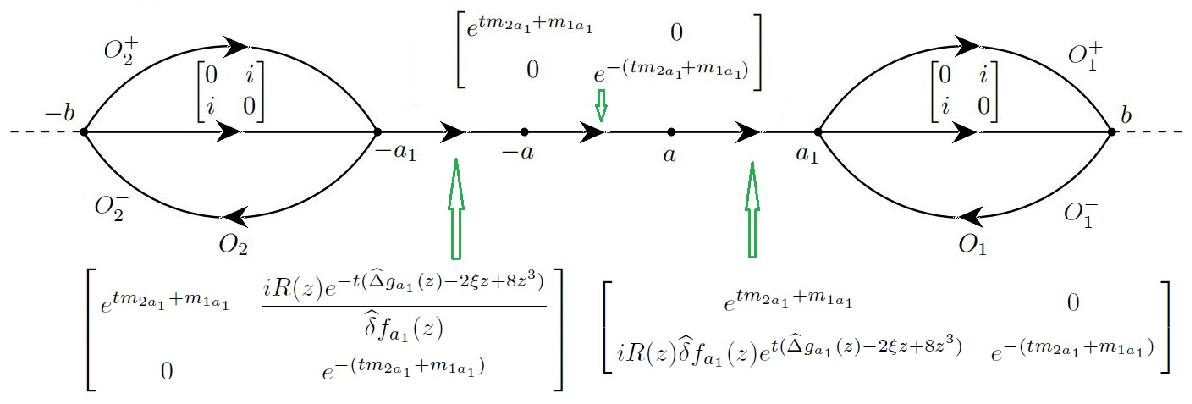}}}\hspace{-0.35in}
\vspace{0.15in}
\caption{Riemann-Hilbert problem \ref{RH12} of function $Y_{2a_1}(x,t;z)$. Opening lenses $O_1$ and $O_2$: $t\to+\infty$.}
   \label{fig3}
\end{figure}

Note that the jump matrix $V_5(x,t;z)|_{\gamma_{1a_1}}$ has the following decomposition:
\begin{align}
\begin{aligned}
V_{5a_1}(x,t;z)&=\left[\!\!\begin{array}{cc}
\delta f_{a_1}(z)e^{t\Delta g_{a_1}(z)}& 0  \vspace{0.05in}\\
i& \delta^{-1}f_{a_1}(z) e^{-t\Delta g_{a_1}(z)}
\end{array}\!\!\right]\vspace{0.05in}\\
&=\left[\!\!\begin{array}{cc}
1& \dfrac{ie^{-2t(g_{a_1-}(z)+4z^3-\xi z)}}{R_{-}(z)f_{a_1-}^2(z)}  \vspace{0.05in}\\
0& 1
\end{array}\!\!\right]\left[\!\!\begin{array}{cc}
0& i  \vspace{0.05in}\\
i& 0
\end{array}\!\!\right]\left[\!\!\begin{array}{cc}
1& -\dfrac{ie^{-2t(g_{a_1+}(z)+4z^3-\xi z)}}{R_{+}(z)f_{a_1+}^2(z)}  \vspace{0.05in}\\
0& 1
\end{array}\!\!\right],\quad z\in \gamma_{1a_1},
\end{aligned}
\end{align}
and the jump matrix $V_5(x,t;z)|_{\gamma_{2a_1}}$ has the following decomposition:
\begin{align}
\begin{aligned}
V_{5a_1}(x,t;z)&=\left[\!\!\begin{array}{cc}
\widehat\delta f_{a_1}(z) e^{t\widehat\Delta g_{a_1}(z)} & i  \vspace{0.05in}\\
0& \dfrac{e^{-t\widehat\Delta g_{a_1}(z)}}{\widehat\delta f_{a_1}(z)}
\end{array}\!\!\right]\vspace{0.05in}\\
&=\left[\!\!\begin{array}{cc}
1& 0  \vspace{0.05in}\\
\dfrac{ie^{2t(g_{a_1-}(z)+4z^3-\xi z)}f_-^2(z)}{R_{-}(z)}& 1
\end{array}\!\!\right]\left[\!\!\begin{array}{cc}
0& i  \vspace{0.05in}\\
i& 0
\end{array}\!\!\right]\left[\!\!\begin{array}{cc}
1& 0  \vspace{0.05in}\\
-\dfrac{ie^{2t(g_{a_1+}(z)+4z^3-\xi z)}f_+^2(z)}{R_{+}(z)}& 1
\end{array}\!\!\right],\quad z\in \gamma_{2a_1}.
\end{aligned}
\end{align}

We can proceed with opening lenses. Make the following transformation:
\bee\label{Y2-Y1-a1}
Y_{2a_1}(x,t;z)=\begin{cases}
Y_{1a_1}(x,t;z)\left[\!\!\begin{array}{cc}
1& \dfrac{ie^{-2t(g_{a_1}(z)+4z^3-\xi z)}}{R(z)f^2(z)}  \vspace{0.05in}\\
0& 1
\end{array}\!\!\right],\quad \mathrm{in~the~upper/lower~lens}~O_1,\vspace{0.05in}\\
Y_{1a_1}(x,t;z)\left[\!\!\begin{array}{cc}
1& 0  \vspace{0.05in}\\
\dfrac{ie^{2t(g_{a_1}(z)+4z^3-\xi z)}f^2(z)}{R(z)}& 1
\end{array}\!\!\right],\quad \mathrm{in~the~upper/lower~lens}~O_2,\vspace{0.05in}\\
Y_{1a_1}(x,t;z),\quad \mathrm{otherwise}.
\end{cases},
\ene

Then matrix function $Y_{2a_1}(x,t;z)$ satisfies the following Riemann-Hilbert problem.

\begin{prop}\label{RH12} 
Find a $2\times 2$ matrix function $Y_{2a_1}(x,t;z)$ that satisfies the following properties:

\begin{itemize}

 \item {} Analyticity: $Y_2(x,t;z)$ is analytic in $\mathbb{C}\setminus((-b,b)\cup O_1\cup O_2)$ and takes continuous boundary values on $(-b,b)\cup O_1\cup O_2$.

 \item {} Jump condition: The boundary values on the jump contour are defined as
 \bee
Y_{2a_1+}(x,t;z)=Y_{2a_1-}(x,t;z)V_{6a_1}(x,t;z),\quad z\in (-b,b)\cup O_1^+\cup O_1^-\cup O_2^+\cup O_2^-,
\ene
where for $z\in O_1^+\cup O_1^-\cup O_2^+\cup O_2^-$
\bee\label{V6-a1}
V_{6a_1}(x;z)
=\begin{cases}
\left[\!\!\begin{array}{cc}
1& -\dfrac{ie^{-2t(g_{a_1}(z)+4z^3-\xi z)}}{R(z)f^2(z)}  \vspace{0.05in}\\
0& 1
\end{array}\!\!\right],\quad z\in O_1^+\cup O_1^-,\vspace{0.05in}\\
\left[\!\!\begin{array}{cc}
1& 0  \vspace{0.05in}\\
-\dfrac{ie^{2t(g_{a_1}(z)+4z^3-\xi z)}f^2(z)}{R(z)}& 1
\end{array}\!\!\right],\quad z\in O_2^+\cup O_2^-,\v\\
\end{cases}
\ene
and for $z\in (-b,b)$
\bee
V_{6a_1}(x;z)
=\begin{cases}
\left[\!\!\begin{array}{cc}
0& i  \vspace{0.05in}\\
i& 0
\end{array}\!\!\right],\quad z\in \gamma_{1a_1}\cup \gamma_{2a_1},\vspace{0.05in}\\
\left[\!\!\begin{array}{cc}
e^{tm_{2a_1}+m_{1a_1}}& 0  \vspace{0.05in}\\
iR(z)\widehat\delta f_{a_1}(z)e^{t(\widehat\Delta g_{a_1}(z)-2\xi z+8z^3)}& e^{-(tm_{2a_1}+m_{1a_1})}
\end{array}\!\!\right],\quad z\in [a,a_1],\vspace{0.05in}\\
\left[\!\!\begin{array}{cc}
e^{tm_{2a_1}+m_{1a_1}}& 0  \vspace{0.05in}\\
0& e^{-(tm_{2a_1}+m_{1a_1})}
\end{array}\!\!\right],\quad z\in [-a,a],\vspace{0.05in}\\
\left[\!\!\begin{array}{cc}
e^{tm_{2a_1}+m_{1a_1}}& \dfrac{iR(z)e^{-t(\widehat\Delta g_{a_1}(z)-2\xi z+8z^3)}}{\widehat\delta f_{a_1}(z)}  \vspace{0.05in}\\
0& e^{-(tm_{2a_1}+m_{1a_1})}
\end{array}\!\!\right],\quad z\in [-a_1,-a],\vspace{0.05in}\\
\end{cases}
\ene

 \item {} Normalization: $Y_{2a_1}(x,t;z)=\mathbb{I}+\mathcal{O}(z^{-1}),\quad z\rightarrow\infty.$

\end{itemize}
\end{prop}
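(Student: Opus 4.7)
\begin{pf} (Proposal.) The proof is a direct verification that the transformation (\ref{Y2-Y1-a1}) sends the jump contour and jump matrix of Riemann-Hilbert problem \ref{RH11} to the new ones listed in Proposition \ref{RH12}. The plan is to check each of the three defining properties (analyticity, jumps, normalization) separately, treating the opened lens regions and the original cut $(-b,b)$ in turn.

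\textbf{Step 1 (Analyticity and normalization).} First I would note that the right multipliers in (\ref{Y2-Y1-a1}) are unit-triangular matrices whose only non-trivial entries are built from $f_{a_1}(z)$, $g_{a_1}(z)$ and $R(z)$. Since we chose the lens boundaries $O_1, O_2$ to pass through the branch points $\pm a_1,\pm b$ and to stay within the domain of analyticity of $f_{a_1},g_{a_1},R$ (away from $(-b,b)$), these triangular factors are analytic in $O_1^\pm\cup O_2^\pm$. Combined with the analyticity of $Y_{1a_1}$ away from $(-b,b)$, this gives the required analyticity of $Y_{2a_1}$ in $\mathbb{C}\setminus((-b,b)\cup O_1\cup O_2)$. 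The normalization $Y_{2a_1}=\mathbb{I}+\mathcal{O}(z^{-1})$ at infinity is automatic, since $O_1\cup O_2$ is compact and for $|z|$ large we have $Y_{2a_1}=Y_{1a_1}$.

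\textbf{Step 2 (Jumps on the lens boundaries).} On each of the four lens contours $O_1^\pm,O_2^\pm$ the function $Y_{1a_1}$ is analytic (it has no jumps there), so
\[
V_{6a_1}(x,t;z) = G_+^{-1}(z)\,G_-(z),
\]
where $G_\pm$ is the triangular factor used in (\ref{Y2-Y1-a1}) on either side of the lens boundary; on the ``outside'' side the factor is the identity. This directly yields the two upper- and lower-triangular jump matrices (\ref{V6-a1}) on $O_1^\pm\cup O_2^\pm$.

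\textbf{Step 3 (Jumps on $\gamma_{1a_1}\cup\gamma_{2a_1}$ and $[-a_1,-a]\cup[-a,a]\cup[a,a_1]$).} This is the main step and the one I expect to be the principal obstacle, because it requires combining the factorizations displayed just before (\ref{Y2-Y1-a1}) with the triangular multipliers from both lenses simultaneously. On $\gamma_{1a_1}$ I would substitute the factorization
\[
V_{5a_1}|_{\gamma_{1a_1}}=\begin{bmatrix} 1 & \frac{i\,e^{-2t(g_{a_1-}+4z^3-\xi z)}}{R_-(z)f_{a_1-}^2(z)} \\ 0 & 1 \end{bmatrix}\begin{bmatrix} 0 & i \\ i & 0\end{bmatrix}\begin{bmatrix} 1 & -\frac{i\,e^{-2t(g_{a_1+}+4z^3-\xi z)}}{R_+(z)f_{a_1+}^2(z)} \\ 0 & 1 \end{bmatrix}
\]
into $Y_{2a_1+}=Y_{1a_1+}G_+=Y_{1a_1-}V_{5a_1}G_+$, and then move the appropriate outer triangular factor through using the analytic extension inside $O_1$. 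The two unit-triangular pieces are precisely the upper triangular multipliers from (\ref{Y2-Y1-a1}) evaluated on the two sides of $\gamma_{1a_1}$, so they cancel, leaving only the anti-diagonal middle factor $\bigl[\begin{smallmatrix}0&i\\i&0\end{smallmatrix}\bigr]$. The same argument, using the alternative lower-triangular factorization on $\gamma_{2a_1}$, produces the same anti-diagonal jump there. On $[-a,a]$ the lens multipliers are the identity (the contour does not cross $O_1\cup O_2$), so the diagonal jump with entries $e^{\pm(tm_{2a_1}+m_{1a_1})}$ is inherited from $V_{5a_1}$. On $[a,a_1]$ and $[-a_1,-a]$ only one lens (on one side) contributes, so the triangular factor there survives and combines with the diagonal jump from $V_{5a_1}$ to give the remaining two entries in $V_{6a_1}$; here I would use the identities $\delta f_{a_1}=f_{a_1+}/f_{a_1-}$, $\widehat\delta f_{a_1}=f_{a_1+}f_{a_1-}$ and the boundary conditions for $f_{a_1},g_{a_1}$ to simplify the exponents, being careful with signs inherited from the orientation of the lenses. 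The hardest part here is bookkeeping of which of $g_{a_1\pm}$ appears on which side of a given contour, and that the analytic continuation of $f_{a_1}^2/R$ into each lens matches the boundary values used in the factorization. Once all of these match, Lemma \ref{le2}-type inequalities (for the shifted phase $\mathrm{Re}(g_{a_1}(z)+4z^3-\xi z)$) guarantee the factorization is the correct one for subsequent steepest-descent analysis, but these are not needed for the statement itself.
\end{pf}
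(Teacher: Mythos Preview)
Your approach is the same as the paper's: the paper does not give a separate proof of RHP~\ref{RH12} but simply states it after displaying the two factorizations of $V_{5a_1}$ on $\gamma_{1a_1}$ and $\gamma_{2a_1}$ and defining the lens-opening transformation~(\ref{Y2-Y1-a1}); your Steps~1--3 spell out exactly this standard verification.

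There is one inaccuracy in your Step~3. The lenses $O_1$ and $O_2$ are opened only around $\gamma_{1a_1}=(a_1,b)$ and $\gamma_{2a_1}=(-b,-a_1)$, so the intervals $[a,a_1]$ and $[-a_1,-a]$ lie \emph{outside} both lenses (see Fig.~\ref{fig3}). Hence on those two segments the transformation~(\ref{Y2-Y1-a1}) is the identity, $Y_{2a_1}=Y_{1a_1}$, and the jump $V_{6a_1}$ there is literally inherited from $V_{5a_1}$ with no lens factor to combine; the off-diagonal entry $iR(z)\widehat\delta f_{a_1}(z)e^{t(\widehat\Delta g_{a_1}-2\xi z+8z^3)}$ is already present in $V_{5a_1}|_{[a,a_1]}$ and is not produced by a ``surviving triangular factor.'' Once you correct this bookkeeping, the rest of your argument goes through unchanged.
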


\begin{lemma}~\cite{Girotti-1}\label{le-a1} The following inequalities hold:
\begin{align}
\begin{aligned}
&\mathrm{Re}(g_{a_1}(z)+4z^3-\xi z)>0,\quad z\in O_1\setminus\{a_1,b\},\v\\
&\mathrm{Re}(g_{a_1}(z)+4z^3-\xi z)<0,\quad z\in O_2\setminus\{-a_1,-b\},\v\\
&\mathrm{Re}(g_{a_1+}(z)+g_{a_1-}(z)-2\xi z+8z^3)<0,\quad z\in [a,a_1),\v\\
&\mathrm{Re}(g_{a_1+}(z)+g_{a_1-}(z)-2\xi z+8z^3)>0,\quad z\in (-a_1,-a].
\end{aligned}
\end{align}

\end{lemma}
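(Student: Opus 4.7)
The strategy is to exploit the odd symmetry $g_{a_1}(-z)=-g_{a_1}(z)$, which follows because the integrand $N(\lambda)/C_{a_1}(\lambda)$ in the formula for $g_{a_1}$ is even in $\lambda$ and the normalization at infinity is $O(1/z)$. Under $z\mapsto -z$, inequality (ii) reduces to (i) and (iv) reduces to (iii), so it suffices to establish the first and third inequalities directly. The central object is $h(z):=g_{a_1}(z)+4z^3-\xi z$: the jump relation $g_{a_1+}+g_{a_1-}=2\xi z-8z^3$ on $\gamma_{1a_1}$ rewrites as $h_{+}+h_{-}=0$, and combined with Schwarz reflection $h(\bar z)=\overline{h(z)}$ this yields $\operatorname{Re}(h_{\pm})=0$ on $\gamma_{1a_1}$. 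Direct differentiation produces $h'(z)=N(z)/C_{a_1}(z)$, where $N$ is the quartic appearing in the numerator of $g'_{a_1}$.

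A key algebraic step is to verify that the defining equation $\xi=2(a_1^2+b^2)+4a_1^2(a_1^2-b^2)/[a_1^2+b^2(K_2/K_1-1)]$ is equivalent to $N(\pm a_1)=0$; this is done by substitution, exploiting the common denominator $a_1^2+b^2(K_2/K_1-1)$ appearing in both the formula for $\xi$ and in the reduced expression for $N(a_1)$. Consequently $N(z)=12(z^2-a_1^2)(z^2-z_0^2)$ for an explicit $z_0^2=\xi/12+(b^2-a_1^2)/2$. For (i), the harmonic function $\operatorname{Re}(h)$ vanishes on $\gamma_{1a_1}\cup\gamma_{2a_1}$ and on the imaginary axis (by oddness), is continuous across $(-a_1,a_1)$, and behaves like $4\operatorname{Re}(z^3)-\xi\operatorname{Re}(z)$ at infinity. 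The local expansions $h\sim c_{\text{soft}}(z-a_1)^{3/2}$ at the soft edge $a_1$ and $h\sim c_{\text{hard}}(z-b)^{1/2}$ at the hard edge $b$ fix the sign of $\operatorname{Re}(h)$ in one-sided neighborhoods of the endpoints and identify the side on which to draw the lens $O_1$; the maximum principle applied to $\operatorname{Re}(h)$ on the connected component bounded by the relevant arcs then extends the strict inequality throughout $O_1\setminus\{a_1,b\}$.

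For (iii), set $F(z):=g_{a_1+}(z)+g_{a_1-}(z)-2\xi z+8z^3$. Continuity of $g_{a_1}$ at the soft edge $a_1$ (guaranteed by $N(a_1)=0$, which promotes the would-be $(z-a_1)^{-1/2}$ singularity to $(z-a_1)^{1/2}$) together with the jump relation on $\gamma_{1a_1}$ gives $F(a_1)=0$. On $(a,a_1)\subset(-a_1,a_1)$ the jump of $g_{a_1}$ is the imaginary constant $m_{2a_1}$, so $g'_{a_1+}=g'_{a_1-}$ on this interval, while $C_{a_1}(z)$ is real and positive there; hence
\[
F'(z) \;=\; 2\bigl(g'_{a_1}(z)+12z^2-\xi\bigr) \;=\; \frac{2N(z)}{C_{a_1}(z)}, \qquad z\in(a,a_1),
\]
so $F(z)=-2\int_z^{a_1}N(\lambda)/C_{a_1}(\lambda)\,d\lambda$. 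The sign of this integral, determined by the position of the second pair of zeros $\pm z_0$ of $N$ relative to $(a,a_1)$ under the velocity constraint $\xi_c<\xi<4\beta b^2$, yields $F(z)<0$ as required.

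The main obstacle is the joint sign bookkeeping near the soft edge $a_1$, where $(z^2-a_1^2)$ produces an apparent sign change in $N$ while $C_{a_1}$ simultaneously has a one-half order zero. The delicate point is that the lens direction and the interval direction pick up the sign of $N$ through different combinations: on the lens side $C_{a_1+}$ is purely imaginary so that $\operatorname{Re}(h)$ tracks $N(x)/\sqrt{(x^2-a_1^2)(b^2-x^2)}$, whereas on the interval side $C_{a_1}$ is real and positive so that $F$ is controlled by the integrated $N$. Verifying that the position of $\pm z_0$ for $\xi\in(\xi_c,4\beta b^2)$ makes both inequalities (i) and (iii) hold simultaneously is the technical heart of the argument; it reduces, via the explicit factorization of $N$, to elementary but careful analysis of the quartic, with the endpoint case $\xi\to\xi_c^{+}$ corresponding to $a_1\to a^{+}$ and playing the role of a consistency check.
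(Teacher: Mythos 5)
The paper offers no proof of this lemma at all --- it is stated bare, and even the analogous statement for the $x\to-\infty$ regime (Lemma \ref{le2}) is outsourced to Girotti et al. --- so there is no in-paper argument to compare against. Your architecture is the standard and correct one: the reduction of (ii) to (i) and (iv) to (iii) via $g_{a_1}(-z)=-g_{a_1}(z)$ is legitimate (best justified by uniqueness of the $g$-function RHP, whose jump data are odd, rather than by evenness of the integrand alone); the identity $h'=N/C_{a_1}$ for $h=g_{a_1}+4z^3-\xi z$ is right; $\mathrm{Re}\,h_\pm=0$ on the bands follows as you say; and your key algebraic observation --- that the defining equation for $a_1$ is precisely the soft-edge condition $N(\pm a_1)=0$, yielding $N(z)=12(z^2-a_1^2)(z^2-z_0^2)$ with $z_0^2=\xi/12+(b^2-a_1^2)/2$ --- checks out and is genuinely the crux of any proof.

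Two problems remain. First, a sign slip: the normalization $g_{a_1}(z)=\mathcal{O}(z^{-1})$ forces the branch $C_{a_1}(z)\sim z^2$ at infinity, and with that branch $C_{a_1}$ is \emph{negative} on the gap $(-a_1,a_1)$ (by continuity along the imaginary axis, where $C_{a_1}(iy)\sim-y^2$, one gets $C_{a_1}(0)=-a_1b$), not ``real and positive'' as you assert; this reverses the sign of $N$ that (iii) requires on $(a,a_1)$. Second, and more seriously, the step you defer as ``the technical heart'' is the entire content of the lemma. For (i) you must show $z_0^2\le a_1^2$ throughout $\xi\in(\xi_c,4b^2)$, so that $N>0$ on the band $(a_1,b)$ and the normal derivative $\partial_y\mathrm{Re}\,h|_{+}=N/|C_{a_1}|$ is positive there, before the thin-lens/maximum-principle step applies. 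For (iii) the pointwise sign of $N$ on $(a,a_1)$ is \emph{not} in general constant ($z_0^2\to b^2/3$ as $a_1\to b$, so $z_0$ can lie inside $(a,a_1)$ when $a<b/\sqrt{3}$); since $F'=2N/C_{a_1}=-2N/|C_{a_1}|$, the function $F$ decreases on $(a,z_0)$ and increases back to $F(a_1)=0$ on $(z_0,a_1)$, so (iii) is equivalent to the single integral inequality $F(a)=2\int_a^{a_1}N(\lambda)|C_{a_1}(\lambda)|^{-1}d\lambda<0$, which still has to be established for all admissible $\xi$. As written, the proposal is a sound outline with one incorrect intermediate assertion, not a complete proof.
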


According to Lemma \ref{le-a1}, we know that the off-diagonal entries of the jump matrix defined by Eq.~(\ref{V6-a1}) along the
upper and lower lenses $O_1,O_2$ are exponentially decay when $t\to+\infty$.

\subsubsection{The outer parametrix}

In order to solve the Riemann-Hilbert problem \ref{RH12}, we will introduce a two-sheeted Riemann surface $\Omega$ associated to the function
function $C_{a_1}(z)$ (see Fig.~\ref{fig4})
\bee
\Omega:=\left\{(\zeta_1,\zeta_2)\in\mathbb{C}^2|\zeta_2^2=C_{a_1}^2(z)=(z^2-a_1^2)(z^2-b^2)\right\}.
\ene

\begin{figure}[!t]
    \centering
 \vspace{-0.15in}
  {\scalebox{0.55}[0.55]{\includegraphics{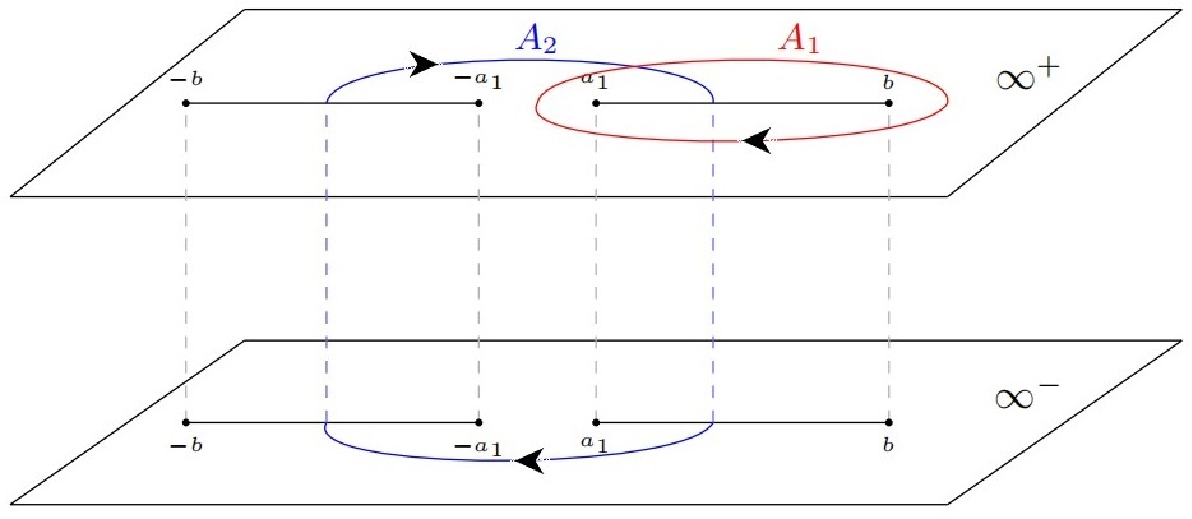}}}\hspace{-0.35in}
\vspace{0.05in}
\caption{Riemann surface $\Omega$.}
   \label{fig4}
\end{figure}
We construct a new Riemann-Hilbert problem which only has jumps on $(-b,b)$.

\begin{prop}\label{RH13} 
Find a $2\times 2$ matrix function $Y^{o}(x,t;z)$ that satisfies the following properties:

\begin{itemize}

 \item {} Analyticity: $Y^o(x,t;z)$ is analytic in $\mathbb{C}\setminus(-b,b)$ and takes continuous boundary values on $(-b,b)$.

 \item {} Jump condition: The boundary values on the jump contour are defined as
 \bee
Y^o_{+}(x,t;z)=Y^o_{-}(x,t;z)V_{7a_1}(x,t;z),\quad z\in \gamma_{1a_1}\cup[-a_1,a_1]\cup\gamma_{2a_1},
\ene
where
\bee
V_{7a_1}(x,t;z)
=\begin{cases}
\left[\!\!\begin{array}{cc}
0& i  \vspace{0.05in}\\
i& 0
\end{array}\!\!\right],\quad z\in \gamma_{1a_1}\cup \gamma_{2a_1},\vspace{0.05in}\\
\left[\!\!\begin{array}{cc}
e^{tm_{2a_1}+m_{1a_1}}& 0  \vspace{0.05in}\\
0& e^{-(tm_{2a_1}+m_{1a_1})}
\end{array}\!\!\right],\quad z\in [-a_1,a_1].
\end{cases}
\ene

 \item {} Normalization: $Y^o(x,t;z)=\mathbb{I}+\mathcal{O}(z^{-1}),\quad z\rightarrow\infty.$
\end{itemize}
\end{prop}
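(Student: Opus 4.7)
The plan is to construct the explicit solution to Riemann-Hilbert problem \ref{RH13} by mirroring verbatim the Jacobi theta-function construction that produced $Y^o(x;z)$ in Section 3 (Lemma \ref{le4} and Eq.~\eqref{Yoxz}), with the fixed endpoint $a$ replaced everywhere by the $\xi$-dependent modulation point $a_1\in(a,b)$, and with the static exponent $xm_2+m_1$ replaced by the spacetime-dependent exponent $tm_{2a_1}+m_{1a_1}$. Concretely, I would first introduce the genus-one two-sheeted Riemann surface
\begin{equation*}
\Omega := \{(\zeta_1,\zeta_2)\in\mathbb{C}^2 \mid \zeta_2^2 = C_{a_1}^2(z) = (z^2-a_1^2)(z^2-b^2)\},
\end{equation*}
compute the normalized $A$-cycle period $\tau_{a_1} = iK_1(\sqrt{1-m_{a_1}^2})/(2K_1(m_{a_1}))$ with $m_{a_1}=a_1/b$, and define the auxiliary scalar functions
\begin{equation*}
m_{3a_1}(z)=\int_{b}^{z}\frac{m_{2a_1}\,d\lambda}{4\pi iC_{a_1}(\lambda)},\qquad m_{5a_1}(z) = \left(\frac{(z-b)(z+a_1)}{(z-a_1)(z+b)}\right)^{1/4},
\end{equation*}
whose jump relations on $\gamma_{1a_1}\cup\gamma_{2a_1}\cup[-a_1,a_1]$ are those already worked out in Section 3 with $a\mapsto a_1$.

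Next I would write down the ansatz $Y^o(x,t;z) = \tfrac{1}{2}(\psi_{ij}(x,t;z))_{i,j=1,2}$, where each $\psi_{ij}$ is obtained from its Section-3 counterpart by substituting $m_{3a_1}, m_{5a_1}, \tau_{a_1}$ for $m_3,m_5,\tau$ and $tm_{2a_1}+m_{1a_1}$ for $xm_2+m_1$. Verification then reduces to three routine checks. First, analyticity of each $\psi_{ij}$ on $\mathbb{C}\setminus[-b,b]$ follows because the apparent poles of the theta quotients at zeros of $\vartheta_3(\pm m_{3a_1}(z)\pm\tfrac14;\tau_{a_1})$ are cancelled by the zeros of $m_{5a_1}(z)\pm m_{5a_1}^{-1}(z)$ at the branch points $\pm a_1,\pm b$. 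Second, the normalization $Y^o\to\mathbb{I}$ as $z\to\infty$ follows from $m_{3a_1}(\infty)=0$, $m_{5a_1}(\infty)=1$, and the evenness of $\vartheta_3$. Third, the jump conditions on $\gamma_{1a_1}\cup\gamma_{2a_1}$ and on $[-a_1,a_1]$ are established by repeating the four computations of Lemma \ref{le4}: the $m_{5a_1,+}=im_{5a_1,-}$ identity swaps diagonal and off-diagonal pieces across $\gamma_{1a_1}\cup\gamma_{2a_1}$, while the quasi-periodicity \eqref{peri-cond} of $\vartheta_3$ combined with $m_{3a_1,+}-m_{3a_1,-}=-\tau_{a_1}$ on $(-a_1,a_1)$ produces precisely the diagonal factor $e^{\pm(tm_{2a_1}+m_{1a_1})}$.

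The main obstacle I anticipate is not the algebraic verification, which is a direct transcription of Section 3, but rather ensuring that the construction is well-posed throughout the velocity regime $\xi_c<\xi<4\beta b^2$. In particular, one must verify that the implicit equation
\begin{equation*}
\xi = 2(a_1^2+b^2)+\frac{4a_1^2(a_1^2-b^2)}{a_1^2+b^2(K_2(m_{a_1})/K_1(m_{a_1})-1)}
\end{equation*}
admits a unique root $a_1(\xi)\in(a,b)$ on this interval, so that $m_{a_1}\in(m,1)$, the modular parameter $\tau_{a_1}$ stays in the upper half-plane, and $\vartheta_3(\,\cdot\,;\tau_{a_1})$ converges uniformly on compact subsets. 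This amounts to monotonicity of the right-hand side in $a_1$, which I would establish by differentiating and using the Legendre relation between $K_1$ and $K_2$; the boundary value $\xi_c$ corresponds precisely to $a_1=a$, consistent with Eq.~\eqref{xi-c}, and $\xi=4\beta b^2$ corresponds to $a_1\to b$, where the construction degenerates and matches the trivial asymptotics of Section 5.1.
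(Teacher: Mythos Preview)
Your proposal is correct and follows exactly the paper's approach: the paper states Lemma~\ref{le4-a1} (the direct transcription of Lemma~\ref{le4} with $a\mapsto a_1$ and $xm_2+m_1\mapsto tm_{2a_1}+m_{1a_1}$) and then records $Y^o(x,t;z)=\tfrac12(\psi_{ij})$ without further argument; your additional checks on analyticity, normalization, and the well-posedness of $a_1(\xi)$ go beyond what the paper spells out. One caveat worth fixing: your formula $m_{3a_1}(z)=\int_b^z\frac{m_{2a_1}\,d\lambda}{4\pi iC_{a_1}(\lambda)}$ is not the correct transcription, because $m_{2a_1}$ carries an extra factor $(2a_1^2+2b^2-\xi)$ that $m_2$ did not; the Abel map must be normalized as the paper writes it, $m_{3a_1}(z)=-\int_b^z\frac{b\,d\lambda}{4K_1(m_{a_1})C_{a_1}(\lambda)}$, so that its jump across $(-a_1,a_1)$ equals exactly $-\tau_{a_1}$ and the quasi-periodicity computation of $\vartheta_3$ then reproduces $e^{\pm(tm_{2a_1}+m_{1a_1})}$ rather than a spurious power thereof.
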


Then, we define the functions:
\begin{align}
\begin{aligned}
&\tau:=\oint_{A_1}-\dfrac{b}{4K_1(m_{a_1})C_{a_1}(\lambda)}d\lambda,\v\\
&m_{3a_1}(z)=\int_{b}^{z}-\dfrac{b}{4K_1(m_{a_1})C_{a_1}(\lambda)}d\lambda,\v\\ &m_{5a_1}(z)=\left(\frac{(z-b)(z+a_1)}{(z-a_1)(z+b)}\right)^{\frac14},
\end{aligned}
\end{align}
which have the following jump relations:
\begin{align}
\begin{aligned}
&m_{3a_1+}(z)+m_{3a_1-}(z)=0,\quad z\in\gamma_{1a_1},\v\\
&m_{3a_1+}(z)-m_{3a_1-}(z)=-\tau,\quad z\in(-a_1,a_1),\v\\
&m_{3a_1+}(z)+m_{3a_1-}(z)=-1,\quad z\in\gamma_{2a_1},\v\\
&m_{3a_1+}(z)-m_{3a_1-}(z)=0,\quad z\in[b,+\infty),\v\\
&m_{5a_1+}(z)=im_{5a_1-}(z),\quad z\in\gamma_{1a_1}\cup\gamma_{2a_1},
\end{aligned}
\end{align}

\begin{lemma}~\cite{Girotti-1}\label{le4-a1} Define the functions:
\begin{align}
\begin{aligned}
&\psi_{11}(z)=(m_{5a_1}(z)+m_{5a_1}^{-1}(z))\dfrac{\vartheta_3(0;\tau)\vartheta_3(m_{3a_1}(z)+\frac{tm_{2a_1}+m_{1a_1}}{2\pi i}+\frac14;\tau)}{\vartheta_3(m_{3a_1}(z)+\frac14;\tau)\vartheta_3(\frac{tm_{2a_1}+m_{1a_1}}{2\pi i};\tau)},\v\\
&\psi_{21}(z)=(m_{5a_1}(z)-m_{5a_1}^{-1}(z))\dfrac{\vartheta_3(0;\tau)\vartheta_3(m_{3a_1}(z)+\frac{tm_{2a_1}+m_{1a_1}}{2\pi i}-\frac14;\tau)}{\vartheta_3(m_{3a_1}(z)-\frac14;\tau)\vartheta_3(\frac{tm_{2a_1}+m_{1a_1}}{2\pi i};\tau)},\v\\
&\psi_{12}(z)=(m_{5a_1}(z)-m_{5a_1}^{-1}(z))\dfrac{\vartheta_3(0;\tau)\vartheta_3(-m_{3a_1}(z)+\frac{tm_{2a_1}+m_{1a_1}}{2\pi i}+\frac14;\tau)}{\vartheta_3(-m_{3a_1}(z)+\frac14;\tau)\vartheta_3(\frac{tm_{2a_1}+m_{1a_1}}{2\pi i};\tau)},\v\\
&\psi_{22}(z)=(m_{5a_1}(z)+m_{5a_1}^{-1}(z))\dfrac{\vartheta_3(0;\tau)\vartheta_3(-m_{3a_1}(z)+\frac{tm_{2a_1}+m_{1a_1}}{2\pi i}-\frac14;\tau)}{\vartheta_3(-m_{3a_1}(z)-\frac14;\tau)\vartheta_3(\frac{tm_{2a_1}+m_{1a_1}}{2\pi i};\tau)},
\end{aligned}
\end{align}
which have the following jump relations:
\begin{align}
\begin{aligned}
&\psi_{11+}(z)=i\psi_{12-}(z),\quad \psi_{21+}(z)=i\psi_{22-}(z),\quad z\in\gamma_{1a_1}\cup\gamma_{2a_1},\v\\
&\psi_{12+}(z)=i\psi_{11-}(z),\quad \psi_{22+}(z)=i\psi_{21-}(z),\quad z\in\gamma_{1a_1}\cup\gamma_{2a_1},\v\\
&\psi_{11+}(z)=\psi_{11-}(z)e^{tm_{2a_1}+m_{1a_1}},\quad \psi_{12+}(z)=\psi_{12-}(z)e^{-(tm_{2a_1}+m_{1a_1})},\quad z\in(-a_1,a_1),\v\\
&\psi_{21+}(z)=\psi_{21-}(z)e^{tm_{2a_1}+m_{1a_1}},\quad \psi_{22+}(z)=\psi_{22-}(z)e^{-(tm_{2a_1}+m_{1a_1})},\quad z\in(-a_1,a_1).
\end{aligned}
\end{align}

\end{lemma}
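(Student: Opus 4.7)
The plan is to mimic the argument used in the proof of Lemma~\ref{le4} almost verbatim, with the triple $(m_3,m_5,x)$ replaced by $(m_{3a_1},m_{5a_1},t)$ and the pair $(m_1,m_2)$ by $(m_{1a_1},m_{2a_1})$. The three ingredients that are needed are: (i) the radical $m_{5a_1}(z)$, with $m_{5a_1+}=im_{5a_1-}$ on $\gamma_{1a_1}\cup\gamma_{2a_1}$, so that $(m_{5a_1}\pm m_{5a_1}^{-1})_+=i(m_{5a_1}\mp m_{5a_1}^{-1})_-$; (ii) the additive jumps $m_{3a_1+}+m_{3a_1-}=0$ on $\gamma_{1a_1}$, $m_{3a_1+}+m_{3a_1-}=-1$ on $\gamma_{2a_1}$, and $m_{3a_1+}-m_{3a_1-}=-\tau$ on $(-a_1,a_1)$; and (iii) the quasi-periodicity (\ref{peri-cond}) of $\vartheta_3(\,\cdot\,;\tau)$.

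For $z\in\gamma_{1a_1}$, substituting $m_{3a_1+}(z)=-m_{3a_1-}(z)$ converts the argument $m_{3a_1+}(z)+\tfrac14$ appearing in $\psi_{11+}$ into $-m_{3a_1-}(z)+\tfrac14$, which is precisely the argument of the theta factor in $\psi_{12-}$; the scalar prefactor $i$ is supplied by $(m_{5a_1}+m_{5a_1}^{-1})_+=i(m_{5a_1}-m_{5a_1}^{-1})_-$, yielding $\psi_{11+}=i\psi_{12-}$. On $\gamma_{2a_1}$ the relation $m_{3a_1+}+m_{3a_1-}=-1$ inserts an additional integer shift in the theta arguments, which is absorbed by the trivial period $\vartheta_3(z+1;\tau)=\vartheta_3(z;\tau)$ (the $l\in\mathbb{Z}$ branch of (\ref{peri-cond})). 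The remaining three relations of this block, namely $\psi_{21+}=i\psi_{22-}$, $\psi_{12+}=i\psi_{11-}$, and $\psi_{22+}=i\psi_{21-}$, follow by the same manipulation with the roles of the diagonal and off-diagonal entries (or of $+$ and $-$) interchanged.

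For $z\in(-a_1,a_1)$, $m_{5a_1}$ is analytic, so only the theta factors jump, through $m_{3a_1+}=m_{3a_1-}-\tau$. Applying (\ref{peri-cond}) to shift the argument by $-\tau$ separately in the numerator $\vartheta_3\!\bigl(m_{3a_1+}+\tfrac{tm_{2a_1}+m_{1a_1}}{2\pi i}+\tfrac14;\tau\bigr)$ and the denominator $\vartheta_3(m_{3a_1+}+\tfrac14;\tau)$, exactly as in (\ref{le4-2}), the pure-$\tau$ prefactors cancel between numerator and denominator and the surviving mismatch is $e^{2\pi i\cdot\frac{tm_{2a_1}+m_{1a_1}}{2\pi i}}=e^{tm_{2a_1}+m_{1a_1}}$, giving $\psi_{11+}=\psi_{11-}e^{tm_{2a_1}+m_{1a_1}}$ and, identically, $\psi_{21+}=\psi_{21-}e^{tm_{2a_1}+m_{1a_1}}$. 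Flipping the sign of $m_{3a_1}$ in the theta arguments of $\psi_{12}$ and $\psi_{22}$ reverses the sign of the exponent, producing $\psi_{12+}=\psi_{12-}e^{-(tm_{2a_1}+m_{1a_1})}$ and $\psi_{22+}=\psi_{22-}e^{-(tm_{2a_1}+m_{1a_1})}$. I expect the only real obstacle to be book-keeping: one has to keep the sign conventions and the $\pm\tfrac14$ shifts consistent across all four entries and across both arcs $\gamma_{1a_1}$ and $\gamma_{2a_1}$; once that is done, the lemma reduces to a direct transcription of the identities (\ref{le4-1})--(\ref{le4-2}) used in the proof of Lemma~\ref{le4}.
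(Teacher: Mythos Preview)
Your proposal is correct and is precisely the approach the paper intends: Lemma~\ref{le4-a1} is stated without its own proof because it is the direct analogue of Lemma~\ref{le4}, and the paper's proof of Lemma~\ref{le4} is exactly the computation you outline (radical jump supplies the factor~$i$, the additive jumps of $m_{3a_1}$ feed into the theta arguments, and the quasi-periodicity~(\ref{peri-cond}) produces the exponential factors on $(-a_1,a_1)$). Your remark about the extra integer shift on $\gamma_{2a_1}$ being absorbed by the $l$-periodicity of $\vartheta_3$ is a detail the paper leaves implicit but which is indeed needed.
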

\begin{proof}
According to Lemma \ref{le4}, the above conclusion can be obtained.
\end{proof}

\begin{propo} The solution of Riemann-Hilbert problem \ref{RH13} can be expressed in the following form.
\bee\label{Yoxz}
Y^o(x,t;z)=\frac12\left[\!\!\begin{array}{cc}
\psi_{11}(x,t;z)& \psi_{12}(x,t;z)  \vspace{0.05in}\\
\psi_{21}(x,t;z)& \psi_{22}(x,t;z)
\end{array}\!\!\right].
\ene

\end{propo}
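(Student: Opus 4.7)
The plan is to mirror the strategy used earlier for the outer parametrix of Riemann-Hilbert problem \ref{RH8}: verify directly that the matrix $\tfrac12\bigl(\psi_{ij}(x,t;z)\bigr)_{i,j=1,2}$ satisfies the three requirements of Riemann-Hilbert problem \ref{RH13}, namely analyticity off $[-b,b]$, the prescribed jumps on $\gamma_{1a_1}\cup[-a_1,a_1]\cup\gamma_{2a_1}$, and normalization $\to\mathbb{I}$ at infinity. Uniqueness is automatic once these are checked, via the standard Liouville-type argument applied to a candidate quotient and the identity $\det Y^o\equiv 1$.

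For the jump conditions I would transcribe Lemma \ref{le4-a1} entrywise. On $\gamma_{1a_1}\cup\gamma_{2a_1}$ the four identities $\psi_{11+}=i\psi_{12-}$, $\psi_{12+}=i\psi_{11-}$, $\psi_{21+}=i\psi_{22-}$, $\psi_{22+}=i\psi_{21-}$ are exactly the componentwise form of
\[
Y^o_{+}(x,t;z)=Y^o_{-}(x,t;z)\begin{pmatrix} 0 & i\\ i & 0\end{pmatrix},
\]
which is $V_{7a_1}$ on those arcs. On $(-a_1,a_1)$ the diagonal scalings in Lemma \ref{le4-a1} produce the diagonal factor $\mathrm{diag}\bigl(e^{tm_{2a_1}+m_{1a_1}},e^{-(tm_{2a_1}+m_{1a_1})}\bigr)$ required by $V_{7a_1}$ there. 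Analyticity of each $\psi_{ij}$ on $\mathbb{C}\setminus[-b,b]$ is inherited from the quartic-root function $m_{5a_1}$ (whose only branch cuts lie on $\gamma_{1a_1}\cup\gamma_{2a_1}$) and from the Baker-Akhiezer structure of the theta ratios on the genus-one surface $\Omega$: the $\pm 1/4$ shifts inserted in the numerator and denominator are chosen precisely so that the apparent zeros and poles of the individual $\vartheta_3$ factors cancel.

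The normalization at $z\to\infty$ is where I would concentrate the careful work. Since $m_{5a_1}(z)\to 1$, one has $m_{5a_1}+m_{5a_1}^{-1}\to 2$ and $m_{5a_1}-m_{5a_1}^{-1}\to 0$, so the prefactor $\tfrac12$ delivers the correct $1$ on the diagonal and $0$ off-diagonal of the algebraic part. What remains is to confirm that the theta ratios in $\psi_{11}$ and $\psi_{22}$ tend to $\vartheta_3\!\bigl(\tfrac{tm_{2a_1}+m_{1a_1}}{2\pi i};\tau\bigr)/\vartheta_3(0;\tau)$ at infinity, equivalently that $m_{3a_1}(\infty)$ evaluates to a value at which the $\pm 1/4$ shifts drop out of the ratio (using the quasi-periodicity \eqref{peri-cond}). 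This is the main obstacle; it hinges on the specific base point $z=b$ of the Abel map $m_{3a_1}$, on the orientations of the $A$-cycles on $\Omega$, and on the normalization constant $-b/\bigl(4K_1(m_{a_1})\bigr)$ multiplying the holomorphic differential. The required identity is the direct analogue of the one used in the earlier static-parametrix proof, and can be verified by the same contour-integration bookkeeping on $\Omega$.
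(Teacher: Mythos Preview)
Your approach is correct and is essentially the same as the paper's, which simply invokes Lemma~\ref{le4-a1} to conclude that $Y^o(x,t;z)$ solves Riemann--Hilbert problem~\ref{RH13}. You are in fact more thorough than the paper, which does not explicitly verify the normalization or analyticity conditions and omits any uniqueness argument.
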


\begin{proof} According to Lemma \ref{le4-a1}, we know that the matrix function $Y^o(x,t;z)$ is a solution of Riemann-Hilbert problem \ref{RH13}.

\end{proof}

\subsubsection{The local matrix parametrix}

In this section, we will construct a local matrix parametrix $Y^{-a_1}(x,t;z)$ as $z\in\delta^{-a_1}:=\{z||z+a_1|<\epsilon,~\epsilon~\mathrm{is~a~suitable~small~ positive~parameter} \}$. Similarly, we can define regions $\delta^{a_1}$, $\delta^{b}$, and $\delta^{-b}$. Then, let the conformal map be~\cite{Girotti-1}:
\bee
\zeta_{-a_1}:=\left(\frac{3t}{4}\right)^{\frac23}\left(\int_{-a_1}^z[g_{a_1+}'(\lambda)-g_{a_1-}'(\lambda)]d\lambda\right)^{\frac23},\quad z\in\delta^{-a_1}.
\ene
and make the following transformation:
\bee
Y^{(1)}(x,t;z)
=\begin{cases}
Y_{2a_1}(x,t;z)\left(\dfrac{\sqrt{R(z)}}{f_{a_1}(z)}\right)^{\sigma_3}
e^{\left(\frac{i\pi}{4}-\frac12(tm_{2a_1}+m_{1a_1})-\frac23\zeta_{-a_1}^{\frac32}\right)\sigma_3},\quad z\in\mathbb{C}_+\cap\delta^{-a_1},\v\\
Y_{2a_1}(x,t;z)\left(\dfrac{\sqrt{-R(z)}}{f_{a_1}(z)}\right)^{\sigma_3}
e^{\left(\frac{i\pi}{4}+\frac12(tm_{2a_1}+m_{1a_1})-\frac23\zeta_{-a_1}^{\frac32}\right)\sigma_3},\quad z\in\mathbb{C}_-\cap\delta^{-a_1}.
\end{cases}
\ene

\begin{lemma}~\cite{Girotti-1}  Matrix function $Y^{(1)}(x,t;z)$ satisfies the following jump conditions:
\bee
Y_+^{(1)}(x,t;z)
=\begin{cases}
Y_-^{(1)}(x,t;z)\left[\!\!\begin{array}{cc}
1& 0 \vspace{0.05in}\\
1& 1
\end{array}\!\!\right],\quad z\in\delta^{-a_1}\cap\{\mathrm{upper~and~lower~lenses}\},\v\\
Y_-^{(1)}(x,t;z)\left[\!\!\begin{array}{cc}
1& 1 \vspace{0.05in}\\
0& 1
\end{array}\!\!\right],\quad z\in\delta^{-a_1}\cap(0,+\infty),\v\\
Y_-^{(1)}(x,t;z)\left[\!\!\begin{array}{cc}
0& 1 \vspace{0.05in}\\
-1& 0
\end{array}\!\!\right],\quad z\in\delta^{-a_1}\cap(-\infty,0).
\end{cases}
\ene

\end{lemma}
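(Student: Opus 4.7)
The plan is to verify each of the three jump relations by direct computation, mirroring the strategy already used in Section 3.3 for the Bessel parametrix at $z=b$. For each sub-region I would start from the defining formula
\[
Y^{(1)}(x,t;z)=Y_{2a_1}(x,t;z)\left(\tfrac{\sqrt{\pm R(z)}}{f_{a_1}(z)}\right)^{\!\sigma_3}\!e^{\left(\frac{i\pi}{4}\mp\frac12(tm_{2a_1}+m_{1a_1})-\frac23\zeta_{-a_1}^{3/2}\right)\sigma_3},
\]
compute the ratio $Y^{(1)}_-(x,t;z)^{-1}Y^{(1)}_+(x,t;z)$ by inserting the known jump $V_{6a_1}$ of $Y_{2a_1}$ from RHP \ref{RH12}, and simplify using the scalar jump relations of $f_{a_1}$, $g_{a_1}$ and the conformal map $\zeta_{-a_1}$.

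For $z\in\delta^{-a_1}\cap\{\text{upper/lower lenses}\}$, the relevant contours are the two branches of $O_2$ emanating from $-a_1$. There $Y_{2a_1}$ has the lower-triangular jump with off-diagonal entry $-if_{a_1}^{2}(z) R^{-1}(z) e^{2t(g_{a_1}(z)+4z^3-\xi z)}$. The scalar conjugation by $(\sqrt{\pm R(z)}/f_{a_1}(z))^{\sigma_3}$ clears the factor $f_{a_1}^2/R$, and the identity
\[
\tfrac23\zeta_{-a_1}^{3/2}=t\bigl(g_{a_1}(z)+4z^3-\xi z\bigr)+\text{const}
\]
(coming directly from the definition of $\zeta_{-a_1}$ as the antiderivative of $g_{a_1+}'-g_{a_1-}'$ normalized to vanish at $z=-a_1$) absorbs the exponential, leaving the desired lower-triangular jump $\left[\begin{smallmatrix}1&0\\1&1\end{smallmatrix}\right]$ after the prefactor $e^{i\pi/4}$ and the phase choice align.

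For $z\in\delta^{-a_1}$ lying on the real segment $(-a_1,-a)$ (which the statement labels as the $(0,+\infty)$ image under $\zeta_{-a_1}$), $Y_{2a_1}$ has an upper-triangular jump with diagonal $e^{\pm(tm_{2a_1}+m_{1a_1})}$ and upper-right entry $iR(z)\,\widehat\delta f_{a_1}(z)^{-1}e^{-t(\widehat\Delta g_{a_1}(z)-2\xi z+8z^3)}$. The conjugating factor $e^{\mp\frac12(tm_{2a_1}+m_{1a_1})\sigma_3}$ is chosen exactly to cancel those diagonal factors, and one uses $\widehat\Delta g_{a_1}(z)=2\xi z-8z^3$ on $\gamma_{2a_1}$ (extended analytically into the interval $(-a_1,-a)$) together with $\widehat\delta f_{a_1}(z)=R(z)$ to collapse the off-diagonal entry to $1$. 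For $z$ on the $\gamma_{2a_1}$-side of $-a_1$ (the $(-\infty,0)$ image), $Y_{2a_1}$ jumps by the antidiagonal $\left[\begin{smallmatrix}0&i\\i&0\end{smallmatrix}\right]$; here the argument follows the $\delta^b$-computation verbatim, with $\sqrt{R(z)}$ in $\mathbb{C}_+$ and $\sqrt{-R(z)}$ in $\mathbb{C}_-$ producing a compensating factor $i$ via $f_{a_1+}f_{a_1-}=R(z)$ on $\gamma_{2a_1}$, while the conformal map $\zeta_{-a_1}^{3/2}$ acquires the rotational phase of the Airy geometry that converts $\left[\begin{smallmatrix}0&i\\i&0\end{smallmatrix}\right]$ into the rotation $\left[\begin{smallmatrix}0&1\\-1&0\end{smallmatrix}\right]$.

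The main obstacle will be bookkeeping of branches: the square root $\sqrt{\pm R(z)}$ flips between $\mathbb{C}_\pm$, and the fractional power $\zeta_{-a_1}^{3/2}$ has the $2\pi/3$ sector structure characteristic of the Airy parametrix, so I need to verify that the inequalities of Lemma \ref{le-a1} translate into the correct signs of $\mathrm{Re}\,\zeta_{-a_1}^{3/2}$ in each lens region and that the branch jumps on $(-\infty,0)$ and $(0,+\infty)$ of the $\zeta_{-a_1}$-plane match the prefactors $e^{\pm i\pi/4}$ and $e^{\pm\frac12(tm_{2a_1}+m_{1a_1})}$. I would handle this by first checking the local expansion $g_{a_1+}(z)-g_{a_1-}(z)=\mathcal{O}((z+a_1)^{3/2})$ near $z=-a_1$, which shows that $\zeta_{-a_1}$ is a genuine conformal map, and then verifying the three identities sector-by-sector, which reduces to routine algebraic manipulations once the branch conventions are fixed.
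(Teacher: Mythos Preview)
Your approach---directly conjugating the jump $V_{6a_1}$ of $Y_{2a_1}$ by the local transformation and simplifying via the scalar relations for $f_{a_1}$, $g_{a_1}$, and $\zeta_{-a_1}$---is exactly the method the paper uses; indeed the paper omits the proof of this Airy lemma precisely because it parallels the Bessel computation of Section~3.3 so closely.

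There is one concrete slip in your sketch of the middle case (the segment $z\in(-a_1,-a)$, mapped to the positive $\zeta_{-a_1}$-axis). You invoke $\widehat\Delta g_{a_1}(z)=2\xi z-8z^3$ and $\widehat\delta f_{a_1}(z)=R(z)$, ``extended analytically'' from $\gamma_{2a_1}$. But these are jump identities valid only on $\gamma_{2a_1}=(-b,-a_1)$; on $(-a_1,-a)\subset(-a_1,a_1)$ the operative relations are $g_{a_1+}-g_{a_1-}=m_{2a_1}$ and $f_{a_1+}=f_{a_1-}e^{m_{1a_1}}$, so $\widehat\Delta g_{a_1}(z)-2\xi z+8z^3$ does \emph{not} vanish there (Lemma~\ref{le-a1} in fact says its real part is strictly positive). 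The actual cancellation of the off-diagonal exponential comes from identifying $g_{a_1\pm}(z)+4z^3-\xi z$ with the two boundary values $\pm\tfrac{2}{3}\zeta_{-a_1}^{3/2}$ from the upper and lower half-planes, together with the $\mp\tfrac12(tm_{2a_1}+m_{1a_1})$ shift you already built into the transformation; the factor $R(z)/\widehat\delta f_{a_1}(z)$ is absorbed by $(\sqrt{R}/f_{a_1+})(\sqrt{-R}/f_{a_1-})$ and the $e^{i\pi/4}$ prefactors. This is the standard Airy bookkeeping and is routine once set up correctly, but the mechanism you wrote down would not close as stated.
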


\begin{figure}[!t]
    \centering
 \vspace{-0.15in}
  {\scalebox{0.38}[0.38]{\includegraphics{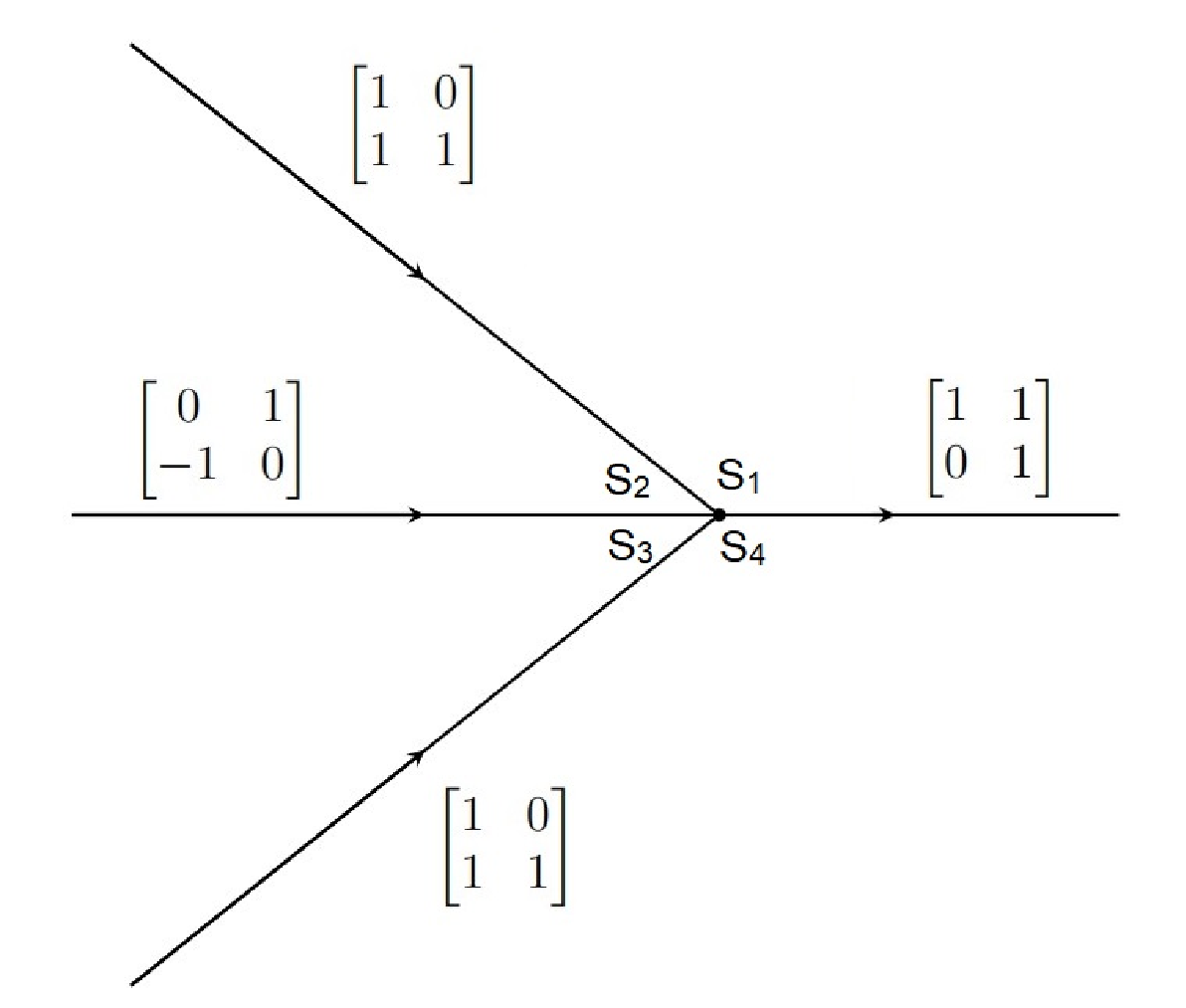}}}\hspace{-0.35in}
\vspace{0.05in}
\caption{Regional division for the Airy model $U_{Ai}(z)$.}
   \label{fig6}
\end{figure}

Next, we will introduce the Airy model $U_{Ai}(z)$~\cite{Deift-1} which satisfies the following Riemann-Hilbert problem:
\begin{prop}\label{Airy}
Find a $2\times 2$ matrix function $U_{Ai}(z)$ that satisfies the following properties:

\begin{itemize}

 \item {} Analyticity: $U_{Ai}(z)$ is analytic for $z$ in the three regions shown in Figure \ref{fig6}, namely, $S_1: 0<\mathrm{arg}(z)<\frac{2\pi}{3}$, $S_2: \frac{2\pi}{3}<\mathrm{arg}(z)<\pi$, $S_3: -\pi<\mathrm{arg}(z)<-\frac{2\pi}{3}$ and $S_4: -\frac{2\pi}{3}<\mathrm{arg}(z)<0$ where $-\pi<\mathrm{arg}(z)\leq\pi$. It takes continuous boundary values on the excluded rays and at the origin from each sector.

 \item {} Jump condition: The boundary values on the jump contour $\gamma_A$ where $\gamma_A:=\gamma_{\pm}\cup\gamma_{0\pm}$ with $\gamma_{\pm}=\{
 \mathrm{arg}(z)=\pm\frac{2\pi}{3}\},\gamma_{0+}=\{
 \mathrm{arg}(z)=0\}$ and $\gamma_{0-}=\{
 \mathrm{arg}(z)=\pi\}$ are defined as:
 \bee
U_{Ai+}(z)=U_{Ai-}(z)V_{Ai}(z),\quad z\in \gamma_A,
\ene
where
\bee
V_{Ai}(z)
=\begin{cases}
\left[\!\!\begin{array}{cc}
1& 0 \\
1& 1
\end{array}\!\!\right],\quad z\in \gamma_{\pm},\v\\
\left[\!\!\begin{array}{cc}
1& 1 \\
0& 1
\end{array}\!\!\right],\quad z\in \gamma_{0+},\v\\
\left[\!\!\begin{array}{cc}
0& 1 \\
-1& 0
\end{array}\!\!\right],\quad z\in \gamma_{0-};
\end{cases}
\ene

 \item {} Normalization: $U_{Ai}(z)$ is bounded as $z\to 0,z\in\mathbb{C}\setminus\gamma_A$.

\end{itemize}

\end{prop}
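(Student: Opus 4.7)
The plan is to construct $U_{Ai}(z)$ explicitly from solutions of the Airy equation $y''=zy$, following the classical parametrix construction in \cite{Deift-1}. Let $\omega=e^{2\pi i/3}$ and set $y_k(z):=\mathrm{Ai}(\omega^k z)$ for $k=0,1,2$. These three rotated Airy functions are linearly dependent, satisfying $y_0+\omega y_1+\omega^2 y_2=0$, and the pairwise Wronskians $W(y_j,y_k)$ are explicit nonzero constants. Each $y_k$ is entire, and along the Stokes rays the pair $(y_j,y_k)$ switches between dominant and subdominant behavior, which is precisely what produces the triangular jump structure required here.

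First I would define $U_{Ai}$ sector by sector as a $2\times 2$ matrix whose columns are pairs $(y,y')^T$ of appropriately chosen rotated Airy solutions, multiplied on the right by a sector-dependent constant matrix $D_S$. The pair $(y_{j(S)},y_{k(S)})$ and the matrix $D_S$ would be selected in each of $S_1,S_2,S_3,S_4$ so that within the sector the two columns exhibit the growth/decay pattern compatible with the triangular or antidiagonal jumps on the rays bounding $S$. A natural starting choice is $(y_0,-\omega^2 y_2)$ in $S_1$ and then extending across $\gamma_+, \gamma_{0-}, \gamma_-$ by cyclic rotation.

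Next I would verify the jumps across each ray. On $\gamma_{\pm}$, passing from one sector to the neighboring one replaces one column of Airy solutions by another, and the linear relation $y_0+\omega y_1+\omega^2 y_2=0$ together with $y_k'(z)=\omega^k\mathrm{Ai}'(\omega^k z)$ yields the prescribed lower-triangular unit jump. On $\gamma_{0+}$ the same mechanism produces the upper-triangular unit jump; on $\gamma_{0-}$ the two columns effectively swap, so the Wronskian identities reduce the jump computation to a single scalar identity and give the antidiagonal jump with the correct sign. Boundedness as $z\to 0$ in each sector is automatic because every $y_k$ is entire, so $U_{Ai}(z)$ and all its sectorial limits are finite at the origin.

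The main obstacle will be bookkeeping the multiplicative constants and phases inside each $D_S$ so that the four prescribed jump matrices come out exactly as written, in particular the $-1$ entry on $\gamma_{0-}$, which forces the normalization of the Wronskian to be pinned down precisely. Once this calibration is carried out in one sector, the remaining three follow by the rotational symmetry $z\mapsto\omega z$ of the Airy ODE, which permutes the $y_k$ cyclically and maps the jump contour $\gamma_A$ to itself, so all four jump relations are obtained at once from a single verification.
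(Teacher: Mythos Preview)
Your construction is correct and is precisely the standard Airy parametrix from \cite{Deift-1} that the paper invokes; the paper does not supply its own argument here but simply states the Riemann-Hilbert problem, cites \cite{Deift-1} for the explicit solution in terms of $\mathrm{Ai}(\omega^k z)$, and records the large-$z$ asymptotics $U_{Ai}(z)=\frac{\sqrt{2}}{2}z^{-\sigma_3/4}\left[\begin{smallmatrix}1&i\\ i&1\end{smallmatrix}\right](\mathbb{I}+\mathcal{O}(z^{-3/2}))e^{-\frac{2}{3}z^{3/2}\sigma_3}$. Your sector-by-sector definition using pairs of rotated Airy solutions and the relation $y_0+\omega y_1+\omega^2 y_2=0$ is exactly that construction, so there is nothing to add beyond the bookkeeping you already anticipate.
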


The solution of the Riemann-Hilbert Problem \ref{Airy} can be expressed explicitly according to the Airy function $Ai(z)$.  In particular, the solution satisfies:
\bee
U_{Ai}(z)=\frac{\sqrt{2}}{2}z^{-\frac14\sigma_3}\left[\!\!\begin{array}{cc}
1& i \vspace{0.05in}\\
i& 1
\end{array}\!\!\right]\left(\mathbb{I}+\mathcal{O}(z^{-\frac32})\right)e^{-\frac23z^{\frac32}\sigma_3},\quad z\to \infty,
\ene
in $\mathbb{C}$ except for the jumps.

Then the local matrix parametrix $Y^{-a_1}(x,t;z)$ can be expressed as:
\begin{align}\no
\begin{aligned}
Y^{-a_1}(x,t;z)=Y^{-a_{11}}(x,t;z)U_{Ai}(\zeta_{-a_1})\left(\dfrac{\sqrt{\pm R(z)}}{f_{a_1}(z)}\right)^{-\sigma_3}e^{-\left(\frac{\pi i}{4}\mp\frac12(tm_{2a_1}+m_{1a_1})-\frac23\zeta_{-a_1}^{\frac32}\right)\sigma_3}&,\v\\
\quad z\in\delta^{-a_1}\cap\mathbb{C_{\pm}}&,
\end{aligned}
\end{align}
where
\bee\no
Y^{-a_{11}}(x,t;z)=\frac{\sqrt{2}}{2}Y^o(x,t;z)e^{\left(\frac{i\pi }{4}\mp\frac12(tm_{2a_1}+m_{1a_1})\right)\sigma_3}\left(\dfrac{\sqrt{\pm R(z)}}{f_{a_1}(z)}\right)^{\sigma_3}\left[\!\!\begin{array}{cc}
1& -i \vspace{0.05in}\\
-i& 1
\end{array}\!\!\right](\zeta_{-a_1})^{\frac14\sigma_3}.
\ene

Then, we have
\bee
Y^{-a_1}(x,t;z)\left(Y^o(x,t;z)\right)^{-1}=\mathbb{I}+\mathcal{O}(t^{-1}),\quad t\to+\infty,z\in\partial\delta^{-a_1}\setminus\gamma_A.
\ene

Similarly, we can construct the local matrix function $Y^{a_1}(x,t;z)$ which satisfies the following property:
\bee
Y^{a_1}(x,t;z)\left(Y^o(x,t;z)\right)^{-1}=\mathbb{I}+\mathcal{O}(t^{-1}),\quad t\to+\infty,z\in\partial\delta^{a_1}\setminus\gamma_A.
\ene

Similar to the case of $x\to-\infty$, we can provide the local function $Y^{b}(x,t;z)$ when $z\to b$ and $Y^{-b}(x,t;z)$ when $z\to -b$ which correspond to the Bessel model $U_{Bes}(z)$. And local matrix functions $Y^{b}(x,t;z)$ and $Y^{-b}(x,t;z)$ satisfy the following properties:
\begin{align}
\begin{aligned}
&Y^{b}(x,t;z)\left(Y^o(x,t;z)\right)^{-1}=\mathbb{I}+\mathcal{O}(t^{-1}),\quad t\to+\infty,z\in\partial\delta^{b}\setminus\gamma_A,\v\\
&Y^{-b}(x,t;z)\left(Y^o(x,t;z)\right)^{-1}=\mathbb{I}+\mathcal{O}(t^{-1}),\quad t\to+\infty,z\in\partial\delta^{-b}\setminus\gamma_A.
\end{aligned}
\end{align}

Then we construct a matrix function as follows:
\bee
Y_{3a_1}(x,t;z)
=\begin{cases}
Y^o(x;z),\quad z\in \mathbb{C}\setminus(\delta^{a_1}\cup\delta^{b}\cup\delta^{-a_1}\cup\delta^{-b}),\vspace{0.05in}\\
Y^{\pm a_1}(x,t;z),\quad z\in\delta^{\pm a_1},\vspace{0.05in}\\
Y^{\pm b}(x,t;z),\quad z\in\delta^{\pm b},
\end{cases}
\ene
which has the following jump condition:
 \bee
Y_{3a_1+}(x,t;z)=Y_{3a_1-}(x,t;z)V_{8a_1}(x,t;z).
\ene

\subsubsection{The small norm Riemann-Hilbert problem}

Let the error matrix function $E(x,t;z)$ as follow:
\bee
E(x,t;z)=Y_{2a_1}(x,t;z)Y_{3a_1}^{-1}(x,t;z).
\ene
Then the error matrix function $E(x,t;z)$ satisfies the following Riemann-Hilbert problem.

\begin{prop}\label{RH14} 
Find a $2\times 2$ matrix function $E(x,t;z)$ that satisfies the following properties:

\begin{itemize}

 \item {} Analyticity: $E(x,t;z)$ is analytic in $\mathbb{C}\setminus\gamma_3$ where $\gamma_3:=(a,a_1)\cup(-a_1,-a)\cup O_1\cup O_2\cup\partial\delta^{a_1}\cup\partial\delta^{b}\cup\partial\delta^{-a_1}\cup\partial\delta^{-b}\setminus(\delta^{a_1}\cup\delta^{b}\cup\delta^{-a_1}\cup\delta^{-b})$ and takes continuous boundary values on $\gamma_3$.

 \item {} Jump condition: The boundary values on the jump contour are defined as
 \bee
E_{+}(x,t;z)=E_{-}(x,t;z)V_E(x,t;z),\quad z\in \gamma_3,
\ene
where
\bee
V_E(x,t;z)=Y_{3a_1-}(x,t;z)V_{6a_1}(x,t;z)V_{8a_1}^{-1}(x,t;z)Y_{3a_1-}^{-1}(x,t;z),
\ene
which satisfies the following properties:
\bee
V_E(x,t;z)
=\begin{cases}
\mathbb{I}+\mathcal{O}(e^{-ct}),\quad z\in (a,a_1)\cup(-a_1,-a)\cup O_1\cup O_2\setminus(\overline{\delta}^{a_1}\cup\overline{\delta}^b\cup\overline{\delta}^{-a_1}\cup\overline{\delta}^{-b}),\v\\
\mathbb{I}+\mathcal{O}(\dfrac{1}{t}),\quad z\in\partial\delta^{a_1}\cup\partial\delta^{b}\cup\partial\delta^{-a_1}\cup\partial\delta^{-b},
\end{cases}
\ene
with $c$ being a suitable positive real parameter.

 \item {} Normalization: $E(x,t;z)=\mathbb{I}+\mathcal{O}(z^{-1}),\quad z\rightarrow\infty.$
\end{itemize}
\end{prop}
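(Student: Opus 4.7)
The plan is to verify the three stated properties of $E(x,t;z) = Y_{2a_1}(x,t;z)Y_{3a_1}^{-1}(x,t;z)$ by working piece-by-piece along the underlying contour system and exploiting the way $Y_{3a_1}$ was constructed to approximate $Y_{2a_1}$. The overall organization is: (i) show that jumps on the ``good'' arcs $\gamma_{1a_1}\cup\gamma_{2a_1}\cup[-a,a]$ cancel, so that $E$ is actually analytic there; (ii) compute and estimate $V_E$ on the two remaining kinds of pieces, namely the lens arcs $O_1\cup O_2$ (plus the small real intervals $[a,a_1]\cup[-a_1,-a]$) and the boundary circles $\partial\delta^{\pm a_1}\cup\partial\delta^{\pm b}$; (iii) extract the normalization as $z\to\infty$. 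The identity $V_E=Y_{3a_1-}V_{6a_1}V_{8a_1}^{-1}Y_{3a_1-}^{-1}$ follows immediately from the two jump relations satisfied by $Y_{2a_1}$ and $Y_{3a_1}$, so the content is really in the estimates.

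For step (i), I would recall that on $\gamma_{1a_1}\cup\gamma_{2a_1}\cup[-a,a]$ the outer parametrix $Y^o$ was constructed in Proposition \ref{RH13} to have exactly the same jump $V_{7a_1}$ as the restriction of $V_{6a_1}$ for $Y_{2a_1}$, so there $V_{6a_1}V_{8a_1}^{-1}=\mathbb{I}$ and $V_E=\mathbb{I}$. Hence $E$ extends analytically across these arcs and its only remaining jump contour is $\gamma_3$ as stated. For step (ii) on the lens arcs $O_1\cup O_2$ restricted outside the local disks, Lemma \ref{le-a1} tells us that $\mathrm{Re}(g_{a_1}(z)+4z^3-\xi z)>0$ on $O_1\setminus\{a_1,b\}$ and $<0$ on $O_2\setminus\{-a_1,-b\}$. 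Plugging into the explicit $V_{6a_1}$ from \eqref{V6-a1} shows that the non-identity entries are of order $e^{-2t\,\mathrm{Re}(g_{a_1}(z)+4z^3-\xi z)}$, which is $O(e^{-ct})$ uniformly on the compact piece. Conjugation by $Y_{3a_1-}=Y^o$, whose entries are bounded uniformly outside the disks by the theta-function formula \eqref{Yoxz}, preserves this exponential smallness. The same argument handles $[a,a_1]\cup[-a_1,-a]$ using the third and fourth inequalities of Lemma \ref{le-a1}.

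For step (ii) on the circles $\partial\delta^{\pm a_1}$ and $\partial\delta^{\pm b}$, I would invoke the matching estimates already established in the local parametrix subsection: by construction of the Airy-model parametrices $Y^{\pm a_1}$ and the Bessel-model parametrices $Y^{\pm b}$, each of them satisfies $Y^{\bullet}(x,t;z)(Y^o(x,t;z))^{-1}=\mathbb{I}+O(t^{-1})$ uniformly on its boundary circle. Hence, writing $V_{8a_1}=Y^o\cdot(Y^{\bullet})^{-1}$ on these circles and noting $V_{6a_1}=\mathbb{I}$ away from the arcs discussed in (i), one gets $V_E=Y^o(Y^{\bullet})^{-1}=\mathbb{I}+O(1/t)$ after conjugating by $Y_{3a_1-}^{-1}=(Y^o)^{-1}$ from the interior side, which is bounded on the circle.

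Finally for step (iii), both $Y_{2a_1}$ and $Y_{3a_1}$ are normalized to $\mathbb{I}+O(z^{-1})$ at infinity (the former by its own RHP, the latter because $Y^o$ has that property and the local parametrices live on bounded disks), so $E(x,t;z)=\mathbb{I}+O(z^{-1})$ as $z\to\infty$. The main obstacle I anticipate is not the individual estimates, which are standard once the sign structure of $\mathrm{Re}(g_{a_1}+4z^3-\xi z)$ is controlled, but rather the uniformity of the $O(t^{-1})$ bound across all four disks: one must pick the lens contours $O_1, O_2$ to pass through $\pm a_1, \pm b$ tangentially in a way compatible with the Airy and Bessel local coordinates, and check that the conjugating factors $Y^o$ remain uniformly bounded on each $\partial\delta^{\bullet}$ independently of $t$. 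This amounts to verifying that the theta functions appearing in $Y^o$ stay bounded away from their zeros as $t$ varies, which follows from the real-valuedness of $tm_{2a_1}+m_{1a_1}$ and the quasi-periodicity \eqref{peri-cond}.
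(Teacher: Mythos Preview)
Your proposal is correct and follows exactly the standard small-norm argument that the paper leaves entirely implicit (the paper simply states RH\,\ref{RH14} without proof and immediately uses the expansion \eqref{E-error-a1}). Two minor points of exposition: on $\partial\delta^{\bullet}$ the jump of $Y_{3a_1}$ is $V_{8a_1}=(Y^o)^{-1}Y^{\bullet}$ (with the usual clockwise orientation), not $Y^o(Y^{\bullet})^{-1}$ as you wrote, though your final $V_E=Y^o(Y^{\bullet})^{-1}=\mathbb{I}+\mathcal{O}(t^{-1})$ is correct; and the quantity $tm_{2a_1}+m_{1a_1}$ is purely imaginary rather than real, so that $(tm_{2a_1}+m_{1a_1})/(2\pi i)\in\mathbb{R}$, which is indeed what keeps the argument of $\vartheta_3$ away from the lattice of zeros $\tfrac12+\tfrac{\tau}{2}+\mathbb{Z}+\tau\mathbb{Z}$ and gives the uniform bound on $Y^o$ you need.
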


According to Riemann-Hilbert problem \ref{RH14}, we obtain
\bee\label{E-error-a1}
E(x,t;z)=\mathbb{I}+\frac{E_1(x,t)}{tz}+\mathcal{O}(z^{-2}).
\ene

Based on Ref.~\cite{Girotti-1}, one has the following proposition:

\begin{propo}
As $t\to+\infty$, the potential function $q(x,t)$ of the complex mKdV equation has the following asymptotic behaviour:
\bee \label{asy-time}
q(x,t)=(b-a_1)\dfrac{\vartheta_3(0;\tau)\vartheta_3(\frac{tm_{2a_1}+m_{1a_1}}{2\pi i}+\frac12;\tau)}{\vartheta_3(\frac12;\tau)\vartheta_3(\frac{tm_{2a_1}+m_{1a_1}}{2\pi i};\tau)}+\mathcal{O}(t^{-1}),\quad t\to +\infty.
\ene

\end{propo}

\begin{proof}
According to Eqs.~(\ref{Y1-Y-a1}),(\ref{Y2-Y1-a1}) and (\ref{E-error-a1}), we have
\begin{align}\no
\begin{aligned}
Y_{a_1}(x,t;z)&=Y_{1a_1}(x,t;z)f_{a_1}(z)^{-\sigma_3}e^{-tg_{a_1}(z)\sigma_3}\v\\
&=Y_{2a_1}(x,t;z)f_{a_1}(z)^{-\sigma_3}e^{-tg_{a_1}(z)\sigma_3}\v\\
&=E(x,t;z)Y_{3a_1}(x,t;z)f_{a_1}(z)^{-\sigma_3}e^{-tg_{a_1}(z)\sigma_3}\v\\
&=\left(\mathbb{I}+\frac{E_1(x,t)}{tz}+\mathcal{O}(z^{-2})\right)Y_{3a_1}(x,t;z)f_{a_1}(z)^{-\sigma_3}e^{-tg_{a_1}(z)\sigma_3},
\end{aligned}
\end{align}
which further leads to
\begin{align}
\begin{aligned}
Y_{a_1,12}(x,t;z)&=\left(Y_{3a_1,12}(x,t;z)+\frac{E_{1,12}(x,t)}{tz}+\mathcal{O}(z^{-2})\right)f_{a_1}(z)e^{tg_{a_1}(z)}\v\\
&=\left(Y^{o}_{12}(x,t;z)+\frac{E_{1,12}(x,t)}{tz}+\mathcal{O}(z^{-2})\right)f_{a_1}(z)e^{tg_{a_1}(z)}\v\\
&=\frac12\left(\psi_{12}(x,t;z)+\frac{E_{1,12}(x,t)}{tz}+\mathcal{O}(z^{-2})\right)f_{a_1}(z)e^{tg_{a_1}(z)}.
\end{aligned}
\end{align}

Note that
\begin{align}
\begin{aligned}
&f_{a_1}(z)=1+\dfrac{f_1}{z}+\mathcal{O}(z^{-2}),\v\\
&e^{tg_{a_1}(z)}\to1,\quad z\to\infty,\v\\
&m_{5a_1}(z)=1+\dfrac{a_1-b}{2z}+\mathcal{O}(z^{-2}).
\end{aligned}
\end{align}

According to Eq.~(\ref{fanyan-6}), namely, $q(x,t)=-2\lim\limits_{z\rightarrow\infty}z Y_{a_1,12}(x,t;z),$ one can find that this Theorem holds.

\end{proof}

\subsection{The case $\xi=x/t<\xi_c$}

When $\xi<\xi_c$, we need to redefine the function $g(z)$, which satisfies the following properties~\cite{Girotti-1}:
\begin{itemize}

 \item {} Analyticity: $g_{a}(z)$ is analytic in $\mathbb{C}\setminus(-b,b)$ and takes continuous boundary values on $(-b,b)$.

 \item {} Jump condition: The boundary values on the jump contour are defined as
\begin{align}
\begin{aligned}
&g_{a+}(z)+g_{a-}(z)=2\xi z-8z^3,\quad z\in \gamma_{1}\cup\gamma_2,\v\\
&g_{a+}(z)-g_{a-}(z)=m_{2a},\quad z\in (-a,a),
\end{aligned}
\end{align}
where
\bee\no
m_{2a}=\dfrac{\pi ib(\xi-2\left(a^2+b^2\right))}{K_1(m)}.
\ene

 \item {} Normalization: $g_{a}(z)=\mathcal{O}(z^{-1}),\quad z\rightarrow\infty.$
\end{itemize}

Then, we have
\begin{align}
\begin{aligned}
&\mathrm{Re}(g_{a}(z)+4z^3-\xi z)>0,\quad z\in O_1\setminus\{a,b\},\v\\
&\mathrm{Re}(g_{a}(z)+4z^3-\xi z)<0,\quad z\in O_2\setminus\{-a,-b\}.
\end{aligned}
\end{align}

Based on Ref.~\cite{Girotti-1}, one has
\begin{propo}
As $t\to+\infty,\xi<\xi_c$, we have the long-time asymptotic behaviour of the soliton gas of the potential function $q(x,t)$ of the complex mKdV equation:
\bee
q(x,t)=(b-a)\dfrac{\vartheta_3(0;\tau)\vartheta_3(\frac{tm_{2a}+m_{1}}{2\pi i}+\frac12;\tau)}{\vartheta_3(\frac12;\tau)\vartheta_3(\frac{tm_{2a}+m_{1}}{2\pi i};\tau)}+\mathcal{O}(t^{-1}),\quad t\to +\infty.
\ene
\end{propo}

\section{Soliton gas in the quadrature domains}

Bertola {\it et al} studied the soliton shielding in the quadrature domains of the focusing nonlinear Schr\"odinger equation~\cite{Grava-3}.   Similarly, we consider the case that the discrete spectra $z_j\,(j=1,\cdots,N)$ with the norming constants $c_j\,(j=1,\cdots,N)$ filling uniformly compact domain $\Omega_1$ of the complex upper half space $\mathbb{C}_+$, for example~\cite{Grava-3},
\bee
\Omega_1:=\{z|~|(z-s_1)^{\ell}-s_2|<s_3\},\quad \Omega_1\subset\mathbb{C}_+,
\ene
where $\ell\in\mathbb{N}_+,s_1\in\mathbb{C}_+$ and $|s_2|,s_3$ are sufficiently small real constants.
The boundary of $\Omega_1^*$ which is complex conjugate domain of $\Omega_1$ is described by (see Fig.~\ref{fig7})
\bee
z^*=\left(s_2^*+\dfrac{s_3^2}{(z-s_1)^m-s_2}\right)^{1/\ell}+s_1^*,\quad z\in\Omega_1.
\ene

\begin{figure}[!t]
    \centering
 \vspace{-0.15in}
  {\scalebox{0.55}[0.55]{\includegraphics{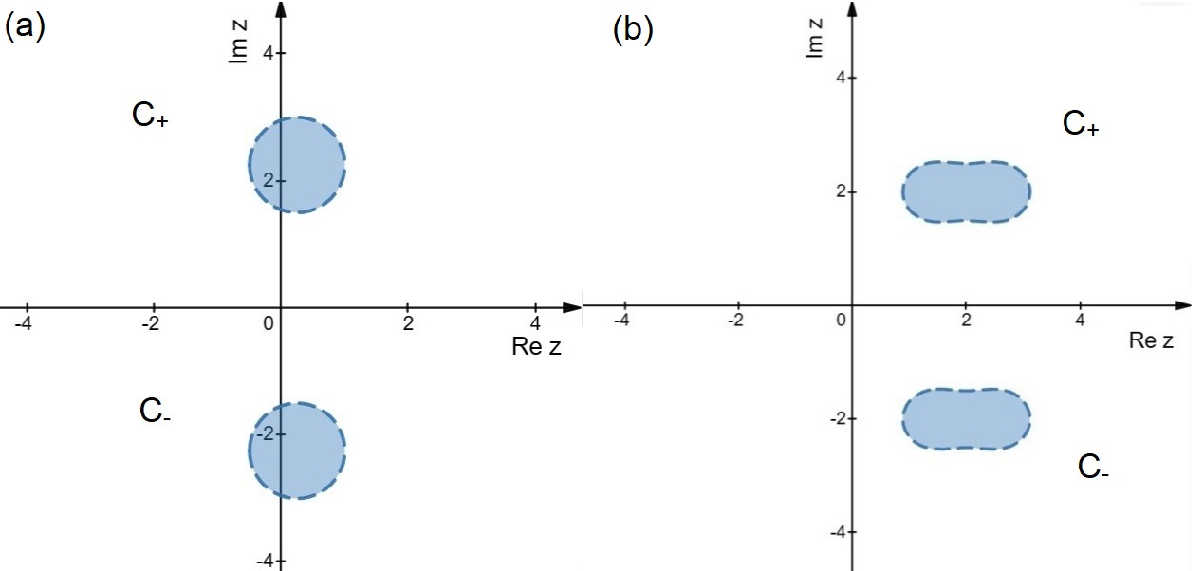}}}\hspace{-0.35in}
\vspace{0.1in}
\caption{The domains $\Omega_1$ and $\Omega_1^*$: (a) The parameters are $s_1=2i,\, s_2=\frac14+\frac{1}{4}i,s_3=\frac34,\, \ell=1$,~(b) The parameters are $s_1=2+2i,\,
s_2=\frac12,\, s_3=\frac34,\, \ell=2$.}
   \label{fig7}
\end{figure}

We assume that the norming constants $c_j\,(j=1,\cdots,N)$ have the following forms:
\bee
c_j=\frac{|\Omega_1|r(z_j,z_j^*)}{N\pi}.
\ene
where $|\Omega_1|$ means the area of the domain $\Omega_1$ and $r(z,z^*)$ is a smooth function with respect to variables $z$ and $z^*$. Then we construct the following Riemann-Hilbert problem.

\begin{prop}\label{RH1-1}
Find a $2\times 2$ matrix function $M_0(x,t;z)$ that satisfies the following properties:

\begin{itemize}

 \item {} Analyticity: $M_0(x,t;z)$ is meromorphic in $\mathbb{C}$, with simple poles at $K_1:=\{z_j,z_j^*\}_{j=1}^N$ with $z_j\in\Omega_1$.

 \item {} Normalization: $M_0(x,t;z)=\mathbb{I}+\mathcal{O}(z^{-1}),\quad z\rightarrow\infty$;

\item {} Residue conditions: $M_0(x,t;z)$ has simple poles at each point in $K_1$ with:
\begin{align}\label{Res12}
\begin{aligned}
&\mathrm{Res}_{z=z_j}M_0(x,t;z)=\lim\limits_{z\to z_j}M_0(x,t;z)\left[\!\!\begin{array}{cc}
0& 0  \vspace{0.05in}\\
c_je^{2i(zx+(2\alpha z^2+4\beta z^3)t)}& 0
\end{array}\!\!\right],\vspace{0.05in}\\
&\mathrm{Res}_{z=z_j^*}M_0(x,t;z)=\lim\limits_{z\to z_j^*}M_0(x,t;z)\left[\!\!\begin{array}{cc}
0& -c_j^*e^{-2i(zx+(2\alpha z^2+4\beta z^3)t)}  \vspace{0.05in}\\
0& 0
\end{array}\!\!\right].
\end{aligned}
\end{align}

\end{itemize}
\end{prop}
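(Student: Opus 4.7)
The plan is to construct $M_0(x,t;z)$ by an explicit rational ansatz in $z$, in direct analogy with the imaginary-spectrum construction of Section~2 (equations (2.5)--(2.8)). The only new feature is that the poles $\{z_j\}_{j=1}^N\subset\Omega_1$ are now scattered through a two-dimensional compact subset of $\mathbb{C}_+$ rather than sampled along a segment of $i\mathbb{R}_+$. Concretely, I would set
$$
M_0(x,t;z) \;=\; \mathbb{I} + \sum_{j=1}^N \left(\frac{1}{z-z_j}\begin{pmatrix} u_j & 0 \\ v_j & 0 \end{pmatrix} + \frac{1}{z-z_j^*}\begin{pmatrix} 0 & -v_j^* \\ 0 & u_j^* \end{pmatrix}\right),
$$
where $\{u_j(x,t), v_j(x,t)\}_{j=1}^N$ are unknown scalar functions to be determined. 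By construction $M_0$ is meromorphic on $\mathbb{C}$ with simple poles precisely on $K_1=\{z_j,z_j^*\}_{j=1}^N$, is invariant under the focusing-type symmetry $M_0(z)=\sigma_2\overline{M_0(\bar z)}\sigma_2$ (which is what forces the second block to carry $-v_j^*$ and $u_j^*$), and tends to $\mathbb{I}$ as $z\to\infty$, so the analyticity and normalization requirements of the stated problem are automatic.

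Next I would impose the residue conditions~(6.2). Writing $\theta_j=z_jx+(2\alpha z_j^2+4\beta z_j^3)t$ and matching leading orders at each pole yields the closed system
$$
u_j=-\sum_{s=1}^N\frac{c_s\, e^{2i\theta_s}}{z_j-z_s^*}\,v_s^*,\qquad v_j^*=c_j^*\, e^{-2i\theta_j^*}\left(1+\sum_{s=1}^N\frac{c_s\, e^{2i\theta_s}}{z_j^*-z_s}\,u_s\right),\qquad j=1,\dots,N,
$$
which has exactly the same shape as~(2.6). Introducing the rescaled variables $\widehat u_j=u_j\, e^{-i\theta_j}/\sqrt{c_j}$, $\widehat v_j^*=v_j^*\, e^{-i\theta_j^*}/\sqrt{c_j^*}$ and the Cauchy-type matrix $Q_{js}=\frac{i}{z_s-z_j^*}\,e^{i(\theta_s+\theta_j^*)}$ (now truly complex-valued rather than restricted by any $z\mapsto -z$ symmetry), the system collapses to the $2N\times 2N$ block equation
$$
\begin{pmatrix} \mathbb{I}_N & -iQ^* \\ -iQ & \mathbb{I}_N \end{pmatrix}\begin{pmatrix} \widehat u \\ \widehat v^* \end{pmatrix} = \begin{pmatrix} 0 \\ w \end{pmatrix},
$$
exactly as displayed above~(2.7), with an obvious source $w$ encoding the $c_j^*$'s and $\theta_j^*$'s.

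The main obstacle will be solvability of this linear system, since in contrast to the imaginary-axis case the poles are scattered through a two-dimensional region $\Omega_1$ and no obvious algebraic simplification is available. The decisive observation is that the Schur complement of the block matrix equals $\mathbb{I}_N+QQ^*$, which is Hermitian and positive definite for any collection of pairwise distinct poles $z_j\in\mathbb{C}_+$, hence invertible. This delivers a unique closed-form solution of the shape~(2.8), and substituting it back produces the explicit $N$-soliton potential
$$
q(x,t)=2i\lim_{z\to\infty}z\bigl(M_0(x,t;z)\bigr)_{12}=-2i\sum_{j=1}^N v_j^*,
$$
exactly as in~(2.9) but now valid for arbitrary complex spectra in $\Omega_1$. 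This completes the construction of $M_0(x,t;z)$ and provides the finite-$N$ starting point for the subsequent $N\to\infty$ analysis over the quadrature domain.
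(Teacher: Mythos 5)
The paper never actually proves this statement: in the paper's scheme it is a \emph{problem formulation}, obtained from the reflectionless reduction of Riemann--Hilbert problem~1 with discrete spectra $\{z_j,z_j^*\}$ and the prescribed norming constants, and the only place where such a pole problem is genuinely \emph{solved} is Section~2 (Eqs.~(2.5)--(2.9)) for purely imaginary spectra. Your proposal transplants exactly that Section~2 construction --- rational ansatz, residue matching, $2N\times 2N$ block system, recovery of $q_N$ --- to poles scattered in $\Omega_1$, so in spirit it coincides with (and in fact goes somewhat beyond) what the paper does, and your ansatz even corrects the $(2,2)$ entry to $u_j^*$ as the symmetry $M_0(z)=\sigma_2\overline{M_0(z^*)}\sigma_2$ requires.

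The genuine gap is the solvability step. You assert that the Schur complement $\mathbb{I}_N+QQ^*$ is ``Hermitian and positive definite for any collection of pairwise distinct poles $z_j\in\mathbb{C}_+$.'' With $*$ denoting entrywise conjugation (the paper's convention), $QQ^*$ equals the Gram matrix $QQ^\dagger$ only when $Q$ is symmetric; this holds in Section~2 because there $z_s-z_j^*=i(\kappa_s+\kappa_j)$ and the phases pair off, but it fails for poles filling a two-dimensional domain, and moreover the norming constants $c_j\propto r(z_j,z_j^*)$ are in general complex, so no sign structure survives the rescaling to $\widehat u,\widehat v^*$. Hence $\mathbb{I}_N+QQ^*$ is not Hermitian in your setting and positive definiteness cannot be invoked; the claim as stated would fail. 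Invertibility of the system (equivalently, unique solvability of the pole problem) is still true, but the correct route is the standard vanishing-lemma argument: for a solution $M_0^{\mathrm{hom}}$ of the homogeneous problem one checks, using the nilpotency of the residue matrices and the Schwarz symmetry, that $M_0^{\mathrm{hom}}(z)\,\overline{M_0^{\mathrm{hom}}(z^*)}^{\,T}$ is entire and vanishes at infinity, hence is identically zero, which forces $M_0^{\mathrm{hom}}\equiv 0$. Replacing your positivity claim by this argument closes the proof; note the paper itself only asserts $\det(\mathbb{I}_N+QQ^*)\neq 0$ parenthetically, so this point is glossed over there as well.
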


According to Eq.~(\ref{fanyan}), we can recover $q(x,t)$ by the following formula:
\bee\label{fanyan-1-1}
q(x,t)=2i\lim\limits_{z\rightarrow\infty}(z M_{0}(x,t;z))_{12}.
\ene

We define a closed curve $\Gamma_{1+}$ with a very small radius encircling the poles $\{z_j\}_{j=1}^N$ counterclockwise in the upper half plane $\mathbb{C}_+$, and a closed curve $\Gamma_{1-}$ with a very small radius encircling the poles $\{z_j^*\}_{j=1}^N$ counterclockwise in the lower half plane $\mathbb{C}_-$. Make the following transformation:
\bee
M_1(x,t;z)
=\begin{cases}
M_0(x,t;z)\left[\!\!\begin{array}{cc}
1& 0  \vspace{0.05in}\\
-\sum\limits_{j=1}^{N}\dfrac{c_je^{2i(xz+2\alpha tz^2+4\beta tz^3)}}{z-z_j}& 1
\end{array}\!\!\right],\quad z~\mathrm{within}~\Gamma_+,\vspace{0.05in}\\
M_0(x,t;z)\left[\!\!\begin{array}{cc}
1& \sum\limits_{j=1}^{N}c_j^*\dfrac{e^{-2i(xz+2\alpha tz^2+4\beta tz^3)}}{z-z_j^*}  \vspace{0.05in}\\
0& 1
\end{array}\!\!\right],\quad z~\mathrm{within}~\Gamma_-,\v\\
M_0(x,t;z),\quad \mathrm{otherwise}.
\end{cases}
\ene
Then matrix function $M_1(x,t;z)$ satisfies the following Riemann-Hilbert problem.

\begin{prop}\label{RH2-2}
Find a $2\times 2$ matrix function $M_1(x,t;z)$ that satisfies the following properties:

\begin{itemize}

 \item {} Analyticity: $M_1(x,t;z)$ is analytic in $\mathbb{C}\setminus(\Gamma_{1+}\cup\Gamma_{1-})$ and takes continuous boundary values on $\Gamma_{1+}\cup\Gamma_{1-}$.

 \item {} Jump condition: The boundary values on the jump contour $\Gamma_{1+}\cup\Gamma_{1-}$ are defined as
 \bee
M_{1+}(x,t;z)=M_{1-}(x,t;z)J_1(x,t;z),\quad z\in\Gamma_{1+}\cup\Gamma_{1-},
\ene
where
\bee
J_1(x,t;z)
=\begin{cases}
\left[\!\!\begin{array}{cc}
1& 0  \vspace{0.05in}\\
-\sum\limits_{j=1}^{N}\dfrac{c_je^{2i(xz_j+2\alpha tz_j^2+4\beta tz_j^3)}}{z-z_j}& 1
\end{array}\!\!\right],\quad z\in\Gamma_{1+},\vspace{0.05in}\\
\left[\!\!\begin{array}{cc}
1& \sum\limits_{j=1}^{N}c_j^*\dfrac{e^{-2i(xz_j^*+2\alpha tz_j^{*2}+4\beta tz_j^{*3})}}{z-z_j^*}  \vspace{0.05in}\\
0& 1
\end{array}\!\!\right],\quad z\in\Gamma_{1-};
\end{cases}
\ene

 \item {} Normalization: $M_1(x,t;z)=\mathbb{I}+\mathcal{O}(z^{-1}),\quad z\rightarrow\infty.$
\end{itemize}
\end{prop}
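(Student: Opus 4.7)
The plan is to verify directly that the piecewise-defined $M_1$ appearing in the transformation preceding the statement satisfies the three listed properties, by unwinding the definition against the residue conditions~(\ref{Res12}) of the previous RHP~\ref{RH1-1}.

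First I would establish analyticity away from the two contours. Outside $\Gamma_{1+}\cup\Gamma_{1-}$ one has $M_1=M_0$, already analytic there because all the poles of $M_0$ are enclosed by the contours. Inside $\Gamma_{1+}$ the only potential singularities come from the points $z_j\in\Omega_1$. Writing $M_0=(m_1(z),m_2(z))$ column-wise, the residue condition~(\ref{Res12}) forces $m_2$ to be regular at every $z_j$ (since the residue matrix annihilates the second column), while $m_1$ admits the simple-pole expansion $m_1(z)=\dfrac{c_j\,e^{2i(z_jx+2\alpha z_j^2t+4\beta z_j^3t)}\,m_2(z_j)}{z-z_j}+O(1)$. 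The first column of the product $M_0\cdot L_+$, where $L_+$ denotes the lower-triangular factor appearing in the transformation, equals $m_1(z)-\big(\sum_k \frac{c_k e_k}{z-z_k}\big)m_2(z)$ with $e_k=e^{2i(z_kx+2\alpha z_k^2t+4\beta z_k^3t)}$; the singular contribution at $z_j$ coming from the $k=j$ term in the sum is $-\dfrac{c_j e_j\,m_2(z_j)}{z-z_j}$, which cancels the pole of $m_1$ exactly, while the terms with $k\neq j$ are regular there. The second column of $M_0\cdot L_+$ is simply $m_2$, already analytic. The symmetric argument in $\Gamma_{1-}$, using the upper-triangular factor together with the residue condition at $z_j^*$ (which makes $m_1$ regular there and places the simple pole in $m_2$), removes every pole in the lower half-plane by the same telescoping cancellation.

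Next, the jump condition on $\Gamma_{1\pm}$ follows immediately from the definition. Since $M_0$ is analytic across each contour (its singularities lie strictly in the interior), and both $\Gamma_{1+}$ and $\Gamma_{1-}$ are oriented counterclockwise so that the $+$-side is the interior, I would read off $M_{1+}=M_0\cdot(\text{triangular factor})$ and $M_{1-}=M_0$ directly; thus $(M_{1-})^{-1}M_{1+}$ equals the triangular factor, which on the contour (where it is regular) is exactly $J_1$ as stated. The normalization $M_1=\mathbb{I}+\mathcal{O}(z^{-1})$ at infinity is inherited from $M_0$, since large $|z|$ lies outside both contours, where $M_1=M_0$.

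The only nontrivial point, and thus the main obstacle, is the pole cancellation in the first step. It hinges on the specific nilpotent form of the residue matrices in~(\ref{Res12}): right-multiplication by the rank-one nilpotent with its single nonzero entry in position $(2,1)$ (respectively $(1,2)$) annihilates one column of $M_0$ and lifts the other, which forces the corresponding column of $M_0$ to be regular at $z_j$ (respectively $z_j^*$) and produces exactly the residue structure that a single rational lower- (respectively upper-) triangular multiplier can remove at all $N$ poles simultaneously. Once this is verified, the remaining items are bookkeeping entirely parallel to the pure-imaginary-spectrum case handled by the transformation from RHP~\ref{RH1} to RHP~\ref{RH2} in Section~2.
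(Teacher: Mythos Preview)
Your verification is correct and follows exactly the route implicit in the paper: the statement is simply the outcome of the explicit transformation $M_1=M_0\cdot(\text{triangular factor})$ given immediately before it, and the paper offers no separate proof beyond that definition. Your pole-cancellation argument using the column-wise reading of the residue conditions~(\ref{Res12}) is the standard (and only natural) way to check analyticity, and your reading of the jump and normalization is accurate, including the orientation conventions for the counterclockwise $\Gamma_{1\pm}$.
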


According to Eq.~(\ref{fanyan-1-1}), we can recover $q(x,t)$ by the following formula:
\bee\label{fanyan-2-1}
q(x,t)=2i\lim\limits_{z\rightarrow\infty}(z M_{1}(x,t;z))_{12}.
\ene

Based on Ref.\cite{Grava-3}, one has the following lemma:

\begin{lemma}\label{le1-1}
For any open set $B_+$ containing the domain $\Omega_1$, and any open set $B_-$ containing the domain $\Omega_1^*$, the following identities hold:
\bee
\lim\limits_{N\to\infty}\sum\limits_{j=1}^{N}\frac{c_je^{2i(xz_j+2\alpha tz_j^2+4\beta tz_j^3)}}{z-z_j}=\iint_{\Omega_1}\frac{r(\lambda,\lambda^*)}{2\pi i(z-\lambda)}
e^{2i(\lambda x+2\alpha \lambda^2t+4\beta \lambda^3t)}d\lambda^*\wedge d\lambda,
\ene
uniformly for all $\mathbb{C}\setminus B_+$,
\bee\label{integ-1}
\lim\limits_{N\to\infty}\sum\limits_{j=1}^{N}\frac{c_j^*e^{-2i(xz_j+2\alpha tz_j^2+4\beta tz_j^3)}}{z-z_j^*}=\iint_{\Omega_1^*}
\frac{r^*(\lambda,\lambda^*)}{2\pi i(z-\lambda)}e^{-2i(\lambda x+2\alpha \lambda^2t+4\beta \lambda^3t)}d\lambda^*\wedge d\lambda,
\ene
uniformly for all $\mathbb{C}\setminus B_-$. The boundary $\partial\Omega_1$ and $\partial\Omega_1^*$ are both counterclockwise.

\end{lemma}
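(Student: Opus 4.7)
The plan is to recognize each left-hand sum as a two-dimensional Riemann sum for the corresponding area integral on the right. First I would substitute the prescribed form $c_j=|\Omega_1|r(z_j,z_j^*)/(N\pi)$ into the first sum, rewriting it as
$$\sum_{j=1}^{N}\frac{c_je^{2i(xz_j+2\alpha tz_j^2+4\beta tz_j^3)}}{z-z_j}=\frac{|\Omega_1|}{N}\sum_{j=1}^{N}H(z_j),\qquad H(\lambda):=\frac{r(\lambda,\lambda^*)}{\pi(z-\lambda)}e^{2i(x\lambda+2\alpha t\lambda^2+4\beta t\lambda^3)}.$$
Because the $z_j$ are assumed to fill $\Omega_1$ uniformly, the prefactor $|\Omega_1|/N$ is precisely the planar area attached to each sample point, so the sum is a Riemann sum for $\iint_{\Omega_1}H(\lambda)\,du\,dv$ with $\lambda=u+iv$. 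A direct computation gives $d\lambda^*\wedge d\lambda=2i\,du\wedge dv$, which converts the $1/\pi$ prefactor into the $1/(2\pi i)$ appearing in the limit and thus reproduces exactly the right-hand side of the first identity.

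Next I would establish uniform convergence of this Riemann sum for $z\in\mathbb{C}\setminus B_+$. The key observation is that $\Omega_1$ is compact while $B_+\supset\Omega_1$ is open, so $\delta:=\mathrm{dist}(\Omega_1,\mathbb{C}\setminus B_+)>0$; hence on the product set $(\lambda,z)\in\Omega_1\times(\mathbb{C}\setminus B_+)$ the function $H(\lambda)$ is bounded, smooth in $\lambda$, and in particular Lipschitz with constants uniform in $z$. The standard error estimate for Riemann sums of Lipschitz-continuous functions on a Jordan-measurable compact domain, which decays at rate $O(N^{-1/2})$ for a quasi-regular partition of $\Omega_1$ into $N$ cells of equal area $|\Omega_1|/N$, then delivers the first identity uniformly in $z\in\mathbb{C}\setminus B_+$.

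The second identity \eqref{integ-1} would follow by complex conjugation of the first: from $c_j^*=|\Omega_1|r^*(z_j,z_j^*)/(N\pi)$ and the assumed conjugation symmetry $r(z^*)=r(z)$, conjugating the sum identity and relabeling $\lambda\mapsto\lambda^*$ transfers the integration from $\Omega_1$ to $\Omega_1^*$; the stated counterclockwise orientation of both $\partial\Omega_1$ and $\partial\Omega_1^*$ is consistent with the sign convention $d\lambda^*\wedge d\lambda=2i\,du\,dv$ on each sheet.

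The main technical obstacle I foresee is a precise, quantitative specification of the phrase \emph{filling uniformly} — whether it means the truncation of a regular lattice in $\mathbb{C}$, a low-discrepancy deterministic sequence adapted to $\Omega_1$, or almost-sure limits of i.i.d.\ uniform samples on $\Omega_1$ — since the rate of convergence and the precise sense in which the limit is uniform in $z$ depend on this choice. Any quasi-uniform tagged partition of $\Omega_1$ whose mesh tends to zero will suffice for the stated limit, and the separation $\delta>0$ from $\mathbb{C}\setminus B_+$ automatically prevents any $z_j$ from coalescing with the external argument $z$.
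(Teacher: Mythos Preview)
Your approach is correct and is precisely the two-dimensional analogue of the paper's proof of Lemma~\ref{le1} (the one-dimensional Riemann-sum argument on the interval $(ia,ib)$). In fact the paper does \emph{not} supply a separate proof of Lemma~\ref{le1-1}; it is stated and then used immediately. Your write-up therefore fills a genuine gap, and the core mechanism---substituting $c_j=|\Omega_1|r(z_j,z_j^*)/(N\pi)$, identifying $|\Omega_1|/N$ as the area element of a uniform planar partition, and converting $du\,dv$ into $\tfrac{1}{2i}\,d\lambda^*\wedge d\lambda$---is exactly right.

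Two small corrections. First, the symmetry $r(z^*)=r(z)$ you invoke belongs to the earlier setting of Sections~2--5, where $r$ is a function on the imaginary segment; in Section~6 the function $r(z,z^*)$ is an arbitrary smooth function on $\Omega_1$ and no such symmetry is assumed. You do not need it: either prove the second identity directly as a Riemann sum over $\Omega_1^*$ (the points $z_j^*$ fill $\Omega_1^*$ uniformly whenever the $z_j$ fill $\Omega_1$ uniformly), or conjugate the first identity and change variables $\mu=\lambda^*$, tracking the sign from $\overline{d\lambda^*\wedge d\lambda}=-d\lambda^*\wedge d\lambda$ against the sign from $\overline{1/(2\pi i)}=-1/(2\pi i)$. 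Second, $\Omega_1$ as written is open; your compactness argument should be phrased for $\overline{\Omega_1}$, which is what guarantees $\mathrm{dist}(\overline{\Omega_1},\mathbb{C}\setminus B_+)>0$. Your closing remark about the ambiguity of ``filling uniformly'' is well taken and is indeed left unspecified in the paper; any interpretation giving mesh $\to 0$ suffices for the stated conclusion.
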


According to Lemma \ref{le1-1}, we can obtain the Riemann-Hilbert problem of matrix function $M_1(x,t;z)$ as $N\to\infty$.

\begin{prop}\label{RH3-2g}
Find a $2\times 2$ matrix function $M_2(x,t;z)$ that satisfies the following properties:

\begin{itemize}

 \item {} Analyticity: $M_2(x,t;z)$ is analytic in $\mathbb{C}\setminus(\Gamma_{1+}\cup\Gamma_{1-})$ and takes continuous boundary values on $\Gamma_{1+}\cup\Gamma_{1-}$.

 \item {} Jump condition: The boundary values on the jump contour $\Gamma_{1+}\cup\Gamma_{1-}$ are defined as
 \bee
M_{2+}(x,t;z)=M_{2-}(x,t;z)J_2(x,t;z),\quad z\in\Gamma_{1+}\cup\Gamma_{1-},
\ene
where
\bee\label{J-2}
J_2(x,t;z)
=\begin{cases}
\left[\!\!\begin{array}{cc}
1& 0  \vspace{0.05in}\\
\d\iint_{\Omega_1}\frac{r(\lambda,\lambda^*)e^{2i(x\lambda+2\alpha t\lambda^2+4\beta t\lambda^3)}}{2\pi i(\lambda-z)}d\lambda^*\wedge d\lambda& 1
\end{array}\!\!\right],\quad z\in\Gamma_{1+},\vspace{0.05in}\\
\left[\!\!\begin{array}{cc}
1& \d\iint_{\Omega_1^*}\frac{r^*(\lambda,\lambda^*)e^{-2i(x\lambda+2\alpha t\lambda^{2}+4\beta t\lambda^{3})}}{2\pi i(z-\lambda)}d\lambda^*\wedge d\lambda
  \vspace{0.05in}\\
0& 1
\end{array}\!\!\right],\quad z\in\Gamma_{1-};
\end{cases}
\ene

 \item {} Normalization: $M_1(x,t;z)=\mathbb{I}+\mathcal{O}(z^{-1}),\quad z\rightarrow\infty.$

\end{itemize}
\end{prop}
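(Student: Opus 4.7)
The plan is to obtain Proposition \ref{RH3-2g} as the $N\to\infty$ limit of Proposition \ref{RH2-2} by transferring the pointwise convergence of the jump matrix entries established in Lemma \ref{le1-1} to convergence of the full RHP in an appropriate operator-theoretic sense. First I would fix the contour: since the hypothesis $z_j\in\Omega_1$ for all $j$ guarantees that all poles lie in the interior of $\Omega_1$ (resp.\ $\Omega_1^*$), the curves $\Gamma_{1\pm}$ can be chosen independent of $N$, e.g.\ as any smooth simple closed curves encircling $\overline{\Omega_1}$ (resp.\ $\overline{\Omega_1^*}$) and contained in $B_\pm$. With this choice, $z\in\Gamma_{1+}$ implies $z\in\mathbb{C}\setminus B_+$ relative to some slightly smaller neighbourhood of $\Omega_1$, so Lemma~\ref{le1-1} applies uniformly for $z\in\Gamma_{1+}$, and similarly for $\Gamma_{1-}$.

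Next I would verify the three defining properties of $M_2$. Analyticity off $\Gamma_{1+}\cup\Gamma_{1-}$ and the normalization $M_2=\mathbb{I}+\mathcal{O}(z^{-1})$ as $z\to\infty$ are inherited from $M_1$ because the Cauchy-type integrals on the right-hand side of \eqref{integ-1} produce analytic functions of $z$ away from $\Omega_1$ and $\Omega_1^*$ respectively (they are analytic in the interior of $\Gamma_{1\pm}$ too, since the integrand's singularity $1/(\lambda-z)$ is integrable on a 2D domain). For the jump condition, Lemma~\ref{le1-1} gives that the $(2,1)$-entry on $\Gamma_{1+}$ and the $(1,2)$-entry on $\Gamma_{1-}$ of $J_1(x,t;z)$ converge uniformly in $z\in\Gamma_{1\pm}$ to the corresponding entries of $J_2(x,t;z)$ as $N\to\infty$. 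Since $\Gamma_{1\pm}$ are compact and fixed, this uniform convergence is exactly what is required to pass to the limit in the jump relation $M_{1+}=M_{1-}J_1$ and obtain $M_{2+}=M_{2-}J_2$.

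To make this rigorous one would set $M_2:=\lim_{N\to\infty}M_1$ (or more precisely take $M_2$ to be the unique solution of the stated RHP and show $M_1\to M_2$). The cleanest route is a small-norm perturbation argument: write $J_1=J_2(\mathbb{I}+\Delta_N)$ with $\Delta_N\to 0$ uniformly on $\Gamma_{1\pm}$, then use standard existence/uniqueness for the RHP of $M_2$ (upper/lower triangular jumps with compactly supported contour and decay at infinity) together with the solvability of the associated singular integral equation on $\Gamma_{1\pm}$. The main obstacle I anticipate is precisely this step: one must ensure that the limiting singular integral operator $\mathbb{I}-C_{J_2}$ is invertible on $L^2(\Gamma_{1\pm})$, since only then does uniform smallness of $J_1-J_2$ propagate to uniform smallness of $M_1-M_2$, justifying both the existence of the limit $M_2$ and its characterization by the RHP in Proposition \ref{RH3-2g}. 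The triangular structure of $J_2$ together with the absence of poles of $M_2$ should make this invertibility straightforward, but it is the one nontrivial analytic input.
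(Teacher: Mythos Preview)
Your approach is correct and in fact considerably more thorough than the paper's. The paper's entire justification for this Riemann--Hilbert problem is the single sentence preceding it: ``According to Lemma~\ref{le1-1}, we can obtain the Riemann-Hilbert problem of matrix function $M_1(x,t;z)$ as $N\to\infty$.'' That is, the paper simply replaces the finite sums appearing in $J_1$ by their limiting double integrals (via Lemma~\ref{le1-1}) and declares the result to be the limiting RHP; note also that the \texttt{prop} environment here is titled ``Riemann--Hilbert problem,'' so this is intended as a problem formulation rather than a theorem with proof.

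What you add---fixing $\Gamma_{1\pm}$ independently of $N$, noting that Lemma~\ref{le1-1} gives \emph{uniform} convergence on these fixed compact contours, and outlining the small-norm perturbation argument needed to upgrade convergence of jump matrices to convergence of solutions---is the analysis the paper omits. Your identification of the invertibility of $\mathbb{I}-C_{J_2}$ as the one nontrivial input is apt, and your remark that the strictly triangular structure of $J_2$ makes this tractable is correct (indeed, for strictly upper/lower triangular jumps on disjoint loops one can write the solution explicitly via Cauchy integrals). So: same route, but you supply the rigor the paper leaves implicit.
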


Let the function $R(z,z^*)$ satisfy the following property:
\bee
\overline{\partial}R(z,z^*)=r(z,z^*).
\ene
Then the jump matrix $J_2(x,t;z)$ can be written as:
\bee\label{J-2-1}
J_2(x,t;z)
=\begin{cases}
\left[\!\!\begin{array}{cc}
1& 0  \vspace{0.05in}\\
\d\int_{\partial\Omega_1}\frac{R(\lambda,\lambda^*)e^{2i(x\lambda+2\alpha t\lambda^2+4\beta t\lambda^3)}}{2\pi i(\lambda-z)}d\lambda& 1
\end{array}\!\!\right],\quad z\in\Gamma_{1+},\vspace{0.05in}\\
\left[\!\!\begin{array}{cc}
1& \d\int_{\partial\Omega_1^*}\frac{R^*(\lambda,\lambda^*)e^{-2i(x\lambda+2\alpha t\lambda^{2}+4\beta t\lambda^{3})}}{2\pi i(z-\lambda)}d\lambda
  \vspace{0.05in}\\
0& 1
\end{array}\!\!\right],\quad z\in\Gamma_{1-};
\end{cases}
\ene

Based on Ref.\cite{Grava-3}, one has the following proposition:

\begin{propo} If we assume that the function $R(z, z^*)=R(z,s_1^*+\left(s_2^*+\dfrac{s_3^2}{(z-s_1)^{\ell}-s_2}\right)^{\frac{1}{\ell}})$ in $\Omega_1$ has $n$ distinct poles $\lambda_j,j=1,\cdots,n$, that is,
\bee\label{assume-1}
R\left(z,s_1^*+\left(s_2^*+\dfrac{s_3^2}{(z-s_1)^{\ell}-s_2}\right)^{\frac{1}{\ell}}\right)=\widehat R(z)\prod_{j=1}^n\frac{1}{z-\lambda_j},
\ene
where $\widehat R(z)$ is some analytic function in $\Omega_1$,
then the solution of the Riemann-Hilbert problem~\ref{RH3-2g} can generate an $n$-soliton $q_n(x,t)$ with discrete spectrums $\lambda_j,j=1,\cdots,n$ with the norming constants $c_j=\dfrac{R(\lambda_j)}{\prod_{k\neq j}(\lambda_j-\lambda_k)},j=1,\cdots,n$.
\end{propo}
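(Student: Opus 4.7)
The plan is to show that under the pole assumption (\ref{assume-1}), the $N\to\infty$ jump matrix $J_2(x,t;z)$ of Riemann-Hilbert problem~\ref{RH3-2g} collapses to a finite sum of simple rational terms at $\{\lambda_j\}_{j=1}^n$, and that the resulting RHP, after undoing the auxiliary transformation $M_0\to M_2$, is precisely the residue RHP for the pure $n$-soliton solution of the Hirota equation. First I would identify a complex antiderivative $R(z,z^*)$ with $\overline\partial R(z,z^*)=r(z,z^*)$ in $\Omega_1$; once such an $R$ is chosen, applying Cauchy–Pompeiu (equivalently, Stokes' theorem in the form $d(R\,d\lambda)=\overline\partial R\, d\lambda^*\wedge d\lambda$) to the integrand in (\ref{J-2}) produces the boundary form (\ref{J-2-1}), since the factor $e^{2i(x\lambda+2\alpha t\lambda^2+4\beta t\lambda^3)}/(\lambda-z)$ is holomorphic in $\lambda\in\overline{\Omega_1}$ for $z$ on $\Gamma_{1+}$ taken just outside $\partial\Omega_1$.

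Next I would use the hypothesis (\ref{assume-1}) to continue $R$ meromorphically from $\partial\Omega_1$ into all of $\Omega_1$, with simple poles exactly at $\lambda_1,\dots,\lambda_n$ and holomorphic factor $\widehat R$. Since the rest of the integrand is holomorphic in $\Omega_1$, a direct application of the residue theorem on the boundary integral in the $(1,2)$ and $(2,1)$ entries of $J_2$ gives
\begin{equation}\no
\int_{\partial\Omega_1}\frac{R(\lambda,\lambda^*)e^{2i(x\lambda+2\alpha t\lambda^2+4\beta t\lambda^3)}}{2\pi i(\lambda-z)}\,d\lambda=\sum_{j=1}^n\frac{c_j\,e^{2i(x\lambda_j+2\alpha t\lambda_j^2+4\beta t\lambda_j^3)}}{\lambda_j-z},\quad c_j=\frac{\widehat R(\lambda_j)}{\prod_{k\neq j}(\lambda_j-\lambda_k)},
\end{equation}
and the symmetric identity for $\Omega_1^*$ with the conjugated norming constants $c_j^*$. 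This identifies the jump matrix $J_2$ as the one produced by the transformation $M_0\mapsto M_1$ performed earlier, but now with only finitely many poles $\{\lambda_j,\lambda_j^*\}_{j=1}^n$ in place of the dense cloud $\{z_j,z_j^*\}_{j=1}^N$.

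Reversing the conjugation that produced $M_2$ from $M_0$ in Riemann-Hilbert problem~\ref{RH1-1}—that is, multiplying $M_2$ inside $\Gamma_{1\pm}$ by the inverses of the triangular factors now rewritten as finite sums—yields a sectionally meromorphic $M_0(x,t;z)$ with only $2n$ simple poles at $\{\lambda_j,\lambda_j^*\}_{j=1}^n$ and the residue conditions (\ref{Res12}) with norming constants exactly $c_j$ and $c_j^*$. By the uniqueness of the solution to the residue RHP for a reflectionless potential, reconstruction via (\ref{fanyan-1-1}) then produces the standard $n$-soliton formula (\ref{fanyan-1}) evaluated at the discrete spectrum $\{\lambda_j\}$ with norming constants $\{c_j\}$, completing the identification $q(x,t)=q_n(x,t)$.

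The main technical obstacle I anticipate is the contour/orientation bookkeeping when $\partial\Omega_1$ is deformed through $\Gamma_{1+}$: one must verify that the residue sum comes with the correct sign so that, after inverting the triangular factors on $\Gamma_{1\pm}$, the recovered poles and norming constants match the standard $n$-soliton RHP rather than its Schwarz-conjugate version. A subordinate subtlety is the precise meaning of ``$R(\lambda_j)$'' in the statement: since $R$ is singular at $\lambda_j$, it must be interpreted as the analytic factor $\widehat R(\lambda_j)$, and one has to check that the chosen antiderivative $R$ of $r$ really does inherit the pole/residue structure asserted in (\ref{assume-1})—i.e., that the non-uniqueness of $R$ (modulo holomorphic functions in $\Omega_1$) does not alter the residues $c_j$, which is immediate because any holomorphic correction to $R$ contributes zero to the residue theorem.
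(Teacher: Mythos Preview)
Your proposal is correct and follows essentially the same approach as the paper: convert the area integral in $J_2$ to the boundary integral (\ref{J-2-1}) via the $\overline\partial$-antiderivative $R$, use the parametrization of $\partial\Omega_1$ to replace $R(\lambda,\lambda^*)$ by the meromorphic expression (\ref{assume-1}), and then apply the residue theorem to collapse the integral to the finite sum $\sum_j c_j e^{2i\theta(\lambda_j)}/(\lambda_j-z)$ with $c_j=\widehat R(\lambda_j)/\prod_{k\neq j}(\lambda_j-\lambda_k)$. The paper's proof is terser---it simply performs the residue computation and then appeals to Riemann-Hilbert problem~\ref{RH1} for the identification with the $n$-soliton---whereas you spell out the reversal of the triangular conjugation and the sign/orientation checks; your observation that ``$R(\lambda_j)$'' in the statement must be read as $\widehat R(\lambda_j)$ is exactly what the paper uses in its proof.
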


\begin{proof}

Substituting Eq.~(\ref{assume-1}) into the integral in Eq.~(\ref{J-2-1}), we obtain
\begin{align}\no
\begin{aligned}
&\int_{\partial\Omega_1}\frac{R(\lambda,\lambda^*)e^{2i(x\lambda+2\alpha t\lambda^2+4\beta t\lambda^3)}}{2\pi i(\lambda-z)}d\lambda\v\\
=&\int_{\partial\Omega_1}\frac{R\left(\lambda,s_1^*+\left(s_2^*+\dfrac{s_3^2}{(\lambda-s_1)^{\ell}-s_2}\right)^{\frac{1}{\ell}}\right)
}{2\pi i(\lambda-z)}e^{2i(x\lambda+2\alpha t\lambda^2+4\beta t\lambda^3)} d\lambda\v\\
=&\int_{\partial\Omega_1}\frac{\widehat R(\lambda)e^{2i(x\lambda+2\alpha t\lambda^2+4\beta t\lambda^3)}}{2\pi i(z-\lambda)\prod_{j=1}^n(\lambda-\lambda_j)}d\lambda\v\\
=&\sum_{j=1}^n\dfrac{\widehat R(\lambda_j)e^{2i(x\lambda_j+2\alpha t\lambda_j^2+4\beta t\lambda_j^3)}}{(z-\lambda_j)\prod_{k\neq j}(\lambda_j-\lambda_k)},\quad z\notin\Omega_1.
\end{aligned}
\end{align}

Let the norming constants be $c_j=\dfrac{\widehat R(\lambda_j)}{\prod_{k\neq j}(\lambda_j-\lambda_k)},j=1,\cdots,n$, then, similarly to the Riemann-Hilbert problem \ref{RH1}, the solution of Riemann-Hilbert problem \ref{RH3-2g} leads to an $n$-soliton $q_n(x,t)$ of the Hirota equation. Thus the proof is completed.
\end{proof}


\section{Conclusions and discussions}

In conclusion, we have used the idea of Refs.~\cite{Girotti-1,Girotti-2,Grava-3} to investigate the asymptotic behaviors of a soliton gas (the limit $N\to \infty$ of $N$-soliton solutions) of the Hirota equation including the complex modified KdV equation with the aid of the Riemann-Hilbert problems with pure imaginary discrete spectra restricted in an interval. We show that that this soliton gas tends slowly to the Jaocbian elliptic wave solution (zero exponentially quickly ) when $x\to -\infty$ ($x\to +\infty$). Moreover, we also give the long-time asymptotics of the soliton gas under the different velocity conditions: $x/t>4\beta b^2,\, \xi_c<x/t<4\beta b^2,\, x/t<\xi_c$. Finally, we analyze the property of the soliton gas with the discrete spectra filling uniformly a quadrature domain.

Therefore are some challenging issues, for example, i) when $\xi=x/t<4\beta b^2$, how will one consider the long-time asymptotics of the potential $q(x,t)$ of the Hirota equation for the case $\alpha\beta\not=0$ ?
since there is the terms $iR(z)e^{\pm 2zt(\xi-4\beta z^2+2i\alpha z)}$ with the factor $e^{\pm 4i\alpha z^2t}$ in the jump matrices given by Eq.~(\ref{M3-3}); ii) how will one study the asymptotics of soliton gas for the case of the considered discrete spectral intervals localized in the non-imaginary axes ?

\vspace{0.2in}
\noindent {\bf Acknowledgments}

\addcontentsline{toc}{section}{Acknowledgments}

\vspace{0.05in}
This work was supported by the National Natural Science Foundation of China (Grant No. 11925108).

\end{document}